\newcommand {\emptycomment}[1]{}
\newcommand{\Dorfman}[1]{\{ #1\}}
\newcommand{\jet}{\mathfrak{J}}
\newcommand{\jetd}{\mathbbm{d}}
\newcommand{\dev}{\mathfrak{D}}
\newcommand{\dd}{\mathfrak{d}}
\newcommand{\ttt}{\mathfrak{t}}
\DeclareMathOperator{\Hom}{Hom}
\DeclareMathOperator{\End}{End}
\DeclareMathOperator{\gl}{gl}
\newcommand{\VB }{\mathsf{VB}}
\newtheorem{Thm}{Theorem}[section]
\newtheorem{Pro}[Thm]{Proposition}
\newtheorem{Lem}[Thm]{Lemma}
\newtheorem{Cor}[Thm]{Corollary}
\newtheorem{Def-Pro}[Thm]{Definition-Proposition}
\newtheorem{Def}[Thm]{Definition}
\theoremstyle{definition}
\newtheorem{Ex}[Thm]{Example}
\newtheorem{Rm}[Thm]{Remark}
\begin{document}
\title{Linearization of the higher analogue of Courant algebroids}

\vspace{-2cm}

\author{ \vspace{2mm}  Honglei Lang  and  Yunhe Sheng  }
\footnotetext{{\it{Keywords}}:~omni $n$-Lie algebroid, $n$-omni-Lie algebroid, higher analogue of Courant algebroids, linearization, $n$-Lie algebroid, Nambu-Jacobi structure.  }
\footnotetext{{\it{MSC}}:~~53D17,~53D18}
\footnotetext{The research is supported by NSFC (11922110,11901568).}
\date{}
\maketitle
\makeatletter
\newif\if@borderstar
\def\bordermatrix{\@ifnextchar*{%
\@borderstartrue\@bordermatrix@i}{\@borderstarfalse\@bordermatrix@i*}%
}
\def\@bordermatrix@i*{\@ifnextchar[{\@bordermatrix@ii}{\@bordermatrix@ii[()]}}
\def\@bordermatrix@ii[#1]#2{%
\begingroup
\m@th\@tempdima8.75\p @\setbox\z@\vbox{%
\def\cr{\crcr\noalign{\kern 2\p@\global\let\cr\endline }}%
\ialign {$##$\hfil\kern 2\p@\kern\@tempdima & \thinspace %
\hfil $##$\hfil && \quad\hfil $##$\hfil\crcr\omit\strut %
\hfil\crcr\noalign{\kern -\baselineskip}#2\crcr\omit %
\strut\cr}}%
\setbox\tw@\vbox{\unvcopy\z@\global\setbox\@ne\lastbox}%
\setbox\tw@\hbox{\unhbox\@ne\unskip\global\setbox\@ne\lastbox}%
\setbox\tw@\hbox{%
$\kern\wd\@ne\kern -\@tempdima\left\@firstoftwo#1%
\if@borderstar\kern2pt\else\kern -\wd\@ne\fi%
\global\setbox\@ne\vbox{\box\@ne\if@borderstar\else\kern 2\p@\fi}%
\vcenter{\if@borderstar\else\kern -\ht\@ne\fi%
\unvbox\z@\kern-\if@borderstar2\fi\baselineskip}%
\if@borderstar\kern-2\@tempdima\kern2\p@\else\,\fi\right\@secondoftwo#1 $%
}\null \;\vbox{\kern\ht\@ne\box\tw@}%
\endgroup
}

\begin{abstract}
  In this paper,  we show that the spaces of sections of the $n$-th differential operator bundle $\dev^n E$ and the $n$-th skew-symmetric jet bundle $\jet_n E$ of a vector bundle $E$ are isomorphic to the spaces of linear $n$-vector fields and linear $n$-forms on $E^*$ respectively. Consequently,  the $n$-omni-Lie algebroid $\dev E\oplus\jet_n E$ introduced by Bi-Vitagliago-Zhang can be explained as certain linearization, which we call  pseudo-linearization    of   the higher analogue  of Courant algebroids $TE^*\oplus \wedge^nT^*E^*$. On the other hand,
we show that the omni $n$-Lie algebroid $\dev E\oplus \wedge^n\jet  E$ can also be explained as certain linearization, which we call   Weinstein-linearization  of   the higher analogue  of Courant algebroids $TE^*\oplus \wedge^nT^*E^*$.  We also show that $n$-Lie algebroids, local $n$-Lie algebras and Nambu-Jacobi structures can be characterized as integrable subbundles of omni $n$-Lie algebroids.
\end{abstract}

\tableofcontents

\vspace{-0.5cm}

\section{Introduction}

This paper aims to study  linearization of the higher analogue of Courant algebroids $TE^*\oplus \wedge^nT^*E^*$.

\subsection{Omni-Lie algebras and omni-Lie algebroids}

Courant algebroids were introduced by Liu, Weinstein and Xu in \cite{lwx} and have been found many applications in mathematical physics. See the survey article \cite{Kos} for more details. The notion of an omni-Lie algebra was introduced by Weinstein in \cite{Alan} to study the linearization of the standard Courant algebroid $TM\oplus T^*M$. Then it was further studied in \cite{kinyon-weinstein,ShengLiuZhu,UchinoOmni}. An {\bf omni-Lie algebra} associated to a vector space $V$ is a triple $(\gl(V)\oplus V,(\cdot,\cdot),\Dorfman{\cdot,\cdot})$, where $(\cdot,\cdot)$ is the $V$-valued pairing  given by
\begin{equation*}\label{eq:Vpair}
( A+u,B+v)=Av+Bu,\quad \forall ~A+u,B+v\in\gl(V)\oplus V,
\end{equation*}
and $\Dorfman{\cdot,\cdot}$ is the bilinear bracket operation given by
\begin{equation*}\label{omni}
\{A+u,B+v\}=[A,B]+Av.
\end{equation*}
 Note that $(\gl(V)\oplus V,\Dorfman{\cdot,\cdot})$ is not a Lie algebra, but a Leibniz algebra, which provides a natural example of Leibniz algebras. Moreover,  Dirac structures of the omni-Lie algebra $\gl(V)\oplus V$ characterize all Lie algebra structures on $V$, and this is one of the most important properties of an omni-Lie algebra. Let $M$ be the vector space $V^*$ in the standard Courant algebroid $TM\oplus T^*M$, and consider linear vector fields, which are in one-to-one correspondence with $\gl(V)$, and constant 1-forms on $V^*$, which are in one-to-one correspondence with $V$. Then the Dorfman bracket in the standard Courant algebroid $TM\oplus T^*M$ reduces to the bracket in the omni-Lie algebra $\gl(V)\oplus V$ given above. We use the terminology ``base-linearization'' to indicate this process.
Different generalizations of an omni-Lie algebra have been given recently with applications in different aspects.

 The notion of an omni-Lie algebroid was introduced in \cite{CLomni}, which can be viewed as the geometric generalization of an omni-Lie algebra from a vector space to a vector bundle. Lie algebroid structures on a vector bundle $E$ (or local Lie algebra structures when $E$ is a line bundle) can be characterized as Dirac structures of the omni-Lie algebroid $\dev E\oplus \jet E$, where $\dev E$ is the covariant differential operator bundle and $\jet E$ is the first jet bundle of $E$. Omni-Lie algebroids provide a general framework to study Jacobi structures, contact structures and odd dimensional  analogues  of generalized complex structures \cite{CLS2,IW,LSW,Luca18,Luca16,Wade}. Omni-Lie algebroids are also natural examples of $E$-Courant algebroids introduced in \cite{CLS}. Similar to the fact that $\gl(V)$ can be understood as linear vector fields on $V^*$, it is well-known that $\dev E$ corresponds to linear vector fields on the dual bundle $E^*$. But there are different explanations of $\jet E$:
 \begin{itemize}
   \item Sections of  $\jet E$ can be understood as constant $1$-forms on $E^*$. This explanation is supported by the fact that the pairing between $\dev E$ and $\jet E$ takes values in $E$, whose sections are linear functions on $E^*$. On the other hand, when $E$ reduces to a vector space $V$, we have $\jet V=V$, which is understood as the space of constant $1$-forms on $V^*$.  Therefore, this point of view is consistent with Weinstein's original idea in the study of linearization of the standard Courant algebroid $TM\oplus T^*M.$

       \item Sections of  $\jet E$ can also be understood as linear  $1$-forms on $E^*$. This explanation is supported by the fact that $\jet E$ corresponds to linear sections of the double vector bundle $(  T^*E^*;  E,E^*;M)$.
           \end{itemize}
When $\jet E$ is understood as constant $1$-forms on $E^*$, we will say that the omni-Lie algebroid $\dev E\oplus \jet E$ is the Weinstein-linearization of the standard Courant algebroid $TE^*\oplus T^*E^*$; when $\jet E$ is understood as linear  $1$-forms on $E^*$, we will say that the omni-Lie algebroid $\dev E\oplus \jet E$ is the pseudo-linearization of the standard Courant algebroid $TE^*\oplus T^*E^*$. Even though Weinstein-linearization and pseudo-linearization are the same in this situation, namely we both obtain the omni-Lie algebroid $\dev E\oplus \jet E$, in the sequel we will see that different geometric objects can be obtained using different explanations of $\jet E$.

  We summarize the above relations by the following diagram:
\[
		\begin{tikzcd}
			 TM\oplus T^*M \arrow{d}{\mbox{base-linearization}} \arrow{r}{} &~~~~~~TE^*\oplus T^*E^*\arrow{d}{\mbox{pseudo-(Weinstein-)linearization}}~~~~~~\\
			\mathrm{gl}(V)\oplus V  \arrow{r}{\mbox{geometrization}} &\dev E\oplus \jet E.
		\end{tikzcd}
\]

  \subsection{Omni $n$-Lie algebras and $n$-omni-Lie algebroids}
Recently,  the higher analogues of the standard Courant algebroid  $TM\oplus \wedge^{n}T^*M$ are widely studied due to  applications in Nambu-Poisson structures, multisymplectic structures, $L_\infty$-algebra theory and topological field theory  \cite{BS,BouwknegtJ,Grabowski,GS,hagiwara,hull,Zambon}. In \cite{LS}, the authors introduced the notion of an omni $n$-Lie algebra $\gl(V)\oplus \wedge^nV$ and proved that it is the base-linearization of the higher analogue of the standard Courant algebroid $TM\oplus \wedge^{n}T^*M$. Moreover, the $(n+1)$-Lie algebra structures on $V$ can be characterized as integrable subspaces of the omni $n$-Lie algebra $\gl(V)\oplus \wedge^nV$. $n$-Lie algebras (also called Filippov algebras) are the underlying algebraic structures of Nambu-Poisson structures,   and have many applications in mathematical physics. See the review article \cite{review} for more details. We summarize the above relation by the following diagram:
\[
		\begin{tikzcd}
			 TM\oplus T^*M  \arrow{d}{\mbox{base-linearization}} \ar{r}{} &TM\oplus \wedge^n T^*M\arrow{d}{\mbox{base-linearization}}\\
			\mathrm{gl}(V)\oplus V  \arrow{r}{} &\gl(V)\oplus \wedge^n V.
		\end{tikzcd}
\]

To study the higher analogue of the omni-Lie algebroid, the notion of an $n$-omni-Lie algebroid was introduced in \cite{BVZ}, which is the direct sum of the covariant differential operator bundle $\dev E$ and the $n$-th jet bundle $\jet_nE$ together with a pairing and a bracket operation. Multicontact structures can be seen as integrable subbundles of $n$-omni-Lie algebroids. Note that when the vector bundle $E$ reduces to a vector space $V$, one can not obtain the aforementioned omni $n$-Lie algebra since $\jet_nV=0$. On the other hand, $(n+1)$-Lie algebroid structures on $E$ can not be characterized by integrable subbundles of the $n$-omni-Lie algebroid  $\dev E\oplus \jet_n E$.

\subsection{Main results}

Since the omni-Lie algebroid $\dev E\oplus \jet E$ can be understood as certain linearization of the standard Courant algebroid $TE^*\oplus T^*E^*$, it is natural to expect that the $n$-omni-Lie algebroid given in \cite{BVZ} can be explained as certain linearization of the higher analogue of   Courant algebroids $TE^*\oplus \wedge^nT^*E^*$. On the other hand, due to limitations of the $n$-omni-Lie algebroid mentioned at the end of the last subsection, it is also natural to expect another geometric object that enjoys the following properties:
\begin{itemize}
  \item When the vector bundle reduces to a vector space, it reduces to the omni $n$-Lie algebra introduced in \cite{LS};
  \item $n$-Lie algebroid structures on a vector bundle can be characterized as certain integrable subbundles;
  \item It is  certain linearization of  $TE^*\oplus \wedge^nT^*E^*$.
\end{itemize}

The purpose of this paper is to answer the above questions. We   give an alternative explanation of the $n$-omni-Lie algebroid introduced in \cite{BVZ}. We find that
 linear $n$-vector fields and linear $n$-forms on a vector bundle $E$ are sections of the $n$-th differential operator bundle $\dev^n E$  and the $n$-th jet bundle $\mathfrak{J}_n E$ respectively. As a consequence, the $n$-omni-Lie algebroid $\dev E\oplus \jet_n E$ can be viewed as certain linearization, called   pseudo-linearization, of the higher analogue of Courant algebroid $T^*E\oplus \wedge^n T^*E^*$. On the other hand, if we understand $\jet E$ as constant $1$-forms on $E^*$, it is natural to use constant $n$-forms, that is $\wedge^n\jet E$, to replace $\jet_nE$  that used in \cite{BVZ}. Based on this observation,   the notion of an omni $n$-Lie algebroid is introduced that enjoys the above properties.  More precisely, an omni $n$-Lie algebroid associated to a vector bundle $E$ is the direct sum bundle $\dev E\oplus \wedge^n\jet E$ together with an anchor, a pairing and a bracket operation (see Definition \ref{defi:omninLie}). When $E$ reduces to a vector space $V$, we obtain the omni $n$-Lie algebra $\gl(V)\oplus \wedge^nV$ naturally. Moreover, we show that the omni $n$-Lie algebroid $\dev E\oplus \wedge^n\jet E$ can also be viewed as certain linearization, called Weinstein-linearization, of $TE^*\oplus \wedge^nT^*E^*$. When ${\rm rank} E\geq 2$ (resp. ${\rm rank} E=1$),  $(n+1)$-Lie algebroid structures (local $(n+1)$-Lie algebra structures) on $E$ can be characterized as  integrable subbundles of the omni $n$-Lie algebroid $\dev E\oplus \wedge^n\jet E$.    We summarize the relations by the following diagram:
\[
		\begin{tikzcd}
			& TM\oplus \wedge^n T^*M \arrow{dl}{\mbox{base-linearization}}\arrow{dr}{} &&\\
\mathrm{gl}(V)\oplus\wedge^n V \arrow{dr}{\mbox{~geometric~ generalization~}}&&TE^*\oplus \wedge^n T^* E^*\arrow{dl}{\mbox{Weinstein-linearization}}\arrow{dr}{ \mbox{pseudo-linearization}}&\\
&\dev E\oplus \wedge^n \jet E&&\dev E\oplus \jet_n E.
		\end{tikzcd}
	\]

	We summarize   $n$-omni-Lie algebroids and omni $n$-Lie algebroids by the following table:

\begin{tabular}{|c|c|c|}
\hline
omni $n$-Lie algebroids&$n$-omni-Lie algebroids  \\ \hline
  $\dev E\oplus \wedge^n \jet E$ & $\dev E\oplus \jet_n E$\\ \hline
Weinstein-linearization of $TE^*\oplus \wedge^n T^* E^*$  & pseudo-linearization of $TE^*\oplus \wedge^n T^*E^*$ \\ \hline
$(n+1)$-Lie algebroid structures on $E$ when $\mathrm{rank} E\geq 2$ & higher Dirac-Jacobi structures\\ \hline
Nambu-Jacobi structures on $M$ & exact multi-symplectic structures\\ \hline
 Leibniz algebroid structures on $\wedge^n \jet E$ &- \\ \hline omni $n$-Lie algebra $\gl(V)\oplus \wedge^n V$ & - \\ \hline
\end{tabular}

\subsection*{Acknowledgments}
We would like to thank  Janusz  Grabowski, Zhangju Liu and Luca Vitagliago for helpful discussions and comments.

\section{Pseudo-linearization of $T E^*\oplus \wedge^n T^*E^*$ and $n$-omni-Lie algebroids}
The goal of this section is to give a geometric explanation of the $n$-omni-Lie algebroid $\dev E\oplus \jet_n E$ introduced in \cite{BVZ}. We understand it as a linearization of the higher analogue of Courant algebroids $T E^*\oplus \wedge^n T^*E^*$ in the sense that $\Gamma(\dev E)$ and $\Gamma(\jet_n E)$ are spaces of linear vector fields and linear $n$-forms (\cite{BC}) on the vector bundle $E^*$. Indeed, we give the geometric supports of the linear $n$-vector fields and linear $n$-forms on a vector bundle.

We first recall the $n$-th differential operator bundle $\dev^n E$ and the $n$-th skew-symmetric jet bundle $\jet_n E$ of a vector bundle $E$.

 A covariant differential operator for a vector bundle $E\to M$ is a smooth map $\dd: \Gamma(E)\to \Gamma(E)$, such that there is an element $X_\dd\in \mathfrak{X}^1(M)$, called the symbol, satisfying
\[ \dd(fu)=f\dd(u)+X_\dd(f)u,\qquad \forall~ f\in C^\infty(M),u\in \Gamma(E).\]
The covariant differential operator bundle  $\dev E$ of a vector bundle $E$
with the commutator bracket $[\cdot,\cdot]$ is a Lie algebroid, which is indeed the gauge Lie algebroid
of the frame bundle $\mathcal{F}(E)$.
The corresponding Atiyah sequence is as
follows:
\begin{equation}\label{Seg:DE}
		0\to \End(E)\xrightarrow{\mathbbm{i}} \dev E \xrightarrow{\mathbbm{j}} TM \to 0.
	\end{equation}
The first jet bundle $\jet E$ of a vector bundle $E$ is the bundle whose fiber over a point $m\in M$ is the space of equivalence classes of sections of $E$, where  $[u]_m=[v]_m$ for $u,v\in \Gamma(E)$  if $u(m)=v(m)$ and $d_m\langle u,\xi\rangle=d_m\langle v,\xi\rangle$ for any $\xi\in \Gamma(E^*)$.
In \cite{CLomni},
the authors proved that the first jet bundle $\jet E$  may be considered as
an $E$-dual bundle of $\dev E $, i.e.
\begin{equation*}\label{eqn:jetE}
\jet E \cong
\{\nu\in \Hom(\dev E,E)\,|\,
\nu(\Phi)=\Phi\circ\nu(\mathrm{Id}_E),\quad\forall~ \Phi\in \End(E)\}.
\end{equation*}

Associated to the jet bundle $\jet E$,  there is a jet sequence of $E$
given by:
\begin{equation}\label{Seq:JetE}
0\to \Hom(TM,E)\xrightarrow{\mathbbm{e}} \jet E \xrightarrow{\mathbbm{p}}E \to 0.
\end{equation}
This sequence does not necessarily split, but on the section level, it does:
\begin{eqnarray}\label{jetd}
\jetd: \Gamma(E) \rightarrow \Gamma(\jet E),\qquad \jetd u (\mathfrak{d}) := \mathfrak{d} (u),\qquad \forall ~u  \in   \Gamma(E), \mathfrak{d} \in\Gamma(\dev E).
\end{eqnarray}
A useful formula is
\begin{equation*}
\jetd(fu)=d f\otimes u+f\jetd u,\quad \forall~u\in\Gamma(E),~f\in C^\infty(M).
\end{equation*}
There is an $E$-valued pairing between $\jet E$ and $\dev E$ defined by
\begin{equation*}\label{Eqt:conpairingE}
\langle\mu,\mathfrak{d} \rangle:=\mathfrak{d}(u),\qquad  \mu\in (\jet{E})_m, \mathfrak{d}\in(\dev{E})_m,
\end{equation*}
where $u\in \Gamma(E)$ satisfies $\mu=[u]_m$. In particular, one has
\begin{eqnarray*}
\langle\mu,\Phi\rangle &=& \Phi\circ \mathbbm{p}(\mu),\quad\forall~\Phi\in \End(E),~\mu\in\jet{E};\\
\label{conpairing2} \langle \mathfrak{y},\mathfrak{d}\rangle&=&\mathfrak{y}\circ
\mathbbm{j}(\mathfrak{d}),\quad\forall~  \mathfrak{y}\in \Hom(TM,E),~\mathfrak{d}\in\dev{E}.
\end{eqnarray*}

The {\bf $n$-th differential operator bundle} $\dev^n E$ is introduced in \cite{CM, Sheng} as
\[\dev^n E:=\Hom(\wedge^n \jet E, E)_{\dev E}=\{\mathfrak{d}\in \Hom(\wedge^n \jet E, E)| \mathrm{Im}(\mathfrak{d}_\sharp) \subset \dev E\}, \qquad n\geq 2,\]
where $\mathfrak{d}_\sharp: \wedge^{n-1}  \jet E\to \Hom(\jet E,E)$ is defined by \[\mathfrak{d}_\sharp(\mu_1,\cdots,\mu_{n-1}) (\mu_n)=\mathfrak{d}(\mu_1,\cdots,\mu_n),\qquad \forall~ \mu_1,\cdots \mu_n\in \Gamma(\jet E).\]
When $\mathrm{rank} E\geq 2$, it fits into the following exact sequence:
\begin{eqnarray}\label{higher dev}
0\to \Hom(\wedge^n E,E)\xrightarrow{\mathbbm{f}} \dev^n E\xrightarrow{\mathbbm{q}} \Hom(\wedge^{n-1} E, TM)\to 0.
\end{eqnarray}

There is a graded Lie algebra structure on $\dev^\bullet E$ (\cite{CM}) given as follows:
\[[\dd_1,\dd_2]=(-1)^{(k+1)(l+1)}\dd_1\circ \dd_2-\dd_2\circ \dd_1\in \Gamma(\dev^{k+l-1} E),\qquad \dd_1\in \Gamma(\dev^k E),\dd_2\in \Gamma(\dev^l E),\]
where
\begin{eqnarray*}
\dd_2\circ \dd_1(\jetd u_1,\cdots,\jetd u_{k+l-1})=\sum_{\tau\in Sh(k,l-1)} (-1)^{\tau} \dd_2(\jetd (\dd_1(\jetd u_{\tau_1},\cdots,\jetd u_{\tau_k})),\jetd u_{\tau_{k+1}},\cdots,\jetd u_{\tau_{k+l-1}}),
\end{eqnarray*}
for $u_{i}\in \Gamma(E)$ and $\tau$ is taken over all $(k,l-1)$-shuffles.

In \cite{CLS}, the authors introduced the {\bf $n$-th skew-symmetric jet bundle}
\[\jet_n E:=\Hom(\wedge^n \dev E, E)_{\jet E}=\{\mu\in \Hom(\wedge^n \dev E, E)|\mathrm{Im}(\mu_\sharp)\subset \jet E\}, \qquad n\geq 2,\]
where $\mu_\sharp: \wedge^{n-1}\dev  E\to \Hom(\dev E, E)$ is the induced bundle map \[\mu_\sharp(\mathfrak{d}_1,\cdots \mathfrak{d}_{n-1})(\mathfrak{d}_n)=\mu(\mathfrak{d}_1,\cdots \mathfrak{d}_{n-1},\mathfrak{d}_n),\qquad \forall~ \mathfrak{d}_1,\cdots \mathfrak{d}_n\in \Gamma(\dev E).\]

Moreover, the $n$-th skew-symmetric jet bundle also fits into an exact sequence
\begin{eqnarray}\label{higher jet}
0\to \Hom(\wedge^n TM, E)\xrightarrow{\mathbbm{e}} \jet_n E\xrightarrow{\mathbbm{p}} \Hom(\wedge^{n-1} TM, E)\to 0.
\end{eqnarray}
There is a complex  $\mathbbm{d}:\jet_\bullet E\to \jet_{\bullet+1} E$. It is given as a subcomplex of the Chavelley-Eilenberg complex of the Lie algebroid $\dev E$ with the natural representation on the vector bundle $E$, whose differential $d_{\mathrm{CE}}:\Gamma(\Hom(\wedge^n \dev E,E))\to \Gamma(\Hom(\wedge^{n+1} \dev E,E))$ is defined  by
\begin{eqnarray}\label{cecomplex}
d_{\mathrm{CE}}\mu(\dd_1,\cdots,\dd_{n+1})&=&\nonumber \sum_{i=1}^{n+1}(-1)^{i+1}\dd_i\big(\mu(\dd_1,\cdots,\hat{\dd_i},\cdots,\dd_{n+1})\big)\\ &&+\sum_{i<j}(-1)^{i+j}\mu([\dd_i,\dd_j],\dd_1,\cdots,\hat{\dd_i},\cdots,\hat{\dd_j},\cdots,\dd_{n+1}),\quad \dd_i\in \Gamma(\dev E).
\end{eqnarray}
See \cite[Lemma 3.6]{CLS} for details.
\subsection{Linear $n$-vector fields on a vector bundle}

For a vector bundle $p_E:E\to M$,  identify $\Gamma(E^*)$ with the space of functions on $E$ which are linear along each fiber. A section of the first differential operator bundle $\dev E^*$ maps a section of $E^*$ to a section of $E^*$, which is viewed as a linear vector field on $E$. Denote by $\mathfrak{X}_{lin}^1(E)$ the space of linear vector fields on $E$. We have $\Gamma(\dev E^*)\cong
\mathfrak{X}^1_{lin}(E)$. We shall generalize this result to linear $n$-vector fields. Actually, linear $n$-vector fields  studied in \cite{BC,XuC} are isomorphic to sections of $n$-th differential operator bundle $\dev^n E$ introduced in \cite{CM, Sheng}.

An $n$-vector field $\pi\in \mathfrak{X}^n(E)$ is {\bf linear} if $\pi(d\phi_1,\cdots,d\phi_n)\in \Gamma(E^*)$ when $\phi_1,\cdots,\phi_n\in \Gamma(E^*)$ (\cite{XuC}). Note that, if $n> \mathrm{rank} E+1$, any $n$-vector field on $E$ is linear.

Assume that $ \mathrm{rank} E>1$ and let $(x^i,u^j)$ be a local coordinate of $E$, where $x^i$ and $u^j$ are the coordinate functions on $M$ and the fiber respectively. Then a linear $n$-vector field on $E$ has the local expression
\begin{eqnarray}\label{local n-vf}
\pi=\frac{1}{n!}\pi^{i_1\cdots i_n}_{j}(x)u^j \frac{\partial}{\partial u^{i_1}}\wedge \cdots\wedge  \frac{\partial}{\partial u^{i_n}}+\frac{1}{(n-1)!}\pi^{i_1\cdots i_{n-1},j}(x)\frac{\partial}{\partial u^{i_1}}\wedge\cdots \wedge\frac{\partial}{\partial u^{i_{n-1}}}\wedge \frac{\partial}{\partial x^j}.
\end{eqnarray}

Denote by $\mathfrak{X}_{lin}^n(E)$ the space of linear $n$-vector fields on $E$.  As explained in \cite{XuC}, a linear $n$-vector field $\pi$ for $n\geq 2$ has the  properties that
\begin{eqnarray}\label{pro for linear vf}
\pi(d\phi_1,\cdots,d\phi_{n-1}, dp_E^*f)=p_E^*g_{\phi_1,\cdots,\phi_{n-1},f}
\end{eqnarray} for some $g_{\phi_1,\cdots,\phi_{n-1},f}\in C^\infty(M)$ and
\[\iota_{dp_E^*{f_1}}\iota_{dp_E^*{f_2}} \pi=0,\qquad \forall ~\phi_1,\cdots \phi_{n-1}\in \Gamma(E^*), f, f_1,f_2\in C^\infty(M).\]
As a consequence, from a linear $n$-vector field $\pi$, we obtain $\delta_0:C^\infty(M)\to \Gamma(\wedge^{n-1} E)$ and $\delta_1:\Gamma(E)\to \Gamma(\wedge^n E)$ given by
\[\delta_0(f)(\phi_1,\cdots,\phi_{n-1}):=g_{\phi_1,\cdots,\phi_{n-1},f},\]
and \[\delta_1(u)(\phi_1,\cdots,\phi_n):=\sum_{i=1}^n(-1)^{i+n} g_{\phi_1,\cdots,\hat{\phi_i},\cdots,\phi_n,\phi_i(u)}-\langle\pi(d\phi_1,\cdots,d\phi_n),u\rangle,\]
for $u\in \Gamma(E)$.
This  correspondence is actually one-to-one; see \cite{XuC}.

We are now at the position to state our main result in this section. We find that the $n$-th differential operator bundle $\dev ^n E^*$ of $E^*$ is indeed the geometric support of linear $n$-vector fields on $E$:

\begin{Thm}\label{main1}
For a vector bundle $E$, the space of linear multivector fields $\mathfrak{X}^\bullet_{lin}(E)$ with the Schouten bracket $[\cdot,\cdot]_S$ is a graded Lie algebra. Moreover, we have the isomorphism \[(\Gamma(\dev^\bullet E^*), [\cdot,\cdot])\cong (\mathfrak{X}^\bullet_{lin}(E),[\cdot,\cdot]_S),\qquad \dd\mapsto \hat{\dd},\]
of graded Lie algebras, where $\hat{\dd}$ is determined by
\[\hat{\dd}(d\phi_1,\cdots,d\phi_n):=(-1)^n\dd(\jetd \phi_1,\cdots,\jetd \phi_n),\qquad \phi_i\in \Gamma(E^*),\]
where $\jetd: \Gamma(E)\to \Gamma(\jet E)$ is the natural map given by \eqref{jetd}.

\end{Thm}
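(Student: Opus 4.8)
The plan is to verify three things in turn: that $\dd\mapsto\hat\dd$ is a well-defined, degree-preserving linear bijection onto $\mathfrak{X}^\bullet_{lin}(E)$; that it intertwines the two brackets; and that, since $(\Gamma(\dev^\bullet E^*),[\cdot,\cdot])$ is already a graded Lie algebra by \cite{CM}, this forces $\mathfrak{X}^\bullet_{lin}(E)$ to be a graded Lie subalgebra isomorphic to it. First I would establish well-definedness. Viewing $\Gamma(E^*)$ as the fiberwise-linear functions on $E$, the exact differentials $d\phi$ with $\phi\in\Gamma(E^*)$ together with the pullbacks $dp_E^*f$, $f\in C^\infty(M)$, span $T^*E$ pointwise; dually, $\jet E^*$ is generated by the sections $\jetd\phi$ and by the image of $\mathbbm{e}$ in the jet sequence \eqref{Seq:JetE} applied to $E^*$, on which $df\otimes\phi=\jetd(f\phi)-f\jetd\phi$. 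I would therefore define $\hat\dd$ on the $d\phi$ by the stated formula and on arguments of the form $dp_E^*f$ through the symbol $\mathbbm{q}(\dd)\in\Hom(\wedge^{n-1}E^*,TM)$ from \eqref{higher dev} (applied to $E^*$), consistently with the required properties \eqref{pro for linear vf}. It then remains to check skew-symmetry (inherited from skew-symmetry of $\dd$ on $\wedge^n\jet E^*$), $C^\infty(E)$-tensoriality (the only nontrivial relation is $d(f\phi)=\phi\,dp_E^*f+p_E^*f\,d\phi$ tested against $\jetd(f\phi)=df\otimes\phi+f\jetd\phi$, where the Leibniz/symbol identity makes the two sides agree), and linearity of the resulting field, which holds because $\dd(\jetd\phi_1,\cdots,\jetd\phi_n)\in\Gamma(E^*)$ is a linear function.

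For bijectivity I would pass to local coordinates $(x^i,u^j)$ and the normal form \eqref{local n-vf}, matching the coefficients $\pi^{i_1\cdots i_n}_j$ with the zeroth-order ($\mathbbm{f}$-)part $\Hom(\wedge^n E^*,E^*)$ and the symbol coefficients $\pi^{i_1\cdots i_{n-1},j}$ with $\mathbbm{q}(\dd)$ in \eqref{higher dev}; equivalently, the inverse can be produced from the one-to-one data $(\delta_0,\delta_1)$ recalled before the theorem and in \cite{XuC}. Injectivity is then immediate, since $\hat\dd$ determines $\dd$ on the generators $\jetd\phi$ and on $\Hom(TM,E^*)$.

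The crux is the bracket homomorphism $\widehat{[\dd_1,\dd_2]}=[\hat\dd_1,\hat\dd_2]_S$. I would evaluate both sides on $(d\phi_1,\cdots,d\phi_{k+l-1})$ and invoke the shuffle formula for the Schouten bracket on exact $1$-forms, which has precisely the $(k,l-1)$- and $(l,k-1)$-shuffle structure of the composition $\circ$ defined in the excerpt. The bridge between the two is the identity that, for $\psi:=\dd_1(\jetd\phi_{\tau_1},\cdots,\jetd\phi_{\tau_k})\in\Gamma(E^*)$, feeding $\jetd\psi$ into $\dd_2$ corresponds under $\hat{}$ to feeding the exact $1$-form $d\psi=\pm\,d\hat\dd_1(d\phi_{\tau_1},\cdots,d\phi_{\tau_k})$ into $\hat\dd_2$; this is exactly how $\hat{}$ is built, so the inner terms match. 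Tracking the normalization $(-1)^n$ and the sign $(-1)^{(k+1)(l+1)}$ in $[\dd_1,\dd_2]$, the two shuffle sums agree term by term on the $d\phi$-tuples. The tuples containing some $dp_E^*f$ are then matched through the symbols, using that the symbol of a composition is governed by the same shuffles; since the two families of exact $1$-forms generate $T^*E$, the two multivector fields coincide.

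Finally, since $\hat{}$ is a degree-preserving linear bijection intertwining the brackets and $[\hat\dd_1,\hat\dd_2]_S=\widehat{[\dd_1,\dd_2]}$ lands in $\mathfrak{X}^\bullet_{lin}(E)$, the Schouten bracket preserves linearity; hence $\mathfrak{X}^\bullet_{lin}(E)$ is a graded Lie subalgebra of $(\mathfrak{X}^\bullet(E),[\cdot,\cdot]_S)$ and $\hat{}$ is an isomorphism of graded Lie algebras, yielding both assertions of the theorem at once. I expect the main obstacle to be the sign and shuffle bookkeeping in the bracket step, in particular reconciling the symbol contributions of $[\dd_1,\dd_2]$ with the mixed ($dp_E^*f$) arguments of the Schouten bracket; the well-definedness and bijectivity, while requiring care with the two exact sequences, should be essentially routine once the generators of $\jet E^*$ and $T^*E$ are set up.
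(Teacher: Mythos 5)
Your proposal is correct, and its computational core coincides with the paper's proof: the same correspondence $\hat{\dd}(d\phi_1,\cdots,d\phi_n)=(-1)^n\dd(\jetd\phi_1,\cdots,\jetd\phi_n)$ with inverse $X\mapsto\check{X}$, and the same term-by-term comparison of the shuffle formula for the Schouten bracket with the composition bracket on $\Gamma(\dev^\bullet E^*)$. The one genuine difference is logical order. The paper proves the first assertion --- closure of $\mathfrak{X}^\bullet_{lin}(E)$ under $[\cdot,\cdot]_S$ --- directly as its first step, and that argument is nearly free: in the shuffle formula, linearity of $Y$ puts the inner evaluations $Y(d\phi_{\sigma_1},\cdots,d\phi_{\sigma_k})$ in $\Gamma(E^*)$, and linearity of $X$ then puts the outer evaluations in $\Gamma(E^*)$. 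With closure in hand, both sides of $\widehat{[\dd,\ttt]}=[\hat{\dd},\hat{\ttt}]_S$ are known to be linear multivector fields, so the paper only needs to compare them on pure $d\phi$-tuples. You instead deduce closure at the very end from bijectivity plus intertwining; this is logically sound, but it forces you to verify the bracket identity on all spanning tuples, including the mixed ones containing $dp_E^*f$, ``through the symbols'' --- exactly the sign-and-symbol bookkeeping you flag as the main obstacle, and which the paper's ordering eliminates. (You could also avoid the mixed tuples by noting that the $d\phi$ alone span $T^*E$ off the zero section, so smooth multivector fields agreeing on pure tuples agree everywhere.) Finally, your generators-and-relations treatment of well-definedness of $\dd\mapsto\hat{\dd}$ is more explicit than the paper's, which only constructs the inverse $\check{(\cdot)}$ (automatically well-defined, since one evaluates an existing multivector field) and leans on \cite{XuC} for the one-to-one correspondence.
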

\begin{proof}
First, we prove that linear multivector fields on $E$ are closed under the Schouten bracket, namely,
\[[\mathfrak{X}_{lin}^l(E),\mathfrak{X}_{lin}^k(E)]\subset \mathfrak{X}^{k+l-1}_{lin}(E).\]
By definition, for $X\in \mathfrak{X}_{lin}^l(E)$ and $Y\in \mathfrak{X}^k_{lin}(E)$, their Schouten bracket $[X,Y]_S$ is linear if \[[X,Y]_S(d\phi_1,\cdots,d\phi_{k+l-1})\in \Gamma(E^*)\] for any $\phi_i\in \Gamma(E^*)$. In fact,
we have
\begin{eqnarray*}
&&[X,Y]_S(d\phi_1,\cdots,d\phi_{k+l-1})\\ &=&\sum_{\sigma\in Sh(k,l-1)}(-1)^\sigma X(d(Y(d\phi_{\sigma_1},\cdots,d\phi_{\sigma_k})),d\phi_{\sigma_{k+1}},\cdots,d\phi_{\sigma_{k+l-1}})\\ &&-(-1)^{(k+1)(l+1)}\sum_{\tau\in Sh(l,k-1)}(-1)^\tau Y(d(X(d\phi_{\tau_1},\cdots,d\phi_{\tau_l})),d\phi_{\tau_{l+1}},\cdots,d\phi_{\tau_{k+l-1}}),
\end{eqnarray*}
where $\sigma$ and $\tau$ are taken over all the $(k,l-1)$-shuffles and $(l,k-1)$-shuffles respectively. As $Y$ is linear, we get that $Y(d\phi_{\sigma_1},\cdots,d\phi_{\sigma_k})\in \Gamma(E^*)$. Also since $X$ is linear, we further get that \[X(d(Y(d\phi_{\sigma_1},\cdots,d\phi_{\sigma_k})),d\phi_{\sigma_{k+1}},\cdots,d\phi_{\sigma_{k+l-1}})\in \Gamma(E^*)\] and thus $[X,Y]_S\in \mathfrak{X}^{k+l-1}_{lin}(E)$.
Thus we deduce that $(\mathfrak{X}^\bullet_{lin}(E),[\cdot,\cdot]_S)$ is a graded Lie subalgebra of $(\mathfrak{X}^\bullet(E),[\cdot,\cdot]_S)$.

Secondly, we check that $\dd\mapsto \hat{\dd}$ is an isomorphism of graded vector spaces. We shall find its inverse. For $X\in \mathfrak{X}_{lin}^n(E)$, define $\check{X}\in \Hom(\wedge^n \jet E^*, E^*)$ by
\[\check{X}(\jetd \phi_1,\cdots,\jetd \phi_n):=(-1)^nX(d\phi_1,\cdots,d\phi_n)\in \Gamma(E^*),\qquad \phi_i\in \Gamma(E^*).\]
The function linear property of $\check{X}$ requires that
\begin{eqnarray*}
 \check{X}(\jetd \phi_1,\cdots,\jetd \phi_{n-1}, df\otimes \phi_n)&=&(-1)^n
 X(d\phi_1,\cdots,d\phi_{n-1}, dp_{E^*}^*f)\phi_n,\\
 \check{X}(\jetd \phi_1,\cdots,\jetd \phi_{n-2}, dg\otimes \phi_{n-1},df\otimes \phi_n)&=&0.
 \end{eqnarray*}
 By  \eqref{pro for linear vf}, there exists a vector field $\mathbbm{j} (\check{X}_\sharp(d\phi_1,\cdots,d\phi_{n-1}))\in \mathfrak{X}^1(M)$ such that
  \begin{eqnarray*}
 \check{X}(\jetd \phi_1,\cdots,\jetd \phi_{n-1}, df\otimes \phi_n)
&=&\mathbbm{j} (\check{X}_\sharp(d\phi_1,\cdots,d\phi_{n-1}))(f)\phi_n\\ &=&
  (df\otimes \phi_n) \circ \mathbbm{j}(\check{X}_\sharp(\jetd \phi_1,\cdots,\jetd \phi_{n-1})).
 \end{eqnarray*}
  Thus we proved that $\check{X}\in \Gamma(\dev^n E^*)$. So we get a map $\mathfrak{X}^n_{lin}(E)\to \dev^n(E^*), X\to \check{X}$, which is actually  the inverse of the map $\dd\mapsto \hat{\dd}$.

At last, we show that
\[\widehat{[\dd,\ttt]}=[\hat{\dd},\hat{\ttt}]_S,\qquad \dd\in \Gamma(\dev^l E^*),\ttt\in \Gamma(\dev^k E^*).\]
Actually, we have
\begin{eqnarray*}
&&\widehat{[\dd,\ttt]}(d\phi_1,\cdots, d\phi_{k+l-1})\\ &=&(-1)^{k+l-1}
[\dd,\ttt](\jetd \phi_1,\cdots,\jetd \phi_{k+l-1})\\ &=&(-1)^{kl}\sum_{\sigma\in Sh(k,l-1)}(-1)^\sigma \dd(\jetd(\ttt(\jetd\phi_{\sigma_1},\cdots,\jetd\phi_{\sigma_k})),\jetd\phi_{\sigma_{k+1}},\cdots,\jetd\phi_{\sigma_{k+l-1}})\\ &&-(-1)^{k+l-1}\sum_{\tau\in Sh(l,k-1)}(-1)^\tau\ttt(\jetd(\dd(\jetd\phi_{\tau_1},\cdots,\jetd\phi_{\tau_l})),\jetd\phi_{\tau_{l+1}},\cdots,\jetd\phi_{\tau_{k+l-1}})\\ &=&[\hat{\dd},\hat{\ttt}]_S(d\phi_1,\cdots, d\phi_{k+l-1}).
\end{eqnarray*}
We thus proved that $\dd\mapsto \hat{\dd}$ defines an isomorphism of graded Lie algebras.
\end{proof}

By \cite{CM, Sheng}, the exact sequence \eqref{higher dev} always splits when $\mathrm{rank} E\geq 2$.
\begin{Cor}
If $\mathrm{rank} E\geq 2$, then we have
\[\mathfrak{X}_{lin}^n(E)\cong\Gamma(\dev^n E^*)\cong \Gamma(\wedge^n E\otimes E^*)\oplus \Gamma(\wedge^{n-1} E\otimes TM).\]

\end{Cor}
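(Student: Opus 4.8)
The plan is to derive both isomorphisms by combining Theorem~\ref{main1} with the splitting of the exact sequence \eqref{higher dev}. The first isomorphism $\mathfrak{X}^n_{lin}(E)\cong\Gamma(\dev^n E^*)$ is nothing but the degree-$n$ component of the graded Lie algebra isomorphism already established in Theorem~\ref{main1}, so I would simply invoke that result.

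For the second isomorphism, I would apply the exact sequence \eqref{higher dev} to the vector bundle $E^*$ in place of $E$. Since $\mathrm{rank} E^*=\mathrm{rank} E\geq 2$, this substitution is legitimate and yields
\[
0\to \Hom(\wedge^n E^*, E^*)\xrightarrow{\mathbbm{f}} \dev^n E^*\xrightarrow{\mathbbm{q}} \Hom(\wedge^{n-1} E^*, TM)\to 0.
\]
By the result of \cite{CM, Sheng} quoted just above, this sequence splits whenever the rank is at least $2$, so on the level of sections we obtain
\[
\Gamma(\dev^n E^*)\cong \Gamma(\Hom(\wedge^n E^*, E^*))\oplus \Gamma(\Hom(\wedge^{n-1} E^*, TM)).
\]

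It then remains to rewrite the two summands through the canonical duality isomorphisms of vector bundles. Using $(\wedge^k E^*)^*\cong \wedge^k E$, I would identify $\Hom(\wedge^n E^*, E^*)\cong(\wedge^n E^*)^*\otimes E^*\cong \wedge^n E\otimes E^*$ and, likewise, $\Hom(\wedge^{n-1} E^*, TM)\cong \wedge^{n-1} E\otimes TM$. Substituting these into the splitting and concatenating with the first isomorphism produces exactly the claimed chain $\mathfrak{X}^n_{lin}(E)\cong\Gamma(\dev^n E^*)\cong \Gamma(\wedge^n E\otimes E^*)\oplus \Gamma(\wedge^{n-1} E\otimes TM)$.

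The argument carries no genuine difficulty once the splitting of \eqref{higher dev} is granted, since that is the substantive input and it is imported from \cite{CM, Sheng} rather than proved here. The only point that merits a little care is the bookkeeping in the duality identifications, namely ensuring the natural pairings are matched so that $(\wedge^k E^*)^*\cong \wedge^k E$ is applied consistently; this is routine multilinear algebra carried out fiberwise.
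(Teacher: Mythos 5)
Your proposal is correct and is essentially the paper's own (largely implicit) argument: the degree-$n$ part of Theorem \ref{main1} gives $\mathfrak{X}^n_{lin}(E)\cong\Gamma(\dev^n E^*)$, and the splitting of \eqref{higher dev} applied to $E^*$ (valid since $\mathrm{rank}\, E^*=\mathrm{rank}\, E\geq 2$), cited from \cite{CM, Sheng}, combined with the canonical identifications $\Hom(\wedge^n E^*,E^*)\cong \wedge^n E\otimes E^*$ and $\Hom(\wedge^{n-1}E^*,TM)\cong \wedge^{n-1}E\otimes TM$, yields the second isomorphism exactly as the paper intends.
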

\begin{Ex}
When $E=TM$ for a manifold $M$, we have \[\mathfrak{X}_{lin}^n(TM)\cong \Gamma(\dev^n T^*M)\cong\mathfrak{X}^n(M)\otimes \Omega^1(M)\oplus (\mathfrak{X}^{n-1}(M)\otimes \mathfrak{X}^1(M)).\]
\end{Ex}
\begin{Ex}\label{cot}
When $E=T^*M$ for a manifold $M$, we have
\[\mathfrak{X}_{lin}^n(T^*M)\cong\Gamma(\dev^n TM)\cong \Omega^n(M)\otimes \mathfrak{X}^1(M)\oplus (\Omega^{n-1}(M)\otimes
\mathfrak{X}^1(M)).\]
\end{Ex}

\begin{Ex}
When $E=V^*$ is a vector space, we have $\dev V=\gl(V)$ and $\jet V=V$. In this case,
\[\mathfrak{X}_{lin}^n(V^*)\cong\Gamma(\dev^n V)=\Hom(\wedge^n V,V).\]
When $E=M\times V^*$, we have $\dev(M\times V)=TM\oplus \gl(V)$ and $\jet (M\times V)=(T^*M\otimes V) \oplus (M\times V)$. Furthermore,  the space of linear $n$-vector fields on $M\times V^*$ is
\[\mathfrak{X}^n_{lin}(M\times V^*)\cong\Gamma(\dev^n (M\times V))\cong
\Hom(\wedge^n V,V)\oplus (\mathfrak{X}^1(M)\otimes \wedge^{n-1} V^*).\]\end{Ex}

\begin{Ex}
Consider the case $E=M\times \mathbbm{R}$. Then we have $\dev E=TM\times \mathbbm{R}$ and $\jet E=T^*M \times\mathbbm{R}$. 
By definitions, we obtain
\[\mathfrak{X}_{lin}^n(M\times \mathbbm{R})\cong \Gamma(\dev^n (M\times \mathbbm{R}) )\cong \mathfrak{X}^n(M)\oplus \mathfrak{X}^{n-1}(M),\qquad n\geq 1. \]

\end{Ex}

\subsection{Linear $n$-forms on a vector bundle}
It is well known  that linear\footnote{Here ``linear'' is in the sense that $\jet E^*$ is the space of linear sections of the double vector bundle $(T^*E;E^*,E;M)$ over $E$ or in the sense of \cite{BC}.}   $1$-forms on a vector bundle $E$ can be viewed as sections of the first jet bundle $\jet E^*$. Here we find that linear $n$-forms studied in \cite{BC} are sections of the $n$-th jet bundle $\jet_n E$ introduced in \cite{CLS}.

\begin{Def}\label{linearkform}\rm{(\cite{BC})}
An $n$-form $\Lambda$ on a vector bundle $E$ is called {\bf linear} if the induced map
$\Lambda^\sharp: \oplus_E^{n-1} TE\to T^*E$:
\[
		\begin{tikzcd}
			\oplus_E^{n-1} TE \arrow{d}{} \ar{r}{\Lambda^\sharp} &T^*E\arrow{d}{}\\
			\oplus^{n-1} TM \ar{r}{\lambda} &E^*
		\end{tikzcd},
	\]
is a morphism of vector bundles, where $\lambda:\oplus^{n-1} TM\to E^*$ is the covering map on the base manifolds. The space of linear $n$-forms on $E$ is denoted by $\Omega_{lin}^n(E)$.
\end{Def}
As the map $\lambda$ is skew-symmetric, it is a bundle map $\wedge^{n-1} TM\to E^*$.
In particular, a linear $1$-form is a section of $T^*E\to E$ which induces a bundle map from $E\to M$ to $T^*E\to E^*$. In other words, it is a linear section of $T^* E\to E$ in the double vector bundle $(T^*E; E,E^*;M)$, which is a section of $\jet E^*$.

Choose a local coordinate $(x^i,u^j)$ on $E$, where $x^i$ and $u^j$ are the coordinate functions on $M$ and the fiber respectively. Then  an $n$-form on $E$ is linear  if and only if  locally it  has the formula (\cite{BC})
\begin{eqnarray}\label{linear n-form}
\Lambda=\frac{1}{n!}\Lambda_{i_1\cdots i_n,j}(x)u^j dx^{i_1}\wedge \cdots\wedge dx^{i_n}+\frac{1}{(n-1)!}\lambda_{i_1\cdots i_{n-1},j}(x)dx^{i_1}\wedge\cdots dx^{i_{n-1}}\wedge du^j.
\end{eqnarray}


We give another equivalent description of linear $n$-forms on $E$ by use of the linear vector fields on $E$.
\begin{Lem}\label{eq n-form}
An $n$-form $\Lambda\in \Omega^n(E)$ is linear if and only if there exists a bundle map $\lambda:\wedge^{n-1} TM\to E^*$, such that
\[\Lambda(X_1,\cdots,X_n)\in \Gamma(E^*),\qquad \Lambda(X_1,\cdots,X_{n-1},\Phi)=\Phi\circ \lambda(\underline{X_1},\cdots,\underline{X_{n-1}}),\]
where $X_i\in \mathfrak{X}^1_{lin}(E)$ which determines $\underline{X_i}\in \mathfrak{X}^1(M)$ and $\Phi\in \mathfrak{X}^1_{lin}(E)$ satisfying $\Phi(dp_E^*f )=0$ for any $f\in C^\infty(M)$.
\end{Lem}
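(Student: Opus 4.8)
The plan is to prove both implications in a local coordinate chart $(x^i,u^j)$ of $E$, in which a linear vector field reads $X=\xi^i(x)\frac{\partial}{\partial x^i}+a^j_k(x)u^k\frac{\partial}{\partial u^j}$ with symbol $\underline{X}=\xi^i\frac{\partial}{\partial x^i}$, and $\Phi$ is such a field with $\xi=0$. Write (i) for the requirement $\Lambda(X_1,\dots,X_n)\in\Gamma(E^*)$ and (ii) for the identity $\Lambda(X_1,\dots,X_{n-1},\Phi)=\Phi\circ\lambda(\underline{X_1},\dots,\underline{X_{n-1}})$. The point I will exploit repeatedly is that $\frac{\partial}{\partial x^i}$ (symbol $\frac{\partial}{\partial x^i}$, no vertical part) and $u^k\frac{\partial}{\partial u^j}$ (a vertical, hence symbol-free, admissible $\Phi$) are themselves linear vector fields on the chart, so (i) and (ii) can be tested directly against them; contracting $\frac{\partial}{\partial x^i}$ with $dx^i$ gives $1$ and with $du^j$ gives $0$, while any vertical $\Phi$ kills every $dx$.

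For the forward implication I assume $\Lambda$ is linear, so by \eqref{linear n-form} it has the stated local shape, whose coefficients depend only on $x$ apart from the explicit $u^j$ in the first summand. Plugging linear fields $X_1,\dots,X_n$ into \eqref{linear n-form}, the all-$dx$ summand contributes $\frac{1}{n!}\Lambda_{i_1\cdots i_n,j}u^j$ contracted against the symbols $\underline{X_a}$, and the $du^j$ summand contributes only when exactly one $X_a$ meets $du^j$ through its vertical part $a^j_k u^k$; both contributions are homogeneous of degree one in $u$, so $\Lambda(X_1,\dots,X_n)$ is a fiberwise-linear function, i.e.\ a section of $E^*$, which is (i). For (ii) I place $\Phi$ in the last slot: the all-$dx$ summand dies since $\Phi$ is vertical, and in the surviving $du^j$ summand $\Phi$ must occupy the $du^j$ leg with $\iota_\Phi du^j=a^j_k u^k$, the remaining legs contracting the $dx$'s against $\underline{X_1},\dots,\underline{X_{n-1}}$; reading the coefficients $\lambda_{i_1\cdots i_{n-1},j}$ as a skew, $x$-linear (hence tensorial) bundle map $\lambda\colon\wedge^{n-1}TM\to E^*$ — which is precisely the base map of $\Lambda^\sharp$ from Definition \ref{linearkform} — the outcome is exactly $\Phi\circ\lambda(\underline{X_1},\dots,\underline{X_{n-1}})$.

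For the converse I assume (i) and (ii) hold for some $\lambda$ and expand $\Lambda$ on the chart as a sum of pure-$dx$ terms, single-$du$ terms, and terms containing two or more $du$'s. First, testing (i) on $\frac{\partial}{\partial x^{i_1}},\dots,\frac{\partial}{\partial x^{i_n}}$ isolates a pure-$dx$ coefficient and forces it to be a homogeneous degree-one function of $u$, with no $u$-independent piece (since a section of $E^*$ is a linear, not affine, function). Next, the higher $du$-terms are excluded: feeding (i) with $m\ge 2$ vertical fields $u^{p}\frac{\partial}{\partial u^{\bullet}}$ and $n-m$ horizontal $\frac{\partial}{\partial x}$'s produces a contribution proportional to a degree-$m$ monomial in $u$, which can be a degree-one homogeneous function only if it vanishes; the corresponding coefficients therefore vanish. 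Finally, testing (ii) on $\frac{\partial}{\partial x^{i_1}},\dots,\frac{\partial}{\partial x^{i_{n-1}}}$ and $\Phi=u^k\frac{\partial}{\partial u^j}$ gives $B_{i_1\cdots i_{n-1},j}(x,u)\,u^k=\lambda_{i_1\cdots i_{n-1},j}(x)\,u^k$, so each single-$du$ coefficient is $u$-independent and equals the component of $\lambda$. Assembling these facts reproduces exactly \eqref{linear n-form}, so $\Lambda$ is linear.

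I expect the converse to be the delicate part, and within it the interplay with the zero section to be the main obstacle. The hypotheses constrain $\Lambda$ only through linear and vertical-linear vector fields, all of which vanish along $u=0$, so no information is obtained there by direct substitution; the identities above first hold only on the dense open set $\{u\neq0\}$ (for instance $B u^k=\lambda u^k$ pins down $B$ only where $u^k\neq0$) and must be promoted to all of $E$ using smoothness of the coefficient functions. The same mechanism underlies both homogeneity conclusions drawn from (i): one genuinely has to rule out a $u$-independent piece (a spurious base-direction contribution) and any quadratic-or-higher piece (the surplus $du$-terms), and both are settled by comparing Taylor coefficients in $u$ at the zero section.
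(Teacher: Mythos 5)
Your proposal is correct and takes essentially the same approach as the paper: both work in a local chart and test $\Lambda$ against horizontal fields $\frac{\partial}{\partial x^i}$ and vertical linear fields, reducing everything to the normal form \eqref{linear n-form}. Yours is in fact more complete than the paper's, which settles the converse with the single sentence ``it is similar for the converse,'' whereas you carry it out in detail, including the continuity argument needed off the zero section (your only slip is the passing claim that \emph{all} linear vector fields vanish along $u=0$ --- e.g.\ $\frac{\partial}{\partial x^i}$ does not --- but nothing in your argument actually uses this).
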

\begin{proof}
Taking a local coordinate $(x^i,u^j)$ for $E$, a linear vector field $X\in\mathfrak{X}^1_{lin}(E)$ has the form
\[X=f_j^k(x) u^j \frac{\partial }{\partial u^k}+f^i(x)\frac{\partial}{\partial x^i}.\]
If $\Lambda$ is linear, by the local formula \eqref{linear n-form}, it is straightforward to see that $\Lambda(X_1,\cdots,X_n)\in \Gamma(E^*)$ for $X_i\in \mathfrak{X}^1_{lin}(E)$. Then suppose
\[X_l=f_{j_l}^{k_l}(x) u^{j_l} \frac{\partial }{\partial u^{k_l}}+f^{i_l}(x)\frac{\partial}{\partial x^{i_l}},\]
we have $\underline{X_l}=f^{i_l}(x)\frac{\partial}{\partial x^{i_l}}\in \mathfrak{X}^1(M)$. Assume
 $\Phi=f_j^k(x) u^j \frac{\partial}{\partial u^k}$, we have
\[\Lambda(X_1,\cdots,X_{n-1},\Phi)=\Lambda(f^{i_1}(x)\frac{\partial}{\partial x^{i_1}},\cdots,f^{i_{n-1}}(x)\frac{\partial}{\partial x^{i_{n-1}}},f_j^k(x) u^j \frac{\partial}{\partial u^k})=\Phi\circ \lambda(\underline{X_1},\cdots,\underline{X_{n-1}}).\]
It is similar for the converse.
\end{proof}

We shall show  that $\jet_n E^*$ serves as the geometric support of linear $n$-forms on a vector bundle $E$. The following theorem is a dual of Theorem \ref{main1}.
\begin{Thm}\label{main2}
For a vector bundle $E$, we have $d\Omega_{lin}^n (E)\subset \Omega_{lin}^{n+1} (E)$. Moreover, we have an isomorphism of cochain complexes:
\[(\Gamma(\jet_\bullet E^*),\jetd)\cong (\Omega_{lin}^\bullet (E),d),\qquad \mu\mapsto \hat{\mu},\]
where $\hat{\mu}$ is determined by \[\hat{\mu}(X_1,\cdots,X_n)=\mu(\check{X_1},\cdots,\check{X_n}),\qquad X_i\in \mathfrak{X}^1_{lin}(E)\] and $\check{X_i}\in \Gamma(\dev E^*)$ is decided by $\check{X_i}(\jetd \phi)=X_i(d\phi)$ for $\phi\in \Gamma(E^*)$.
\end{Thm}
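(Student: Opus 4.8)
The proof would parallel that of Theorem~\ref{main1}, proceeding in three stages: the de~Rham differential $d$ preserves linearity of forms; the assignment $\mu\mapsto\hat\mu$ is an isomorphism of graded vector spaces; and it intertwines $\jetd$ with $d$. First I would establish $d\Omega^n_{lin}(E)\subset\Omega^{n+1}_{lin}(E)$ by a direct computation in the local normal form \eqref{linear n-form}. Since the coefficients $\Lambda_{i_1\cdots i_n,j}(x)$ and $\lambda_{i_1\cdots i_{n-1},j}(x)$ depend only on the base coordinates $x$, applying $d$ produces only terms of the shape $\frac{\partial(\cdot)}{\partial x^k}(x)\,u^j\,dx^{\cdots}$ and $(\cdot)(x)\,dx^{\cdots}\wedge du^j$, with no $du\wedge du$ contribution; hence $d\Lambda$ is again of the form \eqref{linear n-form} with $n$ replaced by $n+1$, so $(\Omega^\bullet_{lin}(E),d)$ is a subcomplex of the de~Rham complex.

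Next I would exhibit $\mu\mapsto\hat\mu$ as a bijection by constructing its inverse, mimicking the map $X\mapsto\check X$ of Theorem~\ref{main1} and using the degree-one identification $\Gamma(\dev E^*)\cong\mathfrak{X}^1_{lin}(E)$. For a linear form $\Lambda$, define $\check\Lambda\in\Hom(\wedge^n\dev E^*,E^*)$ by $\check\Lambda(\mathfrak{d}_1,\ldots,\mathfrak{d}_n):=\Lambda(\hat{\mathfrak{d}}_1,\ldots,\hat{\mathfrak{d}}_n)$, where $\hat{\mathfrak{d}}_i\in\mathfrak{X}^1_{lin}(E)$ is the linear vector field corresponding to $\mathfrak{d}_i$. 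The key point is to verify $\check\Lambda\in\Gamma(\jet_n E^*)$, i.e. $\mathrm{Im}(\check\Lambda_\sharp)\subset\jet E^*$: for $\Phi\in\End(E^*)$, whose associated linear vector field is vertical, Lemma~\ref{eq n-form} gives $\check\Lambda_\sharp(\mathfrak{d}_1,\ldots,\mathfrak{d}_{n-1})(\Phi)=\Lambda(\hat{\mathfrak{d}}_1,\ldots,\hat{\mathfrak{d}}_{n-1},\Phi)=\Phi\circ\lambda(\underline{\mathfrak{d}}_1,\ldots,\underline{\mathfrak{d}}_{n-1})$, which is exactly the defining relation of $\jet E^*\subset\Hom(\dev E^*,E^*)$. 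Conversely, for $\mu\in\Gamma(\jet_n E^*)$ the same Lemma, applied with $\lambda=\mathbbm{p}(\mu)$ the image in the exact sequence \eqref{higher jet}, shows $\hat\mu$ is a genuine linear $n$-form, and $\check{(\cdot)}$, $\hat{(\cdot)}$ are visibly mutually inverse. Equivalently, under the local models \eqref{higher jet} and \eqref{linear n-form}, both $\Gamma(\jet_n E^*)$ and $\Omega^n_{lin}(E)$ are extensions of $\Hom(\wedge^{n-1}TM,E^*)$ by $\Hom(\wedge^{n}TM,E^*)$, and $\hat{}$ is the induced identification.

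Finally I would check $\widehat{\jetd\mu}=d\hat\mu$, which promotes the vector-space isomorphism to an isomorphism of cochain complexes. Both sides are linear $(n+1)$-forms (the left one by the previous stage applied to $\jetd\mu\in\Gamma(\jet_{n+1}E^*)$, the right one by the first stage), so it suffices to compare their values on $(n+1)$-tuples of linear vector fields. Writing $\jetd=d_{\mathrm{CE}}$ via \eqref{cecomplex} on $\check X_0,\ldots,\check X_n$ and the Koszul formula for $d\hat\mu$ on $X_0,\ldots,X_n$, the two expressions match term by term once one uses (i) that $\check X$ acting as a covariant differential operator on a section $\mu(\cdots)\in\Gamma(E^*)$ equals $X$ acting as a linear vector field on the corresponding linear function, so $\check X_i\big(\mu(\cdots)\big)=X_i\big(\hat\mu(\cdots)\big)$; and (ii) that the commutator bracket in the Lie algebroid $\dev E^*$ corresponds to the Lie bracket of linear vector fields, so $[\check X_i,\check X_j]$ corresponds to $[X_i,X_j]$. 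Both (i) and (ii) are the degree-one instance of the graded Lie algebra isomorphism of Theorem~\ref{main1}. Since two linear $(n+1)$-forms that agree on all linear vector fields must coincide — their values on a local frame of $TE$ are recovered off the zero section by $C^\infty(E)$-tensoriality, hence everywhere by continuity — we conclude $\widehat{\jetd\mu}=d\hat\mu$.

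The main obstacle is the middle stage: translating the jet-bundle condition $\mathrm{Im}(\mu_\sharp)\subset\jet E^*$ into the geometric linearity condition of Definition~\ref{linearkform}/Lemma~\ref{eq n-form}, and in particular ensuring that $\hat\mu$, which is prescribed only on linear vector fields, is a genuine $C^\infty(E)$-multilinear form. The delicate point is that linear vector fields fail to span $TE$ over $C^\infty(E)$ along the zero section, so the identification must be controlled either through the matching of the exact sequence \eqref{higher jet} with the local formula \eqref{linear n-form} or through Lemma~\ref{eq n-form}; once the vector-space isomorphism is secured, the differential-intertwining in the last stage is a routine comparison of Cartan-type formulas.
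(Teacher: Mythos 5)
Your proposal is correct and follows essentially the same route as the paper's proof: the same local-coordinate computation showing $d$ preserves the form \eqref{linear n-form}, the same construction of the inverse $\Lambda\mapsto\check\Lambda$ verified via Lemma \ref{eq n-form} and the projection $\mathbbm{p}$ of \eqref{higher jet}, and the same identification of $\jetd$ with the Chevalley--Eilenberg differential \eqref{cecomplex} compared against the de Rham differential under $\Gamma(\dev E^*)\cong\mathfrak{X}^1_{lin}(E)$. Your added justification that agreement on linear vector fields determines a linear form (tensoriality off the zero section plus continuity) is a worthwhile detail the paper leaves implicit, but it does not change the argument's structure.
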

\begin{proof}
Locally a linear $n$-form $\Lambda\in \Omega^n_{lin}(E)$ has the formula \eqref{linear n-form}. Taking the de Rham differential, we have
\begin{eqnarray*}
d\Lambda&=&\frac{1}{n!}\frac{\partial\Lambda_{i_1\cdots i_n,j}(x)}{\partial x^l} dx^l\wedge du^j\wedge dx^{i_1}\wedge \cdots\wedge dx^{i_n}+\frac{1}{n!}\Lambda_{i_1\cdots i_n,j}(x)du^j\wedge dx^{i_1}\wedge \cdots\wedge dx^{i_n}\\ &&+\frac{1}{(n-1)!}\frac{\partial \lambda_{i_1\cdots i_{n-1},j}(x)}{\partial x^k} dx^k \wedge dx^{i_1}\wedge\cdots dx^{i_{n-1}}\wedge du^j,
\end{eqnarray*}
which is still of the form \eqref{linear n-form}. So $d\Lambda$ is  a linear $(n+1)$-form on $E$. We get that $d\Omega_{lin}^n (E)\subset \Omega_{lin}^{n+1} (E)$.

Then we check that $\mu\mapsto \hat{\mu}$ is well-defined. It is obvious that $\hat{\mu}(X_1,\cdots,X_n)\in \Gamma(E^*)$.
And
\[\hat{\mu}(X_1,\cdots,X_{n-1},\Phi)=\mu(\check{X}_1,\cdots,\check{X}_{n-1},\Phi)= \Phi\circ \mathbbm{p}(\mu)(\underline{X}_1,\cdots,\underline{X_{n-1}}),\]
where $\mathbbm{p}:\jet_n E^*\to \Hom(\wedge^{n-1} TM,E^*)$ is the map in \eqref{higher jet}. So $\hat{\mu}\in \Omega^n_{lin}(E)$ and the associated bundle map $ \wedge^{n-1} TM\to E^*$ is $\mathbbm{p}(\mu)$.

Then we define an inverse map of $\mu\mapsto \hat{\mu}$. For $\Lambda\in \Omega^n_{lin}(E)$, define
\[\check{\Lambda}(\dd_1,\cdots,\dd_n)=\Lambda(\hat{\dd_1},\cdots,\hat{\dd_n}),\qquad \dd_i\in \Gamma(\dev E^*),\]
where $\hat{\dd_i}\in \mathfrak{X}^1_{lin}(E)$ is defined by $\hat{\dd_i}(d\phi)=\dd_i(\jetd \phi)$ for $\phi\in \Gamma(E^*)$.
By Lemma \ref{eq n-form}, we can check that $\check{\Lambda}\in \Gamma(\jet_n E^*)$. We get a map
\[\Omega^n_{lin}(E)\to \Gamma(\jet_n E^*),\qquad \Lambda\mapsto \check{\Lambda},\]
and it is the inverse of the map $\mu\mapsto \hat{\mu}$. We get an isomorphism.

Now it is left to check $\mu\mapsto \hat{\mu}$ actually gives an isomorphism of cochain complexes, namely
\[\widehat{\jetd \mu}=d\hat{\mu},\qquad \mu\in \Gamma(\jet_n E^*).\]
We have the following diagram:
\[
		\begin{tikzcd}
			(\Gamma(\jet_\bullet E^*),\jetd) \ar{d}{} \ar{r}{\subset} &(\Hom(\wedge^n \dev E^*,E^*),d_{\mathrm{CE}})
			\arrow{d}{}\\
			(\Omega^\bullet_{lin}(E),d) \ar{r}{\subset} &(\Hom(\wedge^n\mathfrak{X}^1_{lin}(E),\Gamma(E^*)),d)
		\end{tikzcd},
	\]
where the left two complexes are subcomplexes of the right two complexes.
We note that the right vertical side is actually an isomorphism  of cochain complexes.  As $\Gamma(\dev E)\cong \mathfrak{X}^1_{lin}(E)$, comparing the Chavelley-Eilenberg differential \eqref{cecomplex} and the de Rham differential,
we have $\widehat{\jetd \mu}=\widehat{d_{\mathrm{CE}}\mu}=d\hat \mu$ for $\mu\in \Gamma(\jet_n E^*)$.
\end{proof}
The exact sequence \eqref{higher jet} of $\jet_n E$ splits on the section level (\cite{CLS}).
\begin{Cor}
We have
\[\Omega_{lin}^n(E)\cong \Gamma(\jet_n E^*)\cong \Gamma(\wedge^n T^*M\otimes E^*)\oplus \Gamma(\wedge^{n-1} T^*M\otimes E^*).\]

\end{Cor}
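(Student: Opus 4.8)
The plan is to assemble the statement from two ingredients that are already in hand: the isomorphism of Theorem~\ref{main2} and the section-level splitting of the jet sequence \eqref{higher jet}. First, Theorem~\ref{main2} directly supplies the first isomorphism $\Omega_{lin}^n(E)\cong \Gamma(\jet_n E^*)$, since this is precisely the degree-$n$ component of the isomorphism of cochain complexes established there. Hence the only thing left to prove is the second isomorphism, which is a purely bundle-theoretic decomposition of $\Gamma(\jet_n E^*)$.

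For that second isomorphism, I would specialize the exact sequence \eqref{higher jet} to the dual bundle, i.e.\ replace $E$ by $E^*$ throughout, obtaining
\[0\to \Hom(\wedge^n TM, E^*)\xrightarrow{\mathbbm{e}} \jet_n E^*\xrightarrow{\mathbbm{p}} \Hom(\wedge^{n-1} TM, E^*)\to 0.\]
By the result of \cite{CLS} recalled immediately before the Corollary, this sequence splits on the level of sections, so taking global sections yields
\[\Gamma(\jet_n E^*)\cong \Gamma\big(\Hom(\wedge^n TM, E^*)\big)\oplus \Gamma\big(\Hom(\wedge^{n-1} TM, E^*)\big).\]
I would then invoke the canonical identifications $\Hom(\wedge^k TM, E^*)\cong \wedge^k T^*M\otimes E^*$ for $k=n$ and $k=n-1$, which give exactly the two summands $\Gamma(\wedge^n T^*M\otimes E^*)$ and $\Gamma(\wedge^{n-1} T^*M\otimes E^*)$ in the statement.

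Since both isomorphisms are supplied by previously established results, there is no substantive obstacle here; the Corollary is an immediate consequence of Theorem~\ref{main2} together with the cited splitting. The only point demanding a moment of care is that the splitting of \eqref{higher jet} is asserted \emph{on the section level} rather than as a bundle splitting—that is, the splitting map need not be $C^\infty(M)$-linear in general—but this is precisely the form of the statement proved in \cite{CLS}, and it is exactly what is needed to pass from the short exact sequence to the claimed direct-sum decomposition of $\Gamma(\jet_n E^*)$.
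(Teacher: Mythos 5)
Your proposal is correct and follows exactly the paper's (implicit) argument: the first isomorphism is the degree-$n$ part of Theorem \ref{main2}, and the second comes from the section-level splitting of the sequence \eqref{higher jet} applied to $E^*$, as the paper signals in the sentence immediately preceding the Corollary. Your added remark that the splitting is only at the level of sections, not of bundles, is the right point of care and matches the cited result of \cite{CLS}.
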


\begin{Ex}
When $E=TM$ for a manifold $M$, we have \[\Omega_{lin}^n(TM)\cong\Gamma(\jet_n T^*M)\cong
\Omega^{n}(M)\otimes \Omega^1(M)\oplus (\Omega^{n-1} (M)\otimes \Omega^1(M)).\]
\end{Ex}
\begin{Ex}
When $E=T^*M$ for a manifold $M$, we have
\[\Omega_{lin}^n(T^*M)\cong \Gamma(\jet_n TM)\cong \Omega^n(M)\otimes \mathfrak{X}^1(M)\oplus (\Omega^{n-1}(M)\otimes \mathfrak{X}^1(M)).\]
Comparing with Example \ref{cot}, we see $\mathfrak{X}_{lin}^n(T^*M)\cong\Omega_{lin}^n(T^*M)$.\end{Ex}

\begin{Ex}
When $E=V^*$ is a vector space, we have
\[\Omega_{lin}^n(V^*)\cong \Gamma(\jet_n V)=0,\qquad n\geq 2.\]
Actually, for $\mu\in \Gamma(\jet_2 V)$, as $\mu(A\wedge B)=B\mu(A\wedge \mathrm{Id}_V)=-BA\mu(\mathrm{Id}_V\wedge \mathrm{Id}_V)=0$ for $A,B\in \mathrm{gl}(V)$, we see $\mu=0$.

When $E=M\times V^*$,  
we get
\[\Omega_{lin}^n(M\times V^*)\cong \Gamma(\jet_n (M\times V))\cong \Omega^n(M)\otimes V\oplus \Omega^{n-1}(M)\otimes V.\]\end{Ex}

\begin{Ex}
Consider the case $E=M\times \mathbbm{R}$, the trivial line bundle. We have $\dev E=TM\times \mathbbm{R}$ and $\jet E=T^*M \times \mathbbm{R}$. By definition, we obtain
\[\Omega_{lin}^n(M\times \mathbbm{R})\cong \Gamma(\jet_n E^*)\cong \Omega^{n}(M)\oplus \Omega^{n-1}(M).\]

\end{Ex}

\subsection{Pseudo-linearization of higher analogues of Courant algebroids $TE^*\oplus \wedge^n T^*E^*$}
 In this section, as consequences of Theorem \ref{main1} and \ref{main2}, we show that  the $n$-omni-Lie algebroid $\dev E\oplus \jet_n E$ introduced in \cite{BVZ} is  certain linearization of the higher analogue of Courant algebroids $TE^*\oplus \wedge^n T^*E^*$ (\cite{BS, Zambon}).

 For a manifold $M$, on the vector bundle
$\mathcal{T}^n:=TM\oplus \wedge^n T^*M,$
there exists a natural  non-degenerate symmetric pairing with values in $\wedge^{n-1} T^*M$:
\[(X+\alpha, Y+\beta)=\iota_X \beta+\iota_Y \alpha,\qquad \forall~X,Y\in \mathfrak{X}^1(M), \alpha,\beta\in \Omega^n(M),\]
an anchor map \[\rho:\mathcal{T}^n\to TM,\qquad \rho(X+\alpha)=X,\]
and a higher Dorfman bracket on $\Gamma(\mathcal{T}^n)$:
\begin{eqnarray}\label{HA}
\{X+\alpha, Y+\beta\}=[X,Y]+L_X \beta-\iota_Y d \alpha.
\end{eqnarray}
They satisfy some properties similar to that for a Courant algebroid. The quadruple $(\mathcal{T}^n,(\cdot,\cdot),\{\cdot,\cdot\},\rho)$ is called {\bf a higher analogue of Courant algebroids}  in \cite{BS,Zambon}.

The {\bf $n$-omni-Lie algebroid}  of a vector bundle $E$ (\cite{BVZ}) is the quadruple $(\dev E\oplus \jet_n E, (\cdot,\cdot),\{\cdot,\cdot\},\rho)$, where $\rho:\dev E \oplus \jet_n E\to \dev E$ is the projection to the first summand,
$(\cdot,\cdot)$ is the $\jet_{n-1} E$-valued pairing
\[(\mathfrak{d}+\mu,\mathfrak{t}+\nu)=\iota_{\mathfrak{d}} \nu +\iota_{\mathfrak{t}} \mu,\qquad \forall~\mathfrak{d},\mathfrak{t}\in \Gamma(\dev E), \mu,\nu\in \Gamma(\jet_n E),\]
and the bracket $\{\cdot,\cdot\}$ is
\[\{\mathfrak{d}+\mu,\mathfrak{t}+\nu\}=[\mathfrak{d},\mathfrak{t}]+L_{\mathfrak{d}} \nu-\iota_{\mathfrak{t}}\mathbbm{d} \mu.\]
Here $L_{\dd}: \Gamma(\jet_n E)\to \Gamma(\jet_n E)$ is defined in the following way: for $\nu\in \Gamma(\jet_n E)\subset \Hom(\wedge^n \dev E, E)$, suppose $\nu=\omega\otimes u$ for $\omega\in \Gamma(\wedge^n (\dev E)^*)$ and $u\in \Gamma(E)$. Define
\[L_\dd \nu=(L_\dd \omega)\otimes u+\omega\otimes \dd(u).\]
It is proved in \cite[Proposition 3.2]{CLS} that  $L_\dd \nu\in \Gamma(\jet_n E)$.

By Theorem \ref{main1} and \ref{main2}, linear vector fields and linear $n$-forms on $E^*$ can be seen as sections of $\dev E$ and $\jet_n E$ respectively. So linear sections of $TE^*\oplus \wedge^n T^* E^*$ are sections of the vector bundle $\dev E\oplus \jet_n E$.
Also, linear multivector fields and linear forms on a vector bundle are closed under the Schouten bracket and the de Rham differential respectively. The following lemma states that the linearity  is also preserved by the Lie derivative and the contraction.

\begin{Lem}
We have
\[\iota_{\mathfrak{X}^1_{lin}(E)} \Omega^n_{lin}(E)\subset \Omega_{lin}^{n-1}(E),\qquad L_{\mathfrak{X}^1_{lin}(E)} \Omega^n_{lin}(E)\subset \Omega_{lin}^n(E).\]
\end{Lem}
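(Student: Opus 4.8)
The plan is to handle the two inclusions in turn: first establish that contraction by a linear vector field preserves linearity, and then deduce the statement for the Lie derivative from it by Cartan's magic formula together with the already-proved fact that $d$ preserves linearity.

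For the contraction, take $X\in\mathfrak{X}^1_{lin}(E)$ and $\Lambda\in\Omega^n_{lin}(E)$, and work in a local coordinate $(x^i,u^j)$. Writing $X=f^k_j(x)u^j\frac{\partial}{\partial u^k}+f^i(x)\frac{\partial}{\partial x^i}$ and using the local normal form \eqref{linear n-form} for $\Lambda$, I would compute $\iota_X\Lambda$ by contracting each of the two summands of $X$ against each of the two summands of $\Lambda$. The vertical part $f^k_j u^j\frac{\partial}{\partial u^k}$ annihilates the pure-$dx$ summand of $\Lambda$ and, paired with the $du^j$ summand, yields a term whose coefficient is (a function of $x$) times $u^l$ and which is pure-$dx$ of degree $n-1$; the horizontal part $f^i\frac{\partial}{\partial x^i}$ sends the pure-$dx$ summand of $\Lambda$ to a pure-$dx$ term of degree $n-1$ with coefficient again linear in $u$, and sends the $du^j$ summand to a term carrying one $du^j$ with coefficient depending only on $x$. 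Comparing with \eqref{linear n-form} for $(n-1)$-forms shows $\iota_X\Lambda\in\Omega^{n-1}_{lin}(E)$. More intrinsically, one may argue through Lemma \ref{eq n-form}: if $\lambda$ is the bundle map associated with $\Lambda$, then $(\iota_X\Lambda)(X_1,\dots,X_{n-1})=\Lambda(X,X_1,\dots,X_{n-1})\in\Gamma(E^*)$ for linear $X_i$, and $(\iota_X\Lambda)(X_1,\dots,X_{n-2},\Phi)=\Phi\circ\lambda(\underline X,\underline{X_1},\dots,\underline{X_{n-2}})$, so $\iota_X\Lambda$ is linear with associated bundle map $\iota_{\underline X}\lambda$.

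For the Lie derivative, I would apply Cartan's formula $L_X=\iota_X\circ d+d\circ\iota_X$ to $\Lambda\in\Omega^n_{lin}(E)$. By the first assertion of Theorem \ref{main2}, $d\Lambda\in\Omega^{n+1}_{lin}(E)$, whence $\iota_X d\Lambda\in\Omega^n_{lin}(E)$ by the contraction statement just proved; likewise $\iota_X\Lambda\in\Omega^{n-1}_{lin}(E)$ and so $d\iota_X\Lambda\in\Omega^n_{lin}(E)$. Adding the two gives $L_X\Lambda\in\Omega^n_{lin}(E)$, which is the second inclusion. The argument is entirely formal once the contraction case is in hand; the only place demanding care is the coordinate bookkeeping in the contraction, namely confirming that the pure-$dx$ coefficients of $\iota_X\Lambda$ come out linear in the fibre coordinates while the single-$du$ coefficients are fibre-independent. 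This is routine, and the intrinsic argument via Lemma \ref{eq n-form} avoids it altogether, so I do not anticipate any genuine obstacle.
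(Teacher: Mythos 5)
Your proof is correct and follows essentially the same route as the paper: a local-coordinate computation showing $\iota_X\Lambda$ retains the normal form \eqref{linear n-form}, followed by Cartan's formula $L_X=d\circ\iota_X+\iota_X\circ d$ combined with Theorem \ref{main2} for the Lie derivative. Your additional intrinsic argument via Lemma \ref{eq n-form}, identifying the bundle map of $\iota_X\Lambda$ as $\iota_{\underline{X}}\lambda$, is a valid (and cleaner) alternative to the coordinate bookkeeping, but it is not needed beyond what the paper already does.
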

\begin{proof}
We prove by using the local coordinates. Choosing a local coordinate $(x^i,u^j)$ on $E$, suppose that $\Lambda\in \Omega^n_{lin}(E)$ takes the form of \eqref{linear n-form} and $X=f_j^k(x) u^j \frac{\partial }{\partial u^k}+f^i(x)\frac{\partial}{\partial x^i}\in \mathfrak{X}^1_{lin}(E)$. Then we have
\[\iota_X \Lambda=S(x) u^j dx^{i_1}\wedge \cdots\wedge dx^{i_{n-1}}+T(x)dx^{i_1}\wedge\cdots dx^{i_{n-2}}\wedge du^j,\]
for some $S(x),T(x)\in C^\infty(M)$. 
So $\iota_X \Lambda$ also takes the form of \eqref{linear n-form} and thus $\iota_X \Lambda\in
\Omega_{lin}^{n-1}(E)$.

Note that $L_X \Lambda=d\iota_X \Lambda+\iota_X d\Lambda$. By Theorem \ref{main2} and $\iota_{\mathfrak{X}^1_{lin}(E)} \Omega^n_{lin}(E)\subset \Omega_{lin}^{n-1}(E)$, i.e. the linearity is preserved by both the de Rham differential and the contraction, we obtain $L_X \Lambda\in \Omega^n_{lin}(E)$.
\end{proof}

Recall from Theorem \ref{main1} and \ref{main2} that we have the isomorphisms $\Gamma(\dev^\bullet E)\cong \mathfrak{X}_{lin}^\bullet(E^*),\dd\mapsto \hat{\dd}$ and $\Gamma(\jet_\bullet E)\cong \Omega^\bullet_{lin}(E^*),\mu\mapsto \hat{\mu}$.
\begin{Thm}\label{linear1}
For a vector bundle $E$, the $n$-omni-Lie algebroid $\dev E\oplus \jet_n E$ is induced from  the higher analogue of  Courant algebroids $(TE^*\oplus \wedge^n T^* E^*,(\cdot,\cdot),\{\cdot,\cdot\},\rho)$ by restricting to $\mathfrak{X}^1_{lin}(E^*)\oplus \Omega^n_{lin}(E^*)$. Precisely, we have
\begin{eqnarray*}
\widehat{(\dd,\mu)}&=&(\hat{\dd},\hat{\mu});\\
\widehat{[\dd,\ttt]}&=&[\hat{\dd},\hat{\ttt}]_S;\\
\widehat{L_\dd \mu}&=&L_{\hat{\dd}}\hat{\mu};\\
\widehat{\iota_\dd \jetd\mu}&=&\iota_{\hat{\dd}} d\hat{\mu},
\end{eqnarray*}
for $\dd,\ttt\in \Gamma(\dev E)$ and $\mu\in \Gamma(\jet_n E)$.
\end{Thm}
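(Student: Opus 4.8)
The four identities separate into rows already available and rows requiring genuine work. The bracket identity $\widehat{[\dd,\ttt]}=[\hat{\dd},\hat{\ttt}]_S$ is precisely the degree-one case of the graded Lie algebra isomorphism $\Gamma(\dev^\bullet E)\cong\mathfrak{X}^\bullet_{lin}(E^*)$ of Theorem \ref{main1} applied to the bundle $E^*$: there $[\dd,\ttt]$ is the commutator bracket of covariant differential operators, while $[\hat{\dd},\hat{\ttt}]_S$ is the Schouten bracket of the two corresponding linear vector fields, i.e.\ their Lie bracket. In the same way, Theorem \ref{main2} applied to $E^*$ already supplies the compatibility of differentials $\widehat{\jetd\mu}=d\hat{\mu}$. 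The plan is therefore to establish the contraction identity $\widehat{\iota_\dd\mu}=\iota_{\hat{\dd}}\hat{\mu}$ as the base case, and then to deduce the Lie-derivative identity and the remaining $\iota\jetd$-identity formally from it, using $\widehat{\jetd\mu}=d\hat{\mu}$ and Cartan's magic formula on both sides.

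To prove the contraction identity --- which is exactly the first displayed equation, since $(\dd,\mu)$ abbreviates $\iota_\dd\mu$ and $(\hat{\dd},\hat{\mu})$ abbreviates $\iota_{\hat{\dd}}\hat{\mu}$ --- I would evaluate both sides, seen as linear $(n-1)$-forms on $E^*$, against $n-1$ linear vector fields $X_1,\dots,X_{n-1}\in\mathfrak{X}^1_{lin}(E^*)$. Unwinding the hat map of Theorem \ref{main2}, the left-hand side becomes
\[\widehat{\iota_\dd\mu}(X_1,\dots,X_{n-1})=(\iota_\dd\mu)(\check{X_1},\dots,\check{X_{n-1}})=\mu(\dd,\check{X_1},\dots,\check{X_{n-1}}),\]
whereas the right-hand side is $\hat{\mu}(\hat{\dd},X_1,\dots,X_{n-1})$. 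Since the check of $\hat{\dd}$ equals $\dd$ --- the degree-one hat and check maps being mutually inverse identifications of $\Gamma(\dev E)$ with $\mathfrak{X}^1_{lin}(E^*)$ --- this unwinds to $\mu(\dd,\check{X_1},\dots,\check{X_{n-1}})$ as well, and the two sides agree.

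The remaining two identities then follow formally. The $\iota\jetd$-identity is immediate from the contraction identity applied to $\jetd\mu$ together with $\widehat{\jetd\mu}=d\hat{\mu}$, namely $\widehat{\iota_\dd\jetd\mu}=\iota_{\hat{\dd}}\widehat{\jetd\mu}=\iota_{\hat{\dd}}d\hat{\mu}$. For the Lie-derivative identity I would invoke Cartan's magic formula on each side: on the de Rham side $L_{\hat{\dd}}=d\iota_{\hat{\dd}}+\iota_{\hat{\dd}}d$, and on the omni side the operator $L_\dd$ on $\Gamma(\jet_\bullet E)$ is the Lie derivative of the Chevalley--Eilenberg complex of the Lie algebroid $\dev E$ with its representation on $E$, of which $\jet_\bullet E$ is the subcomplex recalled before the theorem, so that $L_\dd=\jetd\iota_\dd+\iota_\dd\jetd$. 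Applying the hat map term by term and inserting the identities already proved gives
\[\widehat{L_\dd\mu}=\widehat{\jetd\iota_\dd\mu}+\widehat{\iota_\dd\jetd\mu}=d\,\widehat{\iota_\dd\mu}+\iota_{\hat{\dd}}d\hat{\mu}=d\iota_{\hat{\dd}}\hat{\mu}+\iota_{\hat{\dd}}d\hat{\mu}=L_{\hat{\dd}}\hat{\mu}.\]

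The principal difficulty is not any single computation but two bookkeeping points on which the formal argument rests. First, I must confirm that the operator $L_\dd$ defined on $\jet_nE$ by $L_\dd\nu=(L_\dd\omega)\otimes u+\omega\otimes\dd(u)$ genuinely coincides with the Chevalley--Eilenberg Lie derivative, and hence obeys Cartan's formula with $\jetd$; this is the step I would check most carefully, either on generators of the form $\jetd u_i$ or by appealing to the standard Cartan calculus for the representation of $\dev E$ on $E$. Second, the factors $(-1)^n$ occurring in the two hat maps of Theorems \ref{main1} and \ref{main2} must be reconciled, so that the degree-one statement $\check{\hat{\dd}}=\dd$ and the higher-degree pairings carry matching signs; should this abstract sign-tracking become delicate, a direct verification in the local coordinates \eqref{local n-vf} and \eqref{linear n-form} provides a safe fallback.
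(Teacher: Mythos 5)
Your proposal is correct and follows essentially the same route as the paper's proof: the bracket identity and $\widehat{\jetd\mu}=d\hat{\mu}$ are quoted from Theorems \ref{main1} and \ref{main2}, the contraction identity $\widehat{\iota_\dd\mu}=\iota_{\hat{\dd}}\hat{\mu}$ is verified by evaluating against linear vector fields exactly as in the paper's equation \eqref{dd1}, and the remaining two identities are deduced formally via Cartan's formula on both sides, which is precisely the paper's concluding computation. Your two flagged bookkeeping points (that $L_\dd$ on $\Gamma(\jet_\bullet E)$ obeys $L_\dd=\iota_\dd\jetd+\jetd\iota_\dd$, and the reconciliation of the $(-1)^n$ factors) are in fact used silently by the paper, so checking them is added care rather than a different argument.
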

\begin{proof}
By Theorem \ref{main1} and \ref{main2}, we know that $\Gamma(\dev E)\cong\mathfrak{X}_{lin}^1(E^*)$, $\Gamma(\jet_n E)\cong\Omega^n_{lin}(E^*)$, and we have
$\widehat{[\dd,\ttt]}=[\hat{\dd},\hat{\ttt}]_S$.
We claim that
\begin{eqnarray}\label{dd1}
\iota_{\hat{\dd}}\hat{\mu}=\widehat{\iota_{\dd}\mu},\qquad \dd\in \Gamma(\dev E),\mu\in \Gamma(\jet_n E).
\end{eqnarray}
Indeed,
for $X_1,\cdots,X_{n-1}\in \mathfrak{X}_{lin}^1(E^*)$, we have
\[\iota_{\hat{\dd}} \hat{\mu}(X_1,\cdots,X_{n-1})=\hat{\mu}(\hat{\dd},X_1,\cdots,X_{n-1})=\mu(\dd,\check{X_1},\cdots,\check{X_{n-1}})=\widehat{\iota_{\dd} \mu}(X_1,\cdots,X_{n-1}),\]
where $\check{X_i}\in \Gamma(\dev E)$ is defined by $\check{X_i}(\jetd u)=X_i(du)$ for $u\in \Gamma(E)$.
We thus have $\widehat{(\dd,\mu)}=\widehat{\iota_\dd \mu}=\iota_{\hat{\dd}} \hat{\mu}=(\hat{\dd},\hat{\mu}).$
By Theorem \ref{main2}, we have
\begin{eqnarray}\label{dd2}
d\hat{\mu}=\widehat{\jetd \mu}.
\end{eqnarray}
By use of \eqref{dd1} and \eqref{dd2}, we have
\[\widehat{\iota_{\dd} \jetd \mu}=\iota_{\hat{\dd}} \widehat{\jetd\mu}=\iota_{\hat{\dd}}d \hat{\mu},\]
and
\[\widehat{L_\dd \mu}=\widehat{\iota_\dd \jetd \mu}+\widehat{\jetd \iota_\dd \mu}=\iota_{\hat{\dd}}\widehat{\jetd \mu}+d\widehat{\iota_\dd \mu}=\iota_{\hat{\dd}}d\hat{\mu}+d\iota_{\hat{\dd}} \hat{\mu}=L_{\hat{\dd}} \hat{\mu}.\]
This completes the proof.
\end{proof}
As a consequence, we call $n$-omni-Lie algebroids the {\bf pseudo-linearization} of higher analogues of Courant algebroids.
As the linearization, $n$-omni-Lie algebroids inherit many properties of the higher analogues of Courant algebroids.  By  \cite[Theorem 2.2, 2.5]{BS} and Theorem \ref{linear1}, we recover the following result in \cite{BVZ}, where they proved it by direct calculation.

A {\bf Leibniz algebroid}  (\cite{GM,ILM,Lodayalgebroid}) is a vector bundle $E$ with a bracket $\{\cdot,\cdot\}$ on $\Gamma(E)$ and a bundle map $\rho: E\to TM$, called the anchor map, satisfying that
\[\{u,\{v,w\}\}=\{\{u,v\},w\}+\{v,\{u,w\}\},\qquad \{u,fv\}=f\{u,v\}+\rho(u)(f) v,\]
for all $u,v,w\in \Gamma(E)$ and $f\in C^\infty(M)$.
\begin{Cor}
Let $(\dev E\oplus \jet_n E,(\cdot,\cdot),\{\cdot,\cdot\},\rho)$ be  an $n$-omni-Lie algebroid. Then
\begin{itemize}
\item[\rm{(i)}] $(\dev E\oplus \jet_n E,\{\cdot,\cdot\},\mathbbm{j}\circ \rho)$ is a Leibniz algebroid, where $\mathbbm{j}:\dev E\to TM$ is the map in \eqref{Seg:DE};
\item[\rm{(ii)}] $\{e,e\}=\frac{1}{2}\mathbbm{d}(e,e)$;
\item[\rm{(iii)}] $\rho(e_1)(e_2,e_3)=(\{e_1,e_2\},e_3)+(e_2,\{e_1,e_3\})$,
\end{itemize}
for all $e,e_i\in \Gamma(\dev E\oplus \jet_n E)$.
\end{Cor}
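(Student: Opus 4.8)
The plan is to deduce all three statements by transporting the corresponding structural properties of the higher analogue of Courant algebroid $(TE^*\oplus\wedge^n T^*E^*,(\cdot,\cdot),\{\cdot,\cdot\},\rho)$ across the operation-preserving identification supplied by Theorem \ref{linear1}. By \cite[Theorem 2.2, 2.5]{BS}, every higher analogue of Courant algebroid $(\mathcal{T}^n,(\cdot,\cdot),\{\cdot,\cdot\},\rho)$ already satisfies the three axioms in question: the triple $(\mathcal{T}^n,\{\cdot,\cdot\},\rho)$ is a Leibniz algebroid, $\{e,e\}=\frac{1}{2} d(e,e)$, and $\rho(e_1)(e_2,e_3)=(\{e_1,e_2\},e_3)+(e_2,\{e_1,e_3\})$. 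Specializing the base manifold to $E^*$ yields these same identities for all sections of $TE^*\oplus\wedge^n T^*E^*$, with $d$ the de Rham differential on $E^*$.

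By the preceding lemmas of this section, the subspace $\mathfrak{X}^1_{lin}(E^*)\oplus\Omega^n_{lin}(E^*)$ of linear sections is closed under the pairing, the higher Dorfman bracket, the Lie derivative, the contraction and the de Rham differential; hence the three universal identities above hold in particular after restriction to linear sections. Theorem \ref{linear1} then identifies $\bigl(\Gamma(\dev E\oplus\jet_n E),(\cdot,\cdot),\{\cdot,\cdot\},\jetd\bigr)$ with $\bigl(\mathfrak{X}^1_{lin}(E^*)\oplus\Omega^n_{lin}(E^*),(\cdot,\cdot),\{\cdot,\cdot\},d\bigr)$, matching the pairing with the pairing, $[\dd,\ttt]$ with the Schouten bracket, $L_\dd$ with $L_{\hat{\dd}}$, and $\iota_\dd\jetd$ with $\iota_{\hat{\dd}}d$. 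Transporting $\{e,e\}=\frac{1}{2} d(e,e)$ (using that $\jetd$ corresponds to $d$ and that the $\jet_{n-1}E$-valued pairing corresponds to the $\wedge^{n-1}T^*E^*$-valued one) gives statement (ii), and transporting the invariance identity gives (iii); similarly the Leibniz (Jacobiator) identity of (i), being an equality of iterated brackets without any function multiplication, passes through the isomorphism unchanged.

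The remaining point, which I expect to be the main obstacle, is the anchor--Leibniz rule of (i), namely $\{e_1,fe_2\}=f\{e_1,e_2\}+(\mathbbm{j}\circ\rho)(e_1)(f)e_2$ for $f\in C^\infty(M)$. The difficulty is that the higher Courant algebroid is a Leibniz algebroid over $E^*$ with its $C^\infty(E^*)$-module structure and anchor valued in $TE^*$, whereas the $n$-omni-Lie algebroid must be a Leibniz algebroid over $M$ with $C^\infty(M)$-module structure and anchor $\mathbbm{j}\circ\rho$ valued in $TM$. I would resolve this by restricting the $C^\infty(E^*)$-Leibniz rule of $TE^*\oplus\wedge^n T^*E^*$ to pullback functions $p_{E^*}^*f$ with $f\in C^\infty(M)$, which is forced because linear sections are closed under multiplication by $p_{E^*}^*C^\infty(M)$ but \emph{not} under multiplication by all of $C^\infty(E^*)$. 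Two compatibilities then make the restriction land precisely: first, multiplication by $f\in C^\infty(M)$ on $\dev E\oplus\jet_n E$ corresponds under the hat map to multiplication by $p_{E^*}^*f$ on linear sections; second, for $\dd\in\Gamma(\dev E)$ the associated linear vector field satisfies $\hat{\dd}(p_{E^*}^*f)=p_{E^*}^*\bigl(\mathbbm{j}(\dd)(f)\bigr)$, since the projection of $\hat{\dd}$ to the base is exactly the symbol $\mathbbm{j}(\dd)$. Feeding these facts into the restricted rule reproduces the required identity over $M$ and explains the appearance of $\mathbbm{j}$ in the anchor, completing (i).
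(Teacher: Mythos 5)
Your proposal is correct and follows essentially the same route as the paper: the paper derives this corollary precisely by combining \cite[Theorem 2.2, 2.5]{BS} with Theorem \ref{linear1}, transporting the structural identities of $TE^*\oplus\wedge^n T^*E^*$ to the linear sections $\mathfrak{X}^1_{lin}(E^*)\oplus\Omega^n_{lin}(E^*)$, and records no further details. Your careful handling of the anchor--Leibniz rule (restricting to pullback functions $p_{E^*}^*f$ and using that $\hat{\dd}$ covers the symbol $\mathbbm{j}(\dd)$) fills in a step the paper leaves implicit, and it is exactly the right way to do so.
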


\begin{Rm}
When $E=V$, a vector space, the $n$-omni-Lie algebroid for $n\geq 2$ is just $\gl(V)$.
So $n$-omni-Lie algebroids do not include  omni $n$-Lie algebras studied in \cite{LS} as special cases. \end{Rm}

\section{Weinstein-linearization of $TE^*\oplus \wedge^n T^*E^*$ and omni $n$-Lie algebroids}

On the vector bundle $\dev E\oplus \wedge^n \jet E$, we introduce an $E\otimes \wedge^{n-1} \jet E$-valued pairing \begin{eqnarray}\label{pairing}
(\dd+\alpha, \ttt+\beta)=\iota_\dd \beta+\iota_\ttt \alpha,\qquad \forall ~\dd,\ttt\in \Gamma(\mathfrak{D} E),\alpha,\beta\in \Gamma(\wedge^n \mathfrak{J} E),\end{eqnarray}
and a bracket
\begin{eqnarray}\label{Dorf}
\{\dd+\alpha,\ttt+\beta\}:=[X,Y]+L_\dd \beta-\iota_\ttt \mathbbm{d}\alpha,
\end{eqnarray}
where $L_\dd :\Gamma(\wedge^n \jet E)\to \Gamma(\wedge^n \jet E)$ is defined by
\[L_\dd(\alpha_1\wedge\cdots\alpha_n)=\sum_{i=1}^n\alpha_1\wedge\cdots \wedge L_{\dd} \alpha_i\wedge\alpha_{i+1}\wedge\cdots\wedge \alpha_n,\qquad \alpha_i\in \Gamma(\jet E),\]

and $\jetd:\Gamma(\wedge^n \jet E)\to \Gamma(\jet_2 E\otimes \wedge^{n-1} \jet E)$ is defined by
\[\jetd(\alpha_1\wedge\cdots\wedge \alpha_n)=\sum_{i=1}^n (-1)^{i-1} (\jetd \alpha_i)\otimes \alpha_1\wedge\cdots \wedge \hat{\alpha_{i}}\wedge\cdots\wedge \alpha_n,\]
where $\hat{\alpha_i}$ means taking $\alpha_i$ out.
The contraction \[\iota_\ttt: \Gamma(\jet_2 E\otimes \wedge^{n-1} \jet E)\longrightarrow \Gamma(\wedge^n \jet E)\] is defined by
\begin{eqnarray}\label{key}
\iota_\ttt (\omega\otimes \alpha_1\wedge\cdots \wedge \alpha_{n-1}):= \omega(\ttt)\wedge \alpha_1\wedge\cdots \alpha_{n-1}+\sum_{i=1}^n (-1)^i\iota_\ttt  \alpha_i \otimes \omega\wedge\alpha_1\wedge\cdots \wedge \hat{\alpha_i}\wedge \cdots \wedge \alpha_{n-1},
\end{eqnarray} for $\omega\in \Gamma(\jet_2 E)$ and $\alpha_i\in \Gamma(\jet E)$.

The following lemma makes sure that $\iota_\ttt$ is well-defined.
\begin{Lem}
For $\ttt\in \Gamma(\dev E)$ and $\omega\otimes \alpha_1\wedge\cdots \wedge \alpha_{n-1} \in \Gamma(\jet_2 E\otimes \wedge^{n-1} \jet E)$, we have \[\iota_\ttt (\omega\otimes \alpha_1\wedge\cdots \wedge\alpha_{n-1})\in \Gamma(\wedge^n \jet E).\]
\end{Lem}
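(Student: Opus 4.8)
The plan is to prove the closure $\iota_\ttt(\omega\otimes \alpha_1\wedge\cdots \wedge\alpha_{n-1})\in \Gamma(\wedge^n \jet E)$ locally, the statement being $C^\infty(M)$-linear and local. First I would dispose of the easy half of \eqref{key}. Since $\omega\in\Gamma(\jet_2 E)$ means $\mathrm{Im}(\omega_\sharp)\subset \jet E$, i.e.\ $\omega(\dd,\Phi)=\Phi\circ\omega(\dd,\mathrm{Id}_E)$ for all $\Phi\in\End(E)$, the single contraction $\omega(\ttt)=\omega_\sharp(\ttt)$ is a genuine section of $\jet E$. Hence the first term $\omega(\ttt)\wedge\alpha_1\wedge\cdots\wedge\alpha_{n-1}$ is an honest wedge of $n$ sections of $\jet E$ and lies in $\Gamma(\wedge^n\jet E)$ with nothing to check. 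The conceptual organizing principle I would use for the rest is that \eqref{key} is precisely the graded Leibniz expansion of a degree $-1$ contraction $\iota_\ttt$ that sends the degree-$2$ generator $\omega\in\jet_2 E$ to $\omega(\ttt)\in\jet E$ and the degree-$1$ generators $\alpha_i\in\jet E$ to $\iota_\ttt\alpha_i=\langle\alpha_i,\ttt\rangle\in\Gamma(E)$; reading it this way reduces the closure of $\iota_\ttt$ on $\jet_2 E\otimes\wedge^{n-1}\jet E$, via Leibniz and induction on $n$, to the base case, namely the well-definedness of the resulting pairing $\Gamma(E)\otimes\Gamma(\jet_2 E)\to\Gamma(\wedge^2\jet E)$.

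The substance is therefore the second sum in \eqref{key}, which a priori only lives in the larger space $\Gamma(E\otimes\jet_2 E\otimes\wedge^{n-2}\jet E)$, mixing the genuinely second-order datum $\omega$ with the $E$-valued contractions $\iota_\ttt\alpha_i$. To show that the whole expression descends to $\wedge^n\jet E$, I would fix a linear connection on $E$, use it to split the Atiyah sequence \eqref{Seg:DE} and the jet sequences \eqref{Seq:JetE} and \eqref{higher jet}, and write local normal forms: each $\dd,\ttt$ splits into a symbol part and an $\End(E)$ part, each $\alpha_i$ into its projection in $E$ and its $\Hom(TM,E)$-component, and $\omega$ into the components $\omega(\partial_a,\partial_b)\in E$ and $\omega(\partial_a,\mathrm{Id}_E)\in E$, with the defining relation forcing $\omega(\Phi,\Psi)=0$ and $\omega(\partial_a,\Phi)=\Phi\circ\omega(\partial_a,\mathrm{Id}_E)$. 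Substituting into \eqref{key} and collecting terms, the claim reduces to verifying that the result has the local shape of a $\wedge^n\jet E$-section, that is, an alternating product of quantities each of which is first order in the jet variables, as in the analogue of \eqref{linear n-form}.

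I expect the crux, and the main obstacle, to be exactly the cancellation in this base case. The potential obstruction is a contribution in which an endomorphism $\Phi\in\End(E)$ would have to act on the symbol $\omega(\partial_a,\mathrm{Id}_E)$, which varies with the base direction $\partial_a$, rather than on a fixed jet projection; such a term cannot by itself come from $\wedge^2\jet E$. The heart of the argument is to show that, once the graded signs $(-1)^i$ and the skew-symmetrization over the $\alpha_i$ are accounted for, these anomalous contributions cancel against the corresponding pieces of the expansion of $\omega(\ttt)\wedge\alpha_1\wedge\cdots$, so that the defining property of $\jet_2 E$ forces $\iota_\ttt$ to land in $\wedge^2\jet E$; this is the delicate bookkeeping step. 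Finally I would record that the resulting section is independent of the auxiliary connection and is $C^\infty(M)$-bilinear, hence compatible with the relations defining $\Gamma(\jet_2 E\otimes\wedge^{n-1}\jet E)$, which yields the asserted well-definedness of $\iota_\ttt$.
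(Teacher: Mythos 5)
Your first paragraph and your reduction are exactly right, and they agree with the paper's strategy: the term $\omega(\ttt)\wedge\alpha_1\wedge\cdots\wedge\alpha_{n-1}$ is unproblematic because $\omega_\sharp(\ttt)\in\Gamma(\jet E)$ by the defining property of $\jet_2 E$, so the lemma reduces to showing that each $\iota_\ttt\alpha_i\otimes\omega\in\Gamma(E\otimes\jet_2 E)$ can be regarded as an element of $\Gamma(\wedge^2\jet E)$ --- this is precisely the inclusion \eqref{key1} in the paper. The gap is that you never prove this inclusion, and the mechanism you designate as ``the heart of the argument'' cannot work. You claim the second sum contains anomalous contributions that ``cannot by themselves come from $\wedge^2\jet E$'' and that these must cancel against pieces of the expansion of $\omega(\ttt)\wedge\alpha_1\wedge\cdots\wedge\alpha_{n-1}$. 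But you established in your own first paragraph that this first term already lies in $\Gamma(\wedge^n\jet E)$; subtracting it, the total lies in $\Gamma(\wedge^n\jet E)$ if and only if the second sum does on its own. Hence no cross-term cancellation can occur or help, and a proof attempt organized around finding one would stall.

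What is actually true --- and what the paper proves --- is that the inclusion holds term by term, with no cancellation and no auxiliary connection. The paper's route is to transport everything to $E^*$ via Theorem \ref{main2}: $\Gamma(\jet E)\cong\Omega^1_{lin}(E^*)$, $\Gamma(\jet_2 E)\cong\Omega^2_{lin}(E^*)$, and $\Gamma(E)$ is the space of fiberwise linear functions on $E^*$. In coordinates $(x^i,u^j)$ on $E^*$, a linear $2$-form is
\[
\Lambda=\tfrac{1}{2}\Lambda_{i_1,i_2,j}(x)\,u^j\,dx^{i_1}\wedge dx^{i_2}+\lambda_{i,j}(x)\,dx^i\wedge du^j,
\]
and multiplying it by a linear function $f_k(x)u^k$ visibly yields a combination of wedges of two linear $1$-forms (terms of type $(u^kdx^{i_1})\wedge(u^jdx^{i_2})$ and $(u^kdx^{i})\wedge du^j$), i.e.\ an element of $\wedge^2\Omega^1_{lin}(E^*)\cong\Gamma(\wedge^2\jet E)$; this is \eqref{key1}, and the lemma follows by wedging with the remaining $\hat{\alpha}$'s. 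Your connection-splitting normal forms could be made to do the same job, but only if you drop the cancellation idea and verify directly that each $\iota_\ttt\alpha_i\otimes\omega$ by itself has the local shape of a section of $\wedge^2\jet E$: the component you fear, where an endomorphism would act on the varying symbol $\omega(\partial_a,\mathrm{Id}_E)$, simply never appears in the product of a fiberwise linear function with a linear $2$-form.
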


\begin{proof}
Let us prove this result by using Theorem \ref{main2}.
With respect to the relation between $\Omega^2_{lin}(E^*)$ and $\wedge^2 \Omega^1_{lin}(E^*)$, we have the following assertion:
\begin{eqnarray}\label{natural inclusion}
\Gamma(E)\otimes \Omega^2_{lin}(E^*) \hookrightarrow \wedge^2 \Omega^1_{lin}(E^*).
\end{eqnarray}
In fact,
it is direct from the formulas in local coordinates. Let $(x^i, u^j)$ be a local coordinate on $E^*$, where $x^i$ is the coordinate function on the base manifold $M$ and $u^j$ is the coordinate function on the fibers. Then  $S\in \Omega^1_{lin}(E^*)$ is of the form \[S=S_{i,j}(x) u^j dx^i+s_j(x)du^j,\] so a section $T\in \wedge^2 \Omega^1_{lin}(E^*)$ takes the form
\[T=\frac{1}{2}T_{j_1,j_2,i_1,i_2}(x) u^{j_1}u^{j_2} dx^{i_1}\wedge dx^{i_2}+\frac{1}{2}t_{j_1,j_2}(x)du^{j_1}\wedge du^{j_2}+t'_{j_1,j_2,i}(x) u^{j_1} dx^i\wedge du^{j_2}.
\]And a section $\Lambda\in \Omega^2_{lin}(E^*)$ has the form \[\Lambda=\frac{1}{2}\Lambda_{i_1,i_2,j}(x) u^j dx^{i_1}\wedge dx^{i_2}+\lambda_{i,j}(x)dx^i\wedge du^j.\]
Comparing the above two formulas, we get \eqref{natural inclusion}.

Thus, by Theorem \ref{main2}, we further get
\begin{eqnarray}\label{key1}
\Gamma(E\otimes \jet_2 E)\hookrightarrow \Gamma(\wedge^2 \jet E).
\end{eqnarray} This implies that $\iota_\ttt \alpha_i \otimes \omega$ in \eqref{key} belongs to $\Gamma(\wedge^2 \jet E)$.
So both of the two terms in the right hand side of \eqref{key}  belong to $\Gamma(\wedge^n \jet E)$. We obtain $\iota_\ttt (\omega\otimes \alpha_1\wedge\cdots \wedge\alpha_{n-1})\in \Gamma(\wedge^n \jet E)$.
\end{proof}
Based on this lemma, for $\ttt\in \Gamma(\dev E)$ and $\alpha\in \Gamma(\wedge^n \jet E)$, we have $\iota_\ttt \jetd \alpha\in \Gamma(\wedge^n \jet E)$. Since $L_\tau \alpha\in \Gamma(\wedge^n \jet E)$ by definition, we further get $\jetd \iota_\ttt \alpha\in \Gamma(\wedge^n \jet E)$.

\begin{Def}\label{defi:omninLie}
The {\bf omni $n$-Lie algebroid}\footnote{By personal communication with Luca Vitagliano, we learned that this definition was also given in their original version of \cite{BVZ}, but not appeared in the published version. } associated to a vector bundle $E$ is a  quadruple $(\mathfrak{D} E\oplus \wedge^n \mathfrak{J} E,(\cdot,\cdot),\{\cdot,\cdot\},\rho)$, where $\rho$ is the anchor map \[\rho:\dev E\oplus \wedge^n \jet E\to \dev E,\qquad \rho(X+\alpha)=X,\]
 the $E\otimes  \wedge^{n-1} \jet E$-valued pairing $(\cdot,\cdot)$ and the  bracket $\{\cdot,\cdot\}$ are given by \eqref{pairing} and \eqref{Dorf} respectively.
 \end{Def}

\subsection{Weinstein-linearization of higher analogues of Courant algebroids $TE^*\oplus \wedge^n T^*E^*$}

In this subsection, we show that the omni $n$-Lie algebroid given in the last subsection is certain linearization of the higher analogue of Courant algebroids $T E^*\oplus \wedge^n T^* E^*$.

 By Theorem \ref{main2}, we have the isomorphism $\Gamma(\wedge^n \jet E)\cong \wedge^n \Omega^1_{lin}(E^*), \alpha\mapsto \hat{\alpha}$, where
 \[\hat{\alpha}=\hat{\alpha_1}\wedge\cdots\wedge \hat{\alpha_n}\]
 if $\alpha=\alpha_1\wedge\cdots \wedge\alpha_n$ for $\alpha_i\in \Gamma(\jet E)$. Here $\hat{\alpha_i}\in \Omega^1_{lin}(E^*)$. \begin{Thm}\label{main3}
The omni $n$-Lie algebroid $\mathfrak{D} E\oplus \wedge^n \mathfrak{J} E$ associated to a vector bundle $E$ is the linearization of the higher analogue of  Courant algebroids $T E^*\oplus \wedge^n T^* E^*$ by restricting to $\mathfrak{X}_{lin}^1(E^*)\oplus \wedge^n \Omega_{lin}^1(E^*)$. Explicitly, we have
\begin{eqnarray*}
\widehat{(\dd,\alpha)}&=&(\hat{\dd},\hat{\alpha});\\
\widehat{[\dd,\ttt]}&=&[\hat{\dd},\hat{\ttt}]_S;\\
\widehat{L_\dd \alpha}&=&L_{\hat{\dd}}\hat{\alpha};\\
\widehat{\iota_\dd \jetd\alpha}&=&\iota_{\hat{\dd}} d\hat{\alpha},
\end{eqnarray*}
for $\dd,\ttt\in \Gamma(\dev E)$ and $\alpha\in \Gamma(\wedge^n \jet E)$.

\end{Thm}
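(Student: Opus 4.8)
The plan is to follow the template of the proof of Theorem~\ref{linear1}: reduce each of the four identities to its degree-one version, supplied by Theorems~\ref{main1} and~\ref{main2}, and then propagate it across wedge products using that every operation in sight is a graded derivation with matching Koszul signs. Throughout I write $\alpha=\alpha_1\wedge\cdots\wedge\alpha_n$ with $\alpha_i\in\Gamma(\jet E)$, so that $\hat\alpha=\hat{\alpha_1}\wedge\cdots\wedge\hat{\alpha_n}$ with each $\hat{\alpha_i}\in\Omega^1_{lin}(E^*)$; the isomorphism $\Gamma(\wedge^n\jet E)\cong\wedge^n\Omega^1_{lin}(E^*)$ is then the $n$-th exterior power of the degree-one isomorphism of Theorem~\ref{main2}. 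The bracket identity $\widehat{[\dd,\ttt]}=[\hat\dd,\hat\ttt]_S$ involves only $\dev^\bullet$ and is exactly Theorem~\ref{main1}, so no new argument is needed there.

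For the pairing it suffices, just as for \eqref{dd1}, to prove $\widehat{\iota_\dd\alpha}=\iota_{\hat\dd}\hat\alpha$, after which $\widehat{(\dd,\alpha)}=\widehat{\iota_\dd\alpha}=\iota_{\hat\dd}\hat\alpha=(\hat\dd,\hat\alpha)$. On a single factor the $E$-valued pairing $\langle\alpha_i,\dd\rangle\in\Gamma(E)$, read as a linear function on $E^*$, coincides with $\hat{\alpha_i}(\hat\dd)=\iota_{\hat\dd}\hat{\alpha_i}$; this is immediate from the defining relation $\check{X}(\jetd\phi)=X(d\phi)$ of the degree-one hat maps. Since $\iota_\dd$ on $\wedge^\bullet\jet E$ and the interior product $\iota_{\hat\dd}$ on $E^*$ are both degree $-1$ graded derivations, expanding $\iota_\dd\alpha=\sum_i(-1)^{i-1}\langle\alpha_i,\dd\rangle\,(\alpha\text{ with }\alpha_i\text{ deleted})$ and hatting term by term reproduces $\iota_{\hat\dd}\hat\alpha$; here the $E\otimes\wedge^{n-1}\jet E$-value on the left is read, under hat, as a linear function times an $(n-1)$-form on $E^*$, matching the $\wedge^{n-1}T^*E^*$-valued pairing on the higher Courant algebroid side. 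The same derivation argument, applied with the contraction \eqref{key}, extends this to $\widehat{\iota_\dd\beta}=\iota_{\hat\dd}\hat\beta$ for $\beta$ in the mixed space $\jet_2 E\otimes\wedge^{n-1}\jet E$, which is what the bracket term $\iota_\dd\jetd\alpha$ requires.

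The decisive and genuinely new step is $\widehat{\jetd\alpha}=d\hat\alpha$, delicate because $\jetd$ sends $\wedge^n\jet E$ into the mixed space $\Gamma(\jet_2 E\otimes\wedge^{n-1}\jet E)$ rather than into $\wedge^{n+1}\jet E$. I would apply Theorem~\ref{main2} at degree one to get $\widehat{\jetd\alpha_i}=d\hat{\alpha_i}\in\Omega^2_{lin}(E^*)$, and then use the inclusion \eqref{key1}, $\Gamma(E\otimes\jet_2 E)\hookrightarrow\Gamma(\wedge^2\jet E)$, to identify the hat of each summand $(\jetd\alpha_i)\otimes(\alpha\text{ with }\alpha_i\text{ deleted})$ with $d\hat{\alpha_i}$ wedged into the product of the remaining $\hat{\alpha_j}$. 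Since $d$ is a graded derivation on $E^*$, the differential $d\hat\alpha$ expands into precisely this sum, and the signs $(-1)^{i-1}$ already built into the definition of $\jetd$ on $\wedge^n\jet E$ are exactly those produced by $d$ on a decomposable form; matching term by term yields the identity. Verifying that \eqref{key1} turns the mixed tensor into an honest wedge with the correct Koszul sign is the main obstacle. The remaining two identities then close exactly as in Theorem~\ref{linear1}: combining $\widehat{\iota_\dd\alpha}=\iota_{\hat\dd}\hat\alpha$ and $\widehat{\jetd\alpha}=d\hat\alpha$ with the Cartan formula $L_\dd=\iota_\dd\jetd+\jetd\iota_\dd$ on $\wedge^n\jet E$ (the derivation extension of the Cartan identity in the jet complex of $\dev E$) and its de Rham counterpart $L_{\hat\dd}=\iota_{\hat\dd}d+d\iota_{\hat\dd}$ on $E^*$ gives $\widehat{\iota_\dd\jetd\alpha}=\iota_{\hat\dd}d\hat\alpha$ and $\widehat{L_\dd\alpha}=L_{\hat\dd}\hat\alpha$.
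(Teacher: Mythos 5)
Your proposal is correct and takes essentially the same approach as the paper: the paper's own proof of Theorem \ref{main3} consists of the single remark that one should follow the method of Theorem \ref{linear1} using Theorems \ref{main1} and \ref{main2}, with all details omitted. Your write-up is exactly that execution — reduction of each identity to its degree-one case, propagation across wedge products by graded-derivation/Koszul-sign arguments (with the inclusion \eqref{key1} handling the mixed space $\jet_2 E\otimes\wedge^{n-1}\jet E$), and the Cartan-formula argument for the Lie-derivative identity, mirroring the chain $\widehat{L_\dd \mu}=\widehat{\iota_\dd \jetd \mu}+\widehat{\jetd \iota_\dd \mu}=\iota_{\hat{\dd}}d\hat{\mu}+d\iota_{\hat{\dd}} \hat{\mu}=L_{\hat{\dd}} \hat{\mu}$ used in the proof of Theorem \ref{linear1}.
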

\begin{proof}
Based on Theorem \ref{main1} and \ref{main2}, following the same manner as in Theorem \ref{linear1}, we could get this proof. We omit the details here.
\end{proof}

By Theorem \ref{main3} and the properties of higher analogues of Courant algebroids $TE^*\oplus \wedge^n T^*E^*$, we get the following relations.
\begin{Pro}
The omni $n$-Lie algebroid $(\dev E\oplus \wedge^n \jet E,(\cdot,\cdot),\{\cdot,\cdot\},\rho)$ has the properties:
\begin{itemize}
\item{ } $\{e_1,\{e_2,e_3\}\}=\{\{e_1,e_2\},e_3\}+\{e_2,\{e_1,e_3\}\}$;
\item{ } $\rho(\{e_1,e_2\})=[\rho(e_1),\rho(e_2)]$;
\item{ } $\{e_1,fe_2\}=f\{e_1,e_2\}+(\mathbbm{j}\circ \rho)(e_1)f e_2$;
\item{ } $\{e, e\}=\frac{1}{2} \mathbbm{d}(e,e)$;
\item{ } $\rho(e_1)(e_2,e_3)= (\{e_1,e_2\},e_3)+(e_2,\{e_1,e_3\})$,
\end{itemize}
for all $e_1,e_2,e_3\in \Gamma(\dev E\oplus \wedge^n \jet E)$,
where $\mathbbm{j}:\dev E\to TM$ is the map in \eqref{Seg:DE} and $\mathbbm{d}:\Gamma(E\otimes \wedge^{n-1} \jet E)\to \Gamma(\wedge^n \jet E)$ is defined by the map
\[\jetd(u\otimes \alpha_1\wedge \cdots \wedge \alpha_{n-1})=\jetd u \wedge \alpha_1\wedge \cdots \wedge \alpha_{n-1}+\sum_{i=1}^{n-1} (-1)^{i-1} u\otimes \mathbbm{d} \alpha_i \wedge \alpha_1\wedge \cdots \hat{\alpha_i}\wedge \cdots \wedge \alpha_{n-1},\] for $u\in \Gamma(E)$ and $\alpha_i\in \Gamma(\jet E)$, which is well-defined by \eqref{key1}.
\end{Pro}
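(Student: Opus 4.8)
The plan is to deduce all five identities as direct consequences of Theorem \ref{main3}, which exhibits the hat map $e\mapsto\hat e$ as an isomorphism from $\Gamma(\dev E\oplus\wedge^n\jet E)$ onto $\mathfrak{X}^1_{lin}(E^*)\oplus\wedge^n\Omega^1_{lin}(E^*)$ intertwining the anchor, the pairing, and the three building blocks of the bracket. First I would assemble these building blocks into a single homomorphism statement: since the bracket on the omni $n$-Lie algebroid is $\{\dd+\alpha,\ttt+\beta\}=[\dd,\ttt]+L_\dd\beta-\iota_\ttt\jetd\alpha$ and the higher Dorfman bracket \eqref{HA} on $\mathcal{T}^n=TE^*\oplus\wedge^nT^*E^*$ has the same shape, combining $\widehat{[\dd,\ttt]}=[\hat\dd,\hat\ttt]_S$, $\widehat{L_\dd\beta}=L_{\hat\dd}\hat\beta$ and $\widehat{\iota_\ttt\jetd\alpha}=\iota_{\hat\ttt}d\hat\alpha$ yields $\widehat{\{e_1,e_2\}}=\{\hat e_1,\hat e_2\}$. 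Together with $\widehat{(e_1,e_2)}=(\hat e_1,\hat e_2)$ and the compatibility of $\rho$ with the base projection $\mathbbm{j}$, this shows that hat is an isomorphism of the full quadruple structure onto the restriction of $\mathcal{T}^n$ to linear sections.

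Next I would invoke the fact that $\mathcal{T}^n$ is a higher analogue of a Courant algebroid in the sense of \cite{BS,Zambon}: the five displayed identities are exactly the Courant-like axioms that $(\mathcal{T}^n,(\cdot,\cdot),\{\cdot,\cdot\},\rho)$ satisfies on \emph{all} of its sections, with $\mathbbm{d}=d$ the de Rham differential on the $\wedge^{n-1}T^*E^*$-valued pairing, so that e.g. $\{e,e\}=d\iota_X\alpha=\tfrac12 d(e,e)$ for $e=X+\alpha$. The crucial point enabling the transfer is that the linear sections form a subspace closed under all the operations: this is precisely what the preceding results guarantee, namely that linearity is preserved by the Schouten bracket (Theorem \ref{main1}), the de Rham differential (Theorem \ref{main2}), and by $\iota_{\mathfrak{X}^1_{lin}(E^*)}$ and $L_{\mathfrak{X}^1_{lin}(E^*)}$. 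Hence each of the five identities, holding for arbitrary sections of $\mathcal{T}^n$, holds in particular when all arguments are linear, and applying the inverse isomorphism then reproduces the corresponding identity for $\dev E\oplus\wedge^n\jet E$.

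The main bookkeeping obstacle will be matching the value spaces of the pairing and of the differential in properties (iv) and (v). On the omni side the pairing is $E\otimes\wedge^{n-1}\jet E$-valued and $\mathbbm{d}\colon\Gamma(E\otimes\wedge^{n-1}\jet E)\to\Gamma(\wedge^n\jet E)$ is the composite built from $\jetd$, whereas on the $\mathcal{T}^n$ side the pairing is $\wedge^{n-1}T^*E^*$-valued and the differential is the ordinary de Rham $d$. I would resolve this by checking that under hat the contraction $\iota_\ttt$ of \eqref{key}--\eqref{key1} corresponds to the interior product $\iota_{\hat\ttt}$ of linear vector fields with linear forms, which identifies $\Gamma(E\otimes\wedge^{n-1}\jet E)$ with $\Omega^{n-1}_{lin}(E^*)$ compatibly, and that the stated formula for $\mathbbm{d}$ on $\Gamma(E\otimes\wedge^{n-1}\jet E)$ is exactly the de Rham differential transported through this identification. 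Once this is in place, properties (iv) and (v) follow as before, while the remaining identities (i)--(iii), being purely bracket and anchor statements, transfer immediately from the homomorphism property established in the first paragraph.
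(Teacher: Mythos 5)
Your proposal is correct and takes essentially the same route as the paper: the paper obtains this proposition precisely by combining Theorem \ref{main3} with the Courant-like axioms that $TE^*\oplus\wedge^n T^*E^*$ satisfies as a higher analogue of Courant algebroids (\cite{BS,Zambon}), transferring them through the linearization isomorphism, exactly as you outline. Your third paragraph on matching the value spaces of the pairing and of $\mathbbm{d}$ merely makes explicit the bookkeeping that the paper leaves implicit in its definition of $\mathbbm{d}$ via \eqref{key1}.
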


\begin{Cor}
We have that $(\dev E\oplus \wedge^n \jet E,\{\cdot,\cdot\},\mathbbm{j}\circ \rho)$ is a Leibniz algebroid,
where $\mathbbm{j}:\dev E\to TM$ is the map in \eqref{Seg:DE}.
\end{Cor}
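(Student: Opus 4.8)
The plan is to observe that the two defining axioms of a Leibniz algebroid are precisely two of the five properties already recorded in the preceding Proposition, so that the Corollary requires no fresh computation. First I would recall the definition: the triple $(\dev E\oplus \wedge^n \jet E, \{\cdot,\cdot\}, \mathbbm{j}\circ\rho)$ is a Leibniz algebroid exactly when (i) the bracket satisfies the left Leibniz identity $\{e_1,\{e_2,e_3\}\}=\{\{e_1,e_2\},e_3\}+\{e_2,\{e_1,e_3\}\}$ for all sections $e_1,e_2,e_3$, and (ii) the bracket obeys the Leibniz rule $\{e_1, f e_2\} = f\{e_1,e_2\} + (\mathbbm{j}\circ\rho)(e_1)(f)\, e_2$ with respect to the anchor $\mathbbm{j}\circ\rho$ into $TM$, for every $f\in C^\infty(M)$.

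Next I would match these two conditions against the Proposition: condition (i) is verbatim the first bullet point, and condition (ii) is verbatim the third bullet point of that Proposition. Since the Proposition is already established — it follows from the Weinstein-linearization Theorem \ref{main3} identifying $(\dev E\oplus \wedge^n \jet E,(\cdot,\cdot),\{\cdot,\cdot\},\rho)$ with the restriction of the higher analogue of Courant algebroids $TE^*\oplus \wedge^n T^*E^*$ to linear sections, together with the corresponding structural identities for that higher Courant structure — the Corollary is immediate.

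The only point deserving a clause of justification is that $\mathbbm{j}\circ\rho$ is a genuine anchor, i.e.\ a bundle map into $TM$: the projection $\rho:\dev E\oplus \wedge^n \jet E\to \dev E$ followed by the symbol map $\mathbbm{j}:\dev E\to TM$ from the Atiyah sequence \eqref{Seg:DE} is $C^\infty(M)$-linear, which is exactly what the derivation rule in (ii) presupposes. I do not expect any genuine obstacle at the level of this statement; the mathematical substance sits entirely in the Proposition, and hence ultimately in Theorem \ref{main3}, while the present result is a repackaging of that content into the terminology of Leibniz algebroids.
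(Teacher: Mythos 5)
Your proposal is correct and is essentially the paper's own argument: the Corollary carries no separate proof precisely because the first and third bullet points of the preceding Proposition are verbatim the two axioms (Leibniz identity and Leibniz rule) defining a Leibniz algebroid with anchor $\mathbbm{j}\circ\rho$, and that Proposition in turn rests on Theorem \ref{main3} together with the structural identities of $TE^*\oplus\wedge^nT^*E^*$. Your added remark that $\mathbbm{j}\circ\rho$ is a genuine bundle map into $TM$ is a harmless (and correct) point of tidiness that the paper leaves implicit.
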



When the vector bundle $E$ is a vector space, denoted by $V$, we have $\dev E=\gl(V)$ and $\jet E=V$. In this case, the Lie derivative  of $\dev E$ on $\wedge^n \jet E$ and the contraction of $\wedge^n \jet E$ by $\dev E$ are:
\begin{eqnarray}\label{V}
L_X: \wedge^n V\to \wedge^n V,\qquad L_X(\alpha_1\wedge\cdots \wedge \alpha_n)=\sum_{i=1}^n \alpha_1\wedge \cdots \wedge X\alpha_i\wedge \cdots \alpha_n,
\end{eqnarray}
and
\begin{eqnarray}\label{V1}
\iota_X:\wedge^n V\to V\otimes \wedge^{n-1} V,\qquad \iota_X(\alpha_1\wedge\cdots \wedge \alpha_n)=\sum_{i=1}^n(-1)^{i-1}X\alpha_i\otimes \alpha_1\wedge\cdots\wedge \hat{\alpha_i}\wedge\cdots \wedge \alpha_n,
\end{eqnarray}
for $X\in \gl(V)$ and $\alpha_i\in V$.

The omni $n$-Lie algebroid for a vector space $V$ is \[(\gl(V)\oplus \wedge^n V,(\cdot,\cdot),\{\cdot,\cdot\}),\]
where the pairing $(\cdot,\cdot)$ takes values in $ V\otimes \wedge^{n-1} V$ and is given by
\[(X+\alpha,Y+\beta)=\iota_X \beta+\iota_Y \alpha,\qquad \forall~ X,Y\in \gl(V), \alpha,\beta\in \wedge^n V,\]
and the bracket $\{\cdot,\cdot\}$ is defined by
\[\{X+\alpha,Y+\beta\}=[X,Y]+L_X \beta.\]
Here $\iota_X \beta$ and $L_X \beta$ are defined by \eqref{V1} and \eqref{V}. This is the omni $n$-Lie algebra introduced in \cite{LS}, which is the base-linearization of the higher analogue of the standard Courant algebroid $TM\oplus \wedge^n T^*M$.

\subsection{Integrable subbundles of omni $n$-Lie algebroids and $n$-Lie algebroids}
The definition of $n$-Lie algebroids, also called Filippov algebroids,  was introduced in \cite{GM}. In this subsection, we show that the graph of $\Pi^\sharp: \wedge^n \jet E\to \dev E$ for $\Pi\in \Gamma(\dev^{n+1} E)$ is an integrable subbundle of the omni $n$-Lie algebroid $\dev E \oplus \wedge^n \jet E$ if and only if it defines an $(n+1)$-Lie algebroid structure on $E$.
\begin{Def}\rm{(\cite{GM})}
An $n$-Lie algebroid is a vector bundle $E$ with a skew-symmetric $n$-bracket on its sections \[[\cdot,\cdots,\cdot]:\Gamma(E)\times \cdots \times \Gamma(E)\to \Gamma(E)\]
satisfying the fundamental  identity:
\begin{eqnarray}\label{fi}
[u_1,\cdots,u_{n-1},[v_1,\cdots,v_n]]=\sum_{i=1}^n[v_1,\cdots,v_{i-1},[u_1,\cdots,u_{n-1},v_i],\cdots,v_n],\qquad \forall~u_i,v_i\in \Gamma(E),
\end{eqnarray}
and a bundle map $\rho: \wedge^{n-1} E\to TM$, called the anchor map, such that the Leibniz rule holds:
\[[u_1,\cdots,fu_n]=f[u_1,\cdots,u_n]+\rho(u_1,\cdots,u_{n-1})(f) u_n, \qquad \forall~u_i\in \Gamma(E),f\in C^\infty(M).\]
\end{Def}

When $E$ is a vector space, this recovers the notion of an {\bf $n$-Lie algebra} ( \cite{Filippov}).
The section space $\Gamma(E)$ of an $n$-Lie algebroid with the $n$-bracket $[\cdot,\cdots,\cdot]$ is an $n$-Lie algebra.

\begin{Rm}An $n$-Lie algebra structure on $V$ gives  rise to a Leibniz algebra structure on $\wedge^{n-1} V$ (\cite{DT}). Applying this to the section space $\Gamma(E)$ of an $n$-Lie algebroid $E$, we obtain a Leibniz algebra structure on
 $\Gamma(\wedge^{n-1} E)$ given by
\[\mathfrak{u}\circ \mathfrak{v}=\sum_{i=1}^{n-1}v_1\wedge\cdots v_{i-1}\wedge [u_1,\cdots,u_{n-1}, v_i]\wedge v_{i+1}\wedge\cdots \wedge v_{n-1},\]
for $\mathfrak{u}=u_1\wedge\cdots \wedge u_{n-1}$ and $\mathfrak{v}=v_1\wedge\cdots\wedge v_{n-1}$. Then we deduce that
the anchor map \[\rho: (\Gamma(\wedge^{n-1} E),\circ)\rightarrow (\mathfrak{X}^1(M), [\cdot,\cdot])\] in the definition of $n$-Lie algebroids is a Leibniz algebra morphism. This can be proved by replacing $v_n$ by $fv_n$ in the fundamental identity \eqref{fi} and then using the Leibniz rule.
This condition was listed in the definition of Filippov  algebroids in \cite{GM}, which is redundant.
\end{Rm}

Let $\Pi\in \Gamma(\dev^{n+1} E)$. By definition, it gives a bundle map $\Pi^\sharp:\wedge^{n} \jet E\to \dev E$. Conversely, a bundle map $\wedge^{n} \jet E\to \dev E$ gives a skew-symmetric map $\wedge^{n+1}\jet E\to E$, if and only if it  is a section of $\dev^{n+1} E$.

Define the graph of $\Pi^\sharp:\wedge^{n} \jet E\to \dev E$ by
\begin{eqnarray}\label{graph of pi}
\mathcal{G}_{\Pi^\sharp}=\{\Pi^\sharp(\alpha)+\alpha;\alpha\in \wedge^{n} \jet E\}\subset \dev E\oplus \wedge^{n} \jet E.
\end{eqnarray}

\begin{Thm}\label{k+1Leibniz}
If $E$ is a vector bundle with $\mathrm{rank} E\geq 2$, then there is a one-one correspondence between $(n+1)$-Lie algebroid structures $(E,[\cdot,\cdots,\cdot], \rho)$ and integrable subbundles $\mathcal{G}_{\Pi^\sharp}$ defined by \eqref{graph of pi} of the omni $n$-Lie algebroid $\dev E\oplus \wedge^n \jet E$ coming from  $\Pi\in \Gamma(\dev^{n+1} E)$.

Moreover, the bracket and anchor are related with $\Pi$  by
\begin{eqnarray}\label{leibniz}
[u_1,\cdots,u_{n+1}]&=&\Pi(\jetd u_1,\jetd u_2,\cdots,\jetd u_{n+1}),\\ \label{leibniz1} \rho(u_1,\cdots,u_{n})&=&\mathbbm{j}(\Pi^\sharp(\jetd u_1,\cdots,\jetd u_{n})),
\end{eqnarray}
where $u_i\in \Gamma(E)$ and $\mathbbm{j}:\dev E\to TM$ is the map in \eqref{Seg:DE}.
\end{Thm}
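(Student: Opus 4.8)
The plan is to route both directions of the correspondence through the single bundle map $\Pi^\sharp:\wedge^n\jet E\to\dev E$, and to reduce integrability of $\mathcal{G}_{\Pi^\sharp}$ to an identity among iterated brackets. First I would record the algebraic data attached to $\Pi$. Setting $D_{u_1,\dots,u_n}:=\Pi^\sharp(\jetd u_1,\dots,\jetd u_n)\in\Gamma(\dev E)$, formula \eqref{leibniz} reads $[u_1,\dots,u_{n+1}]=D_{u_1,\dots,u_n}(u_{n+1})$ and \eqref{leibniz1} reads $\rho(u_1,\dots,u_n)=\mathbbm{j}(D_{u_1,\dots,u_n})$. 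Total skew-symmetry of this bracket is exactly the requirement that $\Pi$ be a section of $\dev^{n+1}E$, i.e. $\mathrm{Im}(\Pi_\sharp)\subset\dev E$, as noted just before the statement. The Leibniz rule is then automatic: since $D_{u_1,\dots,u_n}$ is a covariant differential operator with symbol $\mathbbm{j}(D_{u_1,\dots,u_n})=\rho(u_1,\dots,u_n)$, evaluating on $fu_{n+1}$ and using $\jetd(fu)=\mathrm{d}f\otimes u+f\jetd u$ yields $f[u_1,\dots,u_{n+1}]+\rho(u_1,\dots,u_n)(f)u_{n+1}$. Conversely a skew bracket with the Leibniz rule is first order in each slot, hence factors through $\jetd$ and defines a section of $\dev^{n+1}E$; so $\Pi\leftrightarrow([\cdot,\dots,\cdot],\rho)$ is already a bijection of data, and it only remains to match the fundamental identity \eqref{fi} with integrability.

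Next I would write the integrability criterion. By \eqref{Dorf},
\[\{\Pi^\sharp(\alpha)+\alpha,\Pi^\sharp(\beta)+\beta\}=[\Pi^\sharp(\alpha),\Pi^\sharp(\beta)]+L_{\Pi^\sharp(\alpha)}\beta-\iota_{\Pi^\sharp(\beta)}\jetd\alpha,\]
so $\mathcal{G}_{\Pi^\sharp}$ is closed under $\{\cdot,\cdot\}$ if and only if $[\Pi^\sharp(\alpha),\Pi^\sharp(\beta)]=\Pi^\sharp\big(L_{\Pi^\sharp(\alpha)}\beta-\iota_{\Pi^\sharp(\beta)}\jetd\alpha\big)$ for all $\alpha,\beta$. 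I would test this on the decomposable exact elements $\alpha=\jetd u_1\wedge\cdots\wedge\jetd u_n$ and $\beta=\jetd v_1\wedge\cdots\wedge\jetd v_n$, which generate $\Gamma(\wedge^n\jet E)$ over $C^\infty(M)$ because $\jetd(fu)-f\jetd u=\mathrm{d}f\otimes u$ shows that the $\jetd u$ generate $\Gamma(\jet E)$. On such $\alpha$ two simplifications occur: $\jetd\alpha=0$ since $\jetd\jetd u_i=0$ (the complex $\jetd$ squares to zero), so the contraction term drops; and, using $L_\dd\jetd u=\jetd(\dd u)$, $L_{\Pi^\sharp(\alpha)}\beta=\sum_i\jetd v_1\wedge\cdots\wedge\jetd[u_1,\dots,u_n,v_i]\wedge\cdots\wedge\jetd v_n$.

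Then I would evaluate the remaining identity as an equality of covariant differential operators applied to an arbitrary $w\in\Gamma(E)$. The left-hand commutator gives
\[[D_{u_1,\dots,u_n},D_{v_1,\dots,v_n}](w)=[u_1,\dots,u_n,[v_1,\dots,v_n,w]]-[v_1,\dots,v_n,[u_1,\dots,u_n,w]],\]
while the right-hand side, via the expansion of $L_{\Pi^\sharp(\alpha)}\beta$ above, gives $\sum_{i=1}^n[v_1,\dots,[u_1,\dots,u_n,v_i],\dots,v_n,w]$. Equating the two is precisely the fundamental identity \eqref{fi} for the $(n+1)$-bracket, with $u_1,\dots,u_n$ the acting arguments and $v_1,\dots,v_n,w$ the $n+1$ entries. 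Hence, on generators, the integrability condition holds for all $u_i,v_j,w$ if and only if \eqref{fi} holds, and together with \eqref{leibniz}--\eqref{leibniz1} this is the claimed correspondence.

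The main obstacle is the passage from decomposable exact generators to arbitrary sections: closure of the subbundle $\mathcal{G}_{\Pi^\sharp}$ must hold on all of $\Gamma(\wedge^n\jet E)$. I would dispatch this by a tensoriality argument on the operator identity $[\Pi^\sharp(\alpha),\Pi^\sharp(\beta)]=\Pi^\sharp(L_{\Pi^\sharp(\alpha)}\beta-\iota_{\Pi^\sharp(\beta)}\jetd\alpha)$ itself: both sides are $\mathbbm{R}$-bilinear in $(\alpha,\beta)$, and one checks that replacing $\alpha\mapsto f\alpha$, $\beta\mapsto g\beta$ produces the same non-$C^\infty(M)$-linear correction terms on each side, since $\Pi^\sharp$ is tensorial while $[\cdot,\cdot]$, $L_\dd$, $\iota_\dd$ and $\jetd$ obey matching Leibniz rules over $C^\infty(M)$; the tensorial core then only needs to be verified on the generators above. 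Alternatively, one may transport the whole computation through the Weinstein-linearization isomorphism of Theorem \ref{main3} into $TE^*\oplus\wedge^n T^*E^*$, where closure of the graph of a linear $(n+1)$-vector field is the standard integrability condition. Finally, the hypothesis $\mathrm{rank} E\geq 2$ enters through the splitting of \eqref{higher dev}, ensuring that $\Pi\in\Gamma(\dev^{n+1}E)$ carries both a genuine bundle-map anchor part in $\Hom(\wedge^n E,TM)$ and a bracket part in $\Hom(\wedge^{n+1}E,E)$, so that the graphs $\mathcal{G}_{\Pi^\sharp}$ capture exactly the $(n+1)$-Lie algebroid structures, the rank-one case degenerating to local $(n+1)$-Lie algebras.
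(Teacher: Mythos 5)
Your generator-level computation is correct: on $\alpha=\jetd u_1\wedge\cdots\wedge\jetd u_n$, $\beta=\jetd v_1\wedge\cdots\wedge\jetd v_n$ the closure condition becomes exactly the fundamental identity \eqref{fi}, which (together with the Leibniz rule) extracts the $(n+1)$-bracket from an integrable graph. The gap is in your converse, where you must pass from generators to all of $\Gamma(\wedge^n\jet E)$: the tensoriality claim you rely on is false. Set $T(\alpha,\beta):=[\Pi^\sharp(\alpha),\Pi^\sharp(\beta)]-\Pi^\sharp\bigl(L_{\Pi^\sharp(\alpha)}\beta-\iota_{\Pi^\sharp(\beta)}\jetd\alpha\bigr)$. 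Working in the model $TE^*\oplus\wedge^nT^*E^*$ (to which everything transports by Theorem \ref{main3}), the identities $[fX,Y]=f[X,Y]-(Yf)X$, $L_{fX}\beta=fL_X\beta+df\wedge\iota_X\beta$ and $\iota_Y d(f\alpha)=(Yf)\alpha-df\wedge\iota_Y\alpha+f\iota_Y d\alpha$ give
\[
T(f\alpha,\beta)=f\,T(\alpha,\beta)-\Pi^\sharp\bigl(df\wedge(\iota_{\Pi^\sharp(\alpha)}\beta+\iota_{\Pi^\sharp(\beta)}\alpha)\bigr),
\]
so replacing $\alpha$ by $f\alpha$ does \emph{not} produce matching correction terms on the two sides: the leftover involves the pairing $(\Pi^\sharp(\alpha)+\alpha,\Pi^\sharp(\beta)+\beta)=\iota_{\Pi^\sharp(\alpha)}\beta+\iota_{\Pi^\sharp(\beta)}\alpha$ of the two graph sections. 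For $n=1$ the graph of a bivector is isotropic and this term vanishes identically, which is why the classical Poisson/Dirac tensoriality argument works; for $n\geq2$ the graph $\mathcal{G}_{\Pi^\sharp}$ is not isotropic ($\iota_{\Pi^\sharp(\alpha)}\alpha\neq0$ in general), so $T$ is not tensorial and its vanishing on generators does not formally imply its vanishing on all sections. This is precisely the hard content of the theorem: the fundamental identity is the generator-level statement, and its equivalence with full closure is a genuine theorem about Nambu structures, not Leibniz-rule bookkeeping.

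The paper closes exactly this gap by the route you relegate to an aside: it transports $\Pi$ to the linear $(n+1)$-vector field $\hat{\Pi}$ on $E^*$ via Theorem \ref{main1}, uses Theorem \ref{main3} to identify integrability of $\mathcal{G}_{\Pi^\sharp}$ with closure of the graph of $(-1)^{n+1}\hat{\Pi}^\sharp$ in $TE^*\oplus\wedge^nT^*E^*$, and then invokes \cite[Theorem 1]{BS}, which asserts that this graph is closed under the higher Dorfman bracket if and only if $\hat{\Pi}$ is a Nambu--Poisson structure; linearity of $\hat{\Pi}$ then yields \eqref{leibniz}. If you take the transport route, you must cite (or reprove) that equivalence---it is the external input doing the heavy lifting, and your phrase ``the standard integrability condition'' glosses over it. Separately, your treatment of $\mathrm{rank}\,E\geq2$ is off target: the Leibniz-rule argument only defines $\rho$ on sections, and what actually needs proof is that $\rho$ descends to a bundle map $\wedge^n E\to TM$, i.e.\ the identity \eqref{anchor linear}; the paper proves this by a skew-symmetry argument pairing $df\otimes u_1$ against $df'\otimes u_1'$ with $u_1,u_1'$ linearly independent, which is where the rank hypothesis enters, and whose failure in rank one is exactly why only local $(n+1)$-Lie algebras survive there (Theorem \ref{rank1case}).
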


\begin{proof}
By Theorem \ref{main1}, for $\Pi\in \Gamma(\dev^{n+1} E)$, we have $\hat{\Pi}\in \mathfrak{X}^{n+1}_{lin}(E^*)$.  It is from \cite[Theorem 1]{BS} that $\hat{\Pi}$ is a Nambu-Poisson structure on $E^*$ if and only if the graph of $\hat{\Pi}^{\sharp}:\wedge^n T^* E^*\to TE^*$ is an integrable subbundle of the higher analogue of Courant algebroids $TE^*\oplus \wedge^n T^* E^*$.

By Theorem \ref{main3}, if the graph $\mathcal{G}_{\Pi^\sharp}\subset \dev E\oplus \wedge^n \jet E$ is integrable, then the graph $\mathcal{G}_{(-1)^{n+1}\hat{\Pi}^\sharp}\subset TE^*\oplus \wedge^n T^*E^*$ is also integrable, which is equivalent to that
$(-1)^{n+1}\hat{\Pi}$ is a Nambu-Poisson structure on $E^*$. In particular, this Nambu-Poisson structure is linear, meaning that
\[[u_1,\cdots, u_{n+1}]=(-1)^{n+1}\hat{\Pi}(du_1,\cdots,du_{n+1})=\Pi(\jetd u_1,\cdots,\jetd u_{n+1})\in \Gamma(E).\]
Therefore, we obtain an $(n+1)$-Lie algebra structure on $\Gamma(E)$.
Also, we have
\begin{eqnarray*}
[u_1,\cdots,u_n,fu_{n+1}]&=&\Pi(\jetd u_1,\cdots,f\jetd u_{n+1}+df\otimes u_{n+1})\\ &=&
f[u_1,\cdots,u_{n+1}]+\mathbbm{j}(\Pi^\sharp(\jetd u_1,\cdots,\jetd u_{n}))(f) u_{n+1}\\ &=&f[u_1,\cdots,u_{n+1}]+\rho(u_1,\cdots,u_n)(f) u_{n+1}.
\end{eqnarray*}
We claim that $(E,[\cdot,\cdots,\cdot],\rho)$ is an $(n+1)$-Lie algebroid.
It suffices to check that $\rho:\wedge^n \Gamma(E)\to \mathfrak{X}^1(M)$ defined by \eqref{leibniz1} induces a bundle map from $\wedge^n E$ to $TM$.  This is equivalent to
\begin{eqnarray}\label{anchor linear}
\mathbbm{j}(\Pi^\sharp(df\otimes u_1,\jetd u_2,\cdots, \jetd u_n))=0,\qquad \forall~f\in C^\infty(M), u_i\in \Gamma(E),\end{eqnarray}
as \[\rho(fu_1,\cdots,u_n)=\mathbbm{j}(\Pi^\sharp(df\otimes u_1,\jetd u_2,\cdots, \jetd u_n))+f\rho(u_1,\cdots,u_n).\]
In fact, as $\Pi$ is skew-symmetric, for $df\otimes u_1,df'\otimes u'_1\in \Gamma(T^*M\otimes E)$, we have
\begin{eqnarray*}
( \Pi^\sharp(df\otimes u_1,\jetd u_2,\cdots, \jetd u_n),df'\otimes u'_1)&=&\mathbbm{j} (\Pi^\sharp(df\otimes u_1,\jetd u_2,\cdots, \jetd u_n))(f') u'_1\\ &=&-\mathbbm{j} (\Pi^\sharp(df'\otimes u'_1,\jetd u_2,\cdots, \jetd u_n))(f)u_1.
\end{eqnarray*}
Since $\mathrm{rank} E \geq 2$, we choose $u_1$ and $u'_1$ to be independent. So the coefficients of $u_1$ and $u'_1$ in the above formulas must be zero. Thus
we proved \eqref{anchor linear}. Hence $(E,[\cdot,\cdots,\cdot],\rho)$ is an $(n+1)$-Lie algebroid.

From the above verification, the converse direction also holds.
\end{proof}

\begin{Pro}\label{jetLeibniz}
For $\Pi\in \Gamma(\dev^{n+1} E)$, its graph $\mathcal{G}_{\Pi^\sharp}$ given by \eqref{graph of pi} is an integrable subbundle of the omni $n$-Lie algebroid $\dev E\oplus \wedge^n \jet E$ if and only if
\[\Pi^\sharp[\alpha,\beta]_{\Pi}=[\Pi^\sharp(\alpha),\Pi^\sharp(\beta)],\qquad \forall~\alpha,\beta\in \Gamma(\wedge^n \jet E),\]
where the bracket $[\cdot,\cdot]_\Pi$ on  $\wedge^n \jet E$  is defined as
\[[\alpha,\beta]_\Pi=L_{\Pi^\sharp(\alpha)} \beta-L_{\Pi^\sharp(\beta)} \alpha+\jetd( \Pi^\sharp(\beta),\alpha).\]
Moreover, such an integrable subbundle induces a Leibniz algebroid $(\wedge^n \jet E,[\cdot,\cdot]_\Pi, \mathbbm{j}\circ \Pi^\sharp)$, where $\mathbbm{j}:\dev E\to TM$ is the bundle map in \eqref{Seg:DE}.
\end{Pro}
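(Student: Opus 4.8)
The plan is to expand the omni $n$-Lie algebroid bracket \eqref{Dorf} on two sections of the graph and split the outcome into its $\dev E$- and $\wedge^n\jet E$-components. For $\alpha,\beta\in\Gamma(\wedge^n\jet E)$ the definition \eqref{Dorf} gives
\[\{\Pi^\sharp(\alpha)+\alpha,\ \Pi^\sharp(\beta)+\beta\}=[\Pi^\sharp(\alpha),\Pi^\sharp(\beta)]+\big(L_{\Pi^\sharp(\alpha)}\beta-\iota_{\Pi^\sharp(\beta)}\jetd\alpha\big),\]
so the $\dev E$-part is $[\Pi^\sharp(\alpha),\Pi^\sharp(\beta)]$ and the $\wedge^n\jet E$-part is $L_{\Pi^\sharp(\alpha)}\beta-\iota_{\Pi^\sharp(\beta)}\jetd\alpha$. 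Since $\mathcal{G}_{\Pi^\sharp}$ consists exactly of those elements whose $\dev E$-part is $\Pi^\sharp$ applied to their $\wedge^n\jet E$-part, the graph is involutive if and only if, for all $\alpha,\beta$, the $\dev E$-component equals $\Pi^\sharp$ of the $\wedge^n\jet E$-component.

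First I would identify the $\wedge^n\jet E$-component with $[\alpha,\beta]_\Pi$ by a Cartan-type argument. The identity I need is the magic formula $L_\dd\alpha=\iota_\dd\jetd\alpha+\jetd\iota_\dd\alpha$, which I would not verify by hand but instead read off from the property $\{e,e\}=\tfrac12\jetd(e,e)$ established for the omni $n$-Lie algebroid: applying it to $e=\dd+\alpha$, where $(e,e)=2\iota_\dd\alpha$, yields precisely this formula. Taking $\dd=\Pi^\sharp(\beta)$ and rearranging gives $\iota_{\Pi^\sharp(\beta)}\jetd\alpha=L_{\Pi^\sharp(\beta)}\alpha-\jetd(\Pi^\sharp(\beta),\alpha)$, whence
\[L_{\Pi^\sharp(\alpha)}\beta-\iota_{\Pi^\sharp(\beta)}\jetd\alpha=L_{\Pi^\sharp(\alpha)}\beta-L_{\Pi^\sharp(\beta)}\alpha+\jetd(\Pi^\sharp(\beta),\alpha)=[\alpha,\beta]_\Pi.\]
Combined with the previous paragraph, this shows $\mathcal{G}_{\Pi^\sharp}$ is integrable if and only if $[\Pi^\sharp(\alpha),\Pi^\sharp(\beta)]=\Pi^\sharp[\alpha,\beta]_\Pi$ for all $\alpha,\beta$; both implications come from the same computation, so the equivalence is settled at once.

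For the final assertion I would argue structurally rather than re-check axioms. The omni $n$-Lie algebroid $(\dev E\oplus\wedge^n\jet E,\{\cdot,\cdot\},\mathbbm{j}\circ\rho)$ is a Leibniz algebroid by the Corollary preceding this statement, so any involutive subbundle is a Leibniz subalgebroid. Because $\Pi\in\Gamma(\dev^{n+1}E)$ makes $\Pi^\sharp$ a genuine bundle map, the assignment $\alpha\mapsto\Pi^\sharp(\alpha)+\alpha$ is a $C^\infty(M)$-linear isomorphism $\wedge^n\jet E\xrightarrow{\cong}\mathcal{G}_{\Pi^\sharp}$; transporting the Leibniz structure along it gives the bracket $[\cdot,\cdot]_\Pi$ (the $\wedge^n\jet E$-component computed above) and the anchor $\mathbbm{j}\circ\rho(\Pi^\sharp(\alpha)+\alpha)=\mathbbm{j}\circ\Pi^\sharp(\alpha)$. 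Hence $(\wedge^n\jet E,[\cdot,\cdot]_\Pi,\mathbbm{j}\circ\Pi^\sharp)$ is a Leibniz algebroid, the Jacobi identity and Leibniz rule being inherited verbatim from those of $\{\cdot,\cdot\}$.

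The hard part will be keeping straight the several differentials and contractions in play: the pairing $(\cdot,\cdot)$ valued in $E\otimes\wedge^{n-1}\jet E$, the $\jet_2 E$-valued differential of \eqref{key}, and the two maps written $\jetd$. The key simplification is that I never unwind \eqref{key} directly, since the single identity $\{e,e\}=\tfrac12\jetd(e,e)$ packages the whole Cartan calculus. Should a first-principles check be preferred, the same identity can instead be pulled back from the genuine Cartan formula $L_X=\iota_X d+d\iota_X$ on $E^*$ through the linearization isomorphisms of Theorems \ref{main1} and \ref{main3}.
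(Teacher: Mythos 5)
Your proof is correct, and its core---expanding the Dorfman bracket \eqref{Dorf} on two graph sections and observing that closedness of $\mathcal{G}_{\Pi^\sharp}$ amounts to the $\dev E$-component being $\Pi^\sharp$ of the $\wedge^n\jet E$-component---is exactly the paper's computation. You differ in two places. First, the paper silently identifies $L_{\Pi^\sharp(\alpha)}\beta-\iota_{\Pi^\sharp(\beta)}\jetd\alpha$ with $[\alpha,\beta]_\Pi$; you justify this step by deriving the Cartan identity $L_\dd\alpha=\iota_\dd\jetd\alpha+\jetd\iota_\dd\alpha$ from the property $\{e,e\}=\tfrac12\jetd(e,e)$ applied to $e=\dd+\alpha$, which is legitimate since that property is established (via Theorem \ref{main3}) before this proposition and independently of it. Second, for the ``Moreover'' part the paper asserts that the integrability condition is equivalent to the Jacobi identity for $[\cdot,\cdot]_\Pi$ and then checks the Leibniz rule directly, $[\alpha,f\beta]_\Pi=f[\alpha,\beta]_\Pi+\mathbbm{j}(\Pi^\sharp(\alpha))(f)\beta$; you instead observe that an involutive subbundle of the Leibniz algebroid $(\dev E\oplus\wedge^n\jet E,\{\cdot,\cdot\},\mathbbm{j}\circ\rho)$ is itself a Leibniz algebroid, and transport the structure along the $C^\infty(M)$-linear isomorphism $\alpha\mapsto\Pi^\sharp(\alpha)+\alpha$, under which the bracket becomes $[\cdot,\cdot]_\Pi$ precisely because of the integrability identity. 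Your transport argument yields the Jacobi identity and the Leibniz rule simultaneously without re-verification, and it avoids leaning on the paper's unproved side claim that the anchor condition is \emph{equivalent} to the Jacobi identity (only one direction is needed for the conclusion); the paper's direct check, on the other hand, is more explicit at the level of formulas and records the anchor computation that your argument obtains abstractly.
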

\begin{proof}
By the calculation
\[\{\Pi^\sharp(\alpha)+\alpha,\Pi^\sharp(\beta)+\beta\}=[\Pi^\sharp(\alpha),\Pi^\sharp(\beta)]+L_{\Pi^\sharp(\alpha)} \beta-\iota_{\Pi^\sharp(\beta)} \jetd \alpha,\qquad \forall~\alpha,\beta\in \Gamma(\wedge^n \jet E),\]
we see that $\mathcal{G}_{\Pi^\sharp}$ is closed with respect to the Dorfman bracket \eqref{Dorf} if and only if
 $\Pi^\sharp[\alpha,\beta]_{\Pi}=[\Pi^\sharp(\alpha),\Pi^\sharp(\beta)]$. This is equivalent to that $[\cdot,\cdot]_\Pi$ satisfies the Jacobi identity. It is left to  check the Leibniz rule. For $f\in C^\infty(M)$, we have
\[[\alpha,f\beta]_\Pi=f[\alpha,\beta]_\Pi+(L_{\Pi^\sharp (\alpha)} f) \beta=f[\alpha,\beta]_\Pi+\mathbbm{j}(\Pi^\sharp(\alpha))(f) \beta.\]
So $(\wedge^n \jet E,[\cdot,\cdot]_\Pi, \mathbbm{j}\circ \Pi^\sharp)$ is a Leibniz algebroid.
\end{proof}

For a Lie algebroid $E$, the first jet bundle $\jet E$ has a natural Lie algebroid structure with the bracket such that $[\jetd u_1,\jetd u_2]=\jetd [u_1,u_2]_E$.  Similar to this, we have the following result.

\begin{Pro}
Let  $(E,[\cdot,\cdots,\cdot]_E,\rho_E)$ be an $(n+1)$-Lie algebroid. Then
\begin{itemize}
\item[\rm (1)]  there exists a unique $(n+1)$-Lie algebroid structure $(\jet E,[\cdot,\cdots,\cdot],\rho_{\jet E})$ on $\jet E$ such that \[[\jetd u_1,\cdots,\jetd u_{n+1}]=\jetd [u_1,\cdots,u_n]_E,\qquad \rho_{\jet E}(\jetd u_1,\cdots,\jetd u_n)=\rho_E(u_1,\cdots,u_n);\]
\item[\rm (2)] there exists a unique Leibniz algebroid structure $(\wedge^n \jet E,[\cdot,\cdot],\rho)$ on $\wedge^n \jet E$ such that \[[\jetd u_1\wedge \cdots \wedge\jetd u_n, \jetd v_1\wedge \cdots \wedge\jetd v_n]=\sum_{i=1}^n \jetd v_1\wedge\cdots \jetd[u_1,\cdots,u_n, v_i]_E\wedge\cdots \wedge\jetd v_n,\]
and  $\rho(\jetd u_1\wedge \cdots \wedge\jetd u_n)=\rho_E(u_1,\cdots,u_n)$, for $u_i,v_i\in \Gamma(E)$.
\end{itemize}
\end{Pro}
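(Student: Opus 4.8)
The plan is to build the $(n+1)$-Lie algebroid structure on $\jet E$ directly from the distinguished generators $\jetd u$, and then to deduce the Leibniz algebroid on $\wedge^n\jet E$ as a formal consequence. The starting observation is that $\{\jetd u\mid u\in\Gamma(E)\}$ generates $\Gamma(\jet E)$ as a $C^\infty(M)$-module: given $\mu\in\Gamma(\jet E)$, set $v=\mathbbm{p}(\mu)$ using the jet sequence \eqref{Seq:JetE}, so that $\mu-\jetd v\in\Gamma(\Hom(TM,E))$, and the latter module is spanned by the elements $df\otimes u=\jetd(fu)-f\jetd u$ coming from the Leibniz-type formula for $\jetd$. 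The anchor is the easy half of both claims: I would set $\rho_{\jet E}:=\rho_E\circ\wedge^n\mathbbm{p}$, which is automatically a bundle map $\wedge^n\jet E\to TM$ and which satisfies $\rho_{\jet E}(\jetd u_1,\dots,\jetd u_n)=\rho_E(u_1,\dots,u_n)$ since $\mathbbm{p}(\jetd u)=u$ by \eqref{jetd}.

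For part (1) I would define the skew $(n+1)$-bracket by declaring $[\jetd u_1,\dots,\jetd u_{n+1}]:=\jetd[u_1,\dots,u_{n+1}]_E$ on generators and extending by skew-symmetry together with the Leibniz rule relative to $\rho_{\jet E}$. The real content, and the step I expect to be the main obstacle, is well-definedness: the extension must be consistent with the single relation $\jetd(fu)=df\otimes u+f\jetd u$ among the generators. Concretely, one expands $[\jetd u_1,\dots,\jetd u_n,\jetd(fv)]$ by the generator formula and $[\jetd u_1,\dots,\jetd u_n,f\jetd v]$ by the Leibniz rule; their difference computes the bracket on $df\otimes v$, and using the Leibniz rule of $E$, namely $[u_1,\dots,u_n,fv]_E=f[u_1,\dots,u_n,v]_E+\rho_E(u_1,\dots,u_n)(f)\,v$, together with $\jetd(gw)=dg\otimes w+g\jetd w$, the derivative contributions cancel and the difference collapses to the symbol-free expression
\[
df\otimes[u_1,\dots,u_n,v]_E+d\big(\rho_E(u_1,\dots,u_n)(f)\big)\otimes v,
\]
which depends only on the section $df\otimes v$. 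Carrying out the same check in the remaining slots through skew-symmetry shows that the bracket descends to $\Gamma(\jet E)$.

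With well-definedness in hand, the fundamental identity \eqref{fi} for the jet bracket is verified first on generators, where both of its sides are the images under the linear map $\jetd$ of the corresponding sides of \eqref{fi} for $E$, hence equal; it then propagates to arbitrary sections because the fundamental identity is compatible with the Leibniz rule, every extra symbol term reducing to an identity already established on $E$. Uniqueness is immediate from the generation statement: the values on the generators are forced, and the Leibniz rule determines the bracket everywhere. For part (2) I would not repeat this analysis. Since $\jet E$ is now an $(n+1)$-Lie algebroid, the construction of \cite{DT} recalled in the earlier Remark (sending an $m$-Lie bracket to a Leibniz bracket on $\wedge^{m-1}$) applies with $m=n+1$ and yields a Leibniz algebroid on $\wedge^n\jet E$ with anchor $\rho_{\jet E}$; evaluating its bracket on the decomposables $\jetd u_1\wedge\cdots\wedge\jetd u_n$ and $\jetd v_1\wedge\cdots\wedge\jetd v_n$ and inserting $[\jetd u_1,\dots,\jetd u_n,\jetd v_i]=\jetd[u_1,\dots,u_n,v_i]_E$ from part (1) reproduces the claimed formula, with uniqueness following since these decomposables generate $\Gamma(\wedge^n\jet E)$ over $C^\infty(M)$. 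As a cross-check, one may alternatively route part (2) through Theorem \ref{k+1Leibniz} and Proposition \ref{jetLeibniz}: the given $(n+1)$-Lie algebroid corresponds to $\Pi\in\Gamma(\dev^{n+1}E)$ with integrable graph, and the induced bracket $[\cdot,\cdot]_\Pi$ of Proposition \ref{jetLeibniz}, computed on the same generators via $\Pi^\sharp(\jetd u_1,\dots,\jetd u_n)v=[u_1,\dots,u_n,v]_E$ and $L_\dd\jetd v=\jetd(\dd v)$, agrees with the formula above.
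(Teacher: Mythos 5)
Since the paper's own ``proof'' of this proposition is the single sentence ``The proof is standard. We omit the details,'' there is no argument of the authors to compare yours against, and I assess it on its own terms. Part (1) of your proposal is sound and is the expected jet-prolongation argument: the $\jetd u$ generate $\Gamma(\jet E)$ over $C^\infty(M)$, the anchor $\rho_{\jet E}=\rho_E\circ\wedge^n\mathbbm{p}$ is manifestly a bundle map, the crux is well-definedness of the Leibniz extension, and the value it forces,
\[
[\jetd u_1,\dots,\jetd u_n,df\otimes v]=df\otimes[u_1,\dots,u_n,v]_E+d\bigl(\rho_E(u_1,\dots,u_n)(f)\bigr)\otimes v,
\]
is correct. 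Your assertion that this expression ``depends only on the section $df\otimes v$'' is exactly what must be checked against the relations $d(fg)\otimes v=f\,dg\otimes v+g\,df\otimes v$ and $f\,dg\otimes v=dg\otimes(fv)$; it does hold, by a short computation you state but do not perform. The propagation of \eqref{fi} from generators likewise works, consuming only identities already valid on $E$ (notably the anchor-morphism identity recorded in the paper's remark on Filippov algebroids). For part (1), uniqueness via $C^\infty$-generation is legitimate because skew-symmetry gives a Leibniz rule in every slot.

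Part (2), as you primarily argue it, has a genuine gap. The Daletskii--Takhtajan construction \cite{DT} produces a Leibniz bracket on the exterior power $\wedge^n_{\mathbb{R}}\Gamma(\jet E)$ of the section space, i.e.\ on formal wedges; the proposition needs a bracket on $\Gamma(\wedge^n\jet E)$, the quotient of that space by the $C^\infty(M)$-bilinearity relations, and the DT formula does not descend through this quotient: a Leibniz bracket has no Leibniz rule in its first slot, and the slot-wise DT prescription is incompatible with the relations there. Concretely, for $n=2$ the formal wedges $(a\,\jetd u_1)\wedge\jetd u_2$ and $\jetd u_1\wedge(a\,\jetd u_2)$ represent the same section of $\wedge^2\jet E$, yet feeding them into the DT formula against the generator $\beta=\jetd u_1\wedge\jetd(cu_1)=\jetd u_1\wedge(dc\otimes u_1)$ gives answers that differ by $\rho_E(u_1,u_2)(a)\,\beta$ (up to sign), which is nonzero as soon as $\rho_E(u_1,u_2)(a)\neq 0$ and $dc\neq 0$; so whenever $\rho_E\neq 0$ the DT route defines no bracket on sections at all. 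The repair is precisely what you relegate to a cross-check: Theorem \ref{k+1Leibniz} (rank $\geq 2$; for line bundles Theorem \ref{rank1case}, with $\mathbbm{j}\circ\Pi^\sharp\circ\mathbbm{e}=0$ characterizing the algebroid case) converts the $(n+1)$-Lie algebroid into $\Pi\in\Gamma(\dev^{n+1}E)$ with integrable graph; Proposition \ref{jetLeibniz} then yields a Leibniz algebroid bracket $[\alpha,\beta]_\Pi=L_{\Pi^\sharp(\alpha)}\beta-\iota_{\Pi^\sharp(\beta)}\jetd\alpha$ defined on all sections at once, and your generator computation ($\Pi^\sharp(\jetd u_1,\dots,\jetd u_n)w=[u_1,\dots,u_n,w]_E$, $L_{\dd}\jetd v=\jetd(\dd v)$, $\jetd\jetd u=0$) shows it restricts to the stated formula; this must be the main argument, with DT invoked only as motivation. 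Note finally that your uniqueness claim for (2) inherits the same defect: since no axiom of a Leibniz algebroid controls $[f\mathfrak{a},\mathfrak{b}]$ in terms of $[\mathfrak{a},\mathfrak{b}]$, generation over $C^\infty(M)$ pins the bracket down only when the first entry is an $\mathbb{R}$-combination of wedges of $\jetd$'s, so the claimed uniqueness needs an additional argument that your write-up (and, to be fair, the paper's statement) leaves open.
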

\begin{proof}
The proof is standard. We omit the details.

\end{proof}

Applying Theorem  \ref{k+1Leibniz} and Proposition \ref{jetLeibniz} to the case when $E$ is a vector space $V$, we get a Leibniz algebra $\gl(V)\oplus \wedge^n V$.
Let $\Pi: \wedge^{n+1} V\to V$ be a skew-symmetric linear map. It induces a linear map $\Pi^\sharp: \wedge^n V\to \gl(V)$:\[\Pi^\sharp(\alpha)(\alpha_{n+1})=\Pi(\alpha,\alpha_{n+1}),\qquad \forall~ \alpha\in \wedge^n V,\alpha_{n+1}\in V.\]

\begin{Cor}
Let $\Pi: \wedge^{n+1} V\to V$ be a linear map. Then the following statements are equivalent:
\begin{itemize}
\item[\rm{(1)}]
 the graph $\mathcal{G}_{\Pi^\sharp}\subset \gl(V)\oplus \wedge^n V$ is a sub-Leibniz algebra;
 \item[\rm{(2)}] $V$  with the bracket  $\{\alpha_1,\cdots,\alpha_{n+1}\}=\Pi(\alpha_1,\cdots,\alpha_{n+1})$ is an $(n+1)$-Lie algebra;
\item[\rm{(3)}] $\wedge^n V$ with the bracket $[\alpha,\beta]_\Pi=L_{\Pi^\sharp(\alpha)} \beta$ is a Leibniz algebra;
\item[\rm{(4)}] $\Pi^\sharp[\alpha,\beta]_\Pi=[\Pi^\sharp(\alpha),\Pi^\sharp(\beta)]$, where $[\alpha,\beta]_\Pi:=L_{\Pi^\sharp(\alpha)} \beta$ for $\alpha,\beta\in \wedge^n V$.
\end{itemize}
\end{Cor}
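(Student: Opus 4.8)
The plan is to obtain this corollary by specializing Theorem~\ref{k+1Leibniz} and Proposition~\ref{jetLeibniz} to the case $E=V$, viewed as a vector bundle over a point. First I would record the simplifications available here: $M=\pt$, so $TM=0$, $\dev V=\gl(V)$, $\jet V=V$ and $\dev^{n+1}V=\Hom(\wedge^{n+1}V,V)$, so that $\Pi$ is precisely a skew map $\wedge^{n+1}V\to V$; moreover the anchor $\mathbbm{j}\circ\Pi^\sharp$ is valued in $TM=0$ and hence vanishes (so the Leibniz rules are automatic and only the Jacobi/Leibniz identities matter), and $\jet_2V=0$, whence $\jetd\alpha=0$ for every $\alpha\in\wedge^nV$. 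Using $\jet_2V=0$ I would also check that the bracket of Proposition~\ref{jetLeibniz} collapses to the one used in (3)--(4): in the vector space case the contraction \eqref{key} followed by $\jetd$ is the wedge map, so $\jetd(\Pi^\sharp(\beta),\alpha)=L_{\Pi^\sharp(\beta)}\alpha$ by \eqref{V}--\eqref{V1}, and this cancels the middle term $-L_{\Pi^\sharp(\beta)}\alpha$, leaving $[\alpha,\beta]_\Pi=L_{\Pi^\sharp(\alpha)}\beta$.

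For (1)$\Leftrightarrow$(4) I would use that the omni $n$-Lie algebra bracket reads $\{X+\alpha,Y+\beta\}=[X,Y]+L_X\beta$, so that
\[
\{\Pi^\sharp(\alpha)+\alpha,\Pi^\sharp(\beta)+\beta\}=[\Pi^\sharp(\alpha),\Pi^\sharp(\beta)]+L_{\Pi^\sharp(\alpha)}\beta .
\]
Its $\wedge^nV$-component is $[\alpha,\beta]_\Pi=L_{\Pi^\sharp(\alpha)}\beta$, so the element lies in $\mathcal{G}_{\Pi^\sharp}$ exactly when $[\Pi^\sharp(\alpha),\Pi^\sharp(\beta)]=\Pi^\sharp([\alpha,\beta]_\Pi)$, i.e. when (4) holds; this is Proposition~\ref{jetLeibniz} read over a point. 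For (2)$\Leftrightarrow$(4) I would unwind the fundamental identity \eqref{fi} of the $(n+1)$-bracket: writing $X_{\mathfrak u}=\Pi^\sharp(u_1\wedge\cdots\wedge u_n)$ and $\beta=v_1\wedge\cdots\wedge v_n$, and using $[u_1,\cdots,u_n,w]=\Pi^\sharp(u_1\wedge\cdots\wedge u_n)(w)$, evaluating \eqref{fi} on a final argument $w$ shows that \eqref{fi} is equivalent to $[X_{\mathfrak u},X_\beta]=\Pi^\sharp(L_{X_{\mathfrak u}}\beta)$ on decomposables, hence by multilinearity to (4). I would emphasize that the hypothesis $\mathrm{rank}\,E\geq2$ of Theorem~\ref{k+1Leibniz} is superfluous here, since it was only needed there to make the anchor a genuine bundle map, and the anchor is now forced to vanish.

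For (3)$\Leftrightarrow$(4) I would use that $L\colon\gl(V)\to\End(\wedge^nV)$ is a Lie-algebra representation, $[L_X,L_Y]=L_{[X,Y]}$. With $[\alpha,\beta]_\Pi=L_{\Pi^\sharp(\alpha)}\beta$ the Leibniz identity becomes $L_{[\Pi^\sharp(\alpha),\Pi^\sharp(\beta)]}\gamma=L_{\Pi^\sharp([\alpha,\beta]_\Pi)}\gamma$ for all $\gamma\in\wedge^nV$, so (4)$\Rightarrow$(3) is immediate. The converse (3)$\Rightarrow$(4) is the only delicate point, and I expect it to be the main obstacle: it requires that the vanishing of $L_Z$ on all of $\wedge^nV$, with $Z=[\Pi^\sharp(\alpha),\Pi^\sharp(\beta)]-\Pi^\sharp([\alpha,\beta]_\Pi)$, forces $Z=0$, i.e. that the representation $L$ is faithful. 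The clean way to dispose of this is to observe that faithfulness holds automatically in every nontrivial range: if $1\le n\le\dim V-1$ the action of $\gl(V)$ on $\wedge^nV$ is faithful (any $E_{ij}$ moves a decomposable $e_j\wedge e_S$ with $S\subset\{1,\dots,\dim V\}\setminus\{i,j\}$, which exists precisely because $n-1\le\dim V-2$), while if $n\ge\dim V$ one has $\wedge^{n+1}V=0$, so $\Pi=0$ and all four statements hold trivially. Thus the apparent obstruction never actually arises, and combining the three equivalences completes the proof.
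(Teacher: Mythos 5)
Your overall strategy matches the paper's: the paper obtains this corollary precisely by specializing Theorem \ref{k+1Leibniz} and Proposition \ref{jetLeibniz} to $E=V$, and remarks that $(1)\Leftrightarrow(2)$ is \cite[Theorem 3.4]{LS}. However, you actually re-prove the individual equivalences at the vector-space level rather than invoking those results as black boxes, and this buys several things. Your direct unwinding of the fundamental identity \eqref{fi} for $(2)\Leftrightarrow(4)$ avoids the detour through linear Nambu--Poisson structures on $E^*$ (which is how Theorem \ref{k+1Leibniz} is proved, via \cite[Theorem 1]{BS}) and avoids the external citation to \cite{LS}; your observation that the rank hypothesis is vacuous over a point (anchor valued in $TM=0$) is exactly right. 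Your computation that the bracket of Proposition \ref{jetLeibniz} collapses to $[\alpha,\beta]_\Pi=L_{\Pi^\sharp(\alpha)}\beta$ via $\jetd(\Pi^\sharp(\beta),\alpha)=L_{\Pi^\sharp(\beta)}\alpha$ checks out against \eqref{V}--\eqref{V1}. Most notably, your treatment of $(3)\Rightarrow(4)$ addresses a point the paper glosses over: in the proof of Proposition \ref{jetLeibniz} the assertion that $\Pi^\sharp[\alpha,\beta]_\Pi=[\Pi^\sharp(\alpha),\Pi^\sharp(\beta)]$ ``is equivalent to'' the Jacobi identity for $[\cdot,\cdot]_\Pi$ is stated without proof, and the nontrivial direction is exactly the injectivity issue you isolate.

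There is one incomplete step: your justification of faithfulness of the $\gl(V)$-action on $\wedge^n V$ for $1\le n\le \dim V-1$. Checking that each basis element $E_{ij}$ acts nontrivially does not by itself rule out a nonzero linear combination $X=\sum x_{ij}E_{ij}$ acting by zero, since the kernel of a representation need not be spanned by basis elements it fails to contain. The claim is true and the fix is standard; two ways to finish: (a) the kernel of $L$ is an ideal of $\gl(V)$, and the ideals of $\gl(V)$ in characteristic zero are $0$, the scalars, $\mathfrak{sl}(V)$ and $\gl(V)$; since $L_{\mathrm{Id}_V}=n\cdot\mathrm{id}\neq 0$ and your $E_{ij}$ computation shows $\mathfrak{sl}(V)\not\subset\ker L$, the kernel is $0$; or (b) a weight argument: $L_{E_{ij}}$ ($i\neq j$) shifts the weight decomposition of $\wedge^n V$ by the pairwise-distinct nonzero weights $\epsilon_i-\epsilon_j$, so $L_X=0$ forces $x_{ij}L_{E_{ij}}=0$ for all $i\neq j$, whence $x_{ij}=0$ by your computation, and the diagonal part is killed by comparing $n$-subsets differing in one index (this uses $n\le\dim V-1$ again). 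With either completion your argument is correct and, in fact, supplies a justification for a step the paper itself leaves implicit.
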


The equivalence of $(1)$ and $(2)$ is exactly  \cite[Theorem 3.4]{LS}.

\subsection{Integrable subbundles of omni $n$-Lie algebroids and Nambu-Jacobi structures}

We explored the integrable subbundles of omni $n$-Lie algebroids when $\mathrm{rank}(E)\geq 2$ and found
Theorem \ref{k+1Leibniz}. Now we study the case when  $\mathrm{rank}(E)=1$ and particularly when $E$ is a trivial line bundle.

A {\bf local $n$-Lie algebra} is a vector bundle $E$ such that $\Gamma(E)$ has an $n$-Lie algebra structure with the property
$\mathrm{supp}[u_1,\cdots,u_n]\subset \mathrm{supp}u_1 \cap \cdots \cap \mathrm{supp} u_n$ for $u_i\in \Gamma(E)$. When $E$ is the trivial line bundle $M\times \mathbbm{R}$, it recovers the definition of Nambu-Jacobi structures on a manifold.

Nambu-Jacobi structures are the generalization of both Jacobi structures and Nambu-Poisson structures; see  \cite{hagiwara,H,MVV,MM}.
\begin{Def}
 A  Nambu-Jacobi structure of order $n$ on a manifold $M$ $(2\leq n\leq \mathrm{dim} M)$ is a linear skew-symmetric $n$-bracket $[\cdot,\cdots,\cdot]:C^\infty(M)\times \cdots \times C^\infty(M)\to C^\infty(M)$, which is a first differential operator, i.e.
\[[g_1g_2,f_1,\cdots, f_{n-1}]=g_1[g_2,f_1,\cdots, f_{n-1}]+g_2 [g_1,f_1,\cdots,f_{n-1}]-g_1g_2[1,f_1,\cdots, f_{n-1}],\]
and satisfies the fundamental identity \eqref{fi}, i.e.
\[[f_1,\cdots,f_{n-1},[g_1,\cdots,g_n]]=\sum_{i=1}^n[g_1,\cdots,g_{i-1},[f_1,\cdots,f_{n-1},g_i],g_{i+1},\cdots,g_n],\]
for $f_i,g_i\in C^\infty(M)$.
\end{Def}
A manifold with such a bracket on its function space  is called a {\bf Nambu-Jacobi manifold of order $n$}.  A {\bf Jacobi manifold} is a Nambu-Jacobi manifold of order $2$ and  a {\bf Nambu-Poisson manifold} is a Nambu-Jacobi manifold whose bracket vanishes if one of the functions is constant.  When $n>2$,
a Nambu-Jacobi structure of order $n$  is equivalently determined by a compatible pair of Nambu-Poisson structures $\Lambda\in \mathfrak{X}^n(M)$ and $\Gamma\in \mathfrak{X}^{n-1}(M)$; see \cite{IL,MVV} for details.

\begin{Thm}\label{rank1case}
Assume  that $E$ is a line bundle. There is a one-one correspondence between local $(n+1)$-Lie algebra structures $[\cdot,\cdots,\cdot]$ on $E$ and integrable subbundles $\mathcal{G}_{\Pi^\sharp}$ defined by \eqref{graph of pi} of the omni $n$-Lie algebroid $\dev E\oplus \wedge^n \jet E$ coming from $\Pi\in \Gamma(\dev^{n+1} E)$ satisfying that
\[[u_1,\cdots,u_{n+1}]=\Pi(\jetd u_1,\cdots,\jetd u_{n+1}),\qquad \forall ~ u_i\in \Gamma(E).\]

Moreover, $\Pi$ induces an $(n+1)$-Lie algebroid structure on $E$ if and only if $\mathbbm{j} \circ \Pi^\sharp \circ \mathbbm{e}=0$, where $\mathbbm{j}:\dev E\to TM$ and $\mathbbm{e}: T^*M\otimes E\to \jet E$ are given in \eqref{Seg:DE} and \eqref{Seq:JetE} respectively.
\end{Thm}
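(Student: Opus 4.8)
The plan is to run the argument of Theorem \ref{k+1Leibniz} through the Weinstein-linearization of Theorem \ref{main3}, while isolating the one place where the hypothesis $\mathrm{rank}E=1$ changes the outcome. For $\Pi\in\Gamma(\dev^{n+1}E)$, Theorem \ref{main1} produces $\hat{\Pi}\in\mathfrak{X}^{n+1}_{lin}(E^*)$, and Theorem \ref{main3} identifies the integrability of $\mathcal{G}_{\Pi^\sharp}\subset\dev E\oplus\wedge^n\jet E$ with that of $\mathcal{G}_{(-1)^{n+1}\hat{\Pi}^\sharp}\subset TE^*\oplus\wedge^nT^*E^*$; by \cite[Theorem 1]{BS} the latter is equivalent to $(-1)^{n+1}\hat{\Pi}$ being a Nambu-Poisson structure on $E^*$. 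Since this structure is linear, it restricts to a skew $(n+1)$-bracket on $\Gamma(E)$ via $[u_1,\cdots,u_{n+1}]:=\Pi(\jetd u_1,\cdots,\jetd u_{n+1})$ satisfying the fundamental identity \eqref{fi}, exactly as in the first half of the proof of Theorem \ref{k+1Leibniz}.

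The essential new step is to show this bracket is local, i.e.\ a first-order multidifferential operator. Using $\jetd(fu)=df\otimes u+f\jetd u$, the inclusion $df\otimes u_{n+1}=\mathbbm{e}(df\otimes u_{n+1})\in\jet E$, and the symbol pairing $\langle\mathfrak{y},\mathfrak{d}\rangle=\mathfrak{y}\circ\mathbbm{j}(\mathfrak{d})$ for $\mathfrak{y}\in\Hom(TM,E)$, I would compute
\[[u_1,\cdots,u_n,fu_{n+1}]=f[u_1,\cdots,u_{n+1}]+\mathbbm{j}\big(\Pi^\sharp(\jetd u_1,\cdots,\jetd u_n)\big)(f)\,u_{n+1}.\]
By skew-symmetry this makes $[\cdot,\cdots,\cdot]$ a first-order operator in each slot, hence support-decreasing, so $\Gamma(E)$ carries a local $(n+1)$-Lie algebra structure. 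For the converse, a local $(n+1)$-Lie algebra on a line bundle is, by a Kirillov-type locality result, a first-order multidifferential operator; such a bracket factors through the jet prolongations and so is of the form $\Pi(\jetd u_1,\cdots,\jetd u_{n+1})$ for a unique skew $\Pi\in\Gamma(\dev^{n+1}E)$, and reversing the equivalences above returns the integrability of $\mathcal{G}_{\Pi^\sharp}$.

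For the moreover part, the Leibniz-type identity just displayed already holds with the candidate anchor $\rho(u_1,\cdots,u_n):=\mathbbm{j}(\Pi^\sharp(\jetd u_1,\cdots,\jetd u_n))$, so $(E,[\cdot,\cdots,\cdot],\rho)$ is an $(n+1)$-Lie algebroid precisely when $\rho$ is $C^\infty(M)$-linear, i.e.\ descends to a bundle map $\wedge^nE\to TM$. Expanding the first argument gives
\[\rho(fu_1,u_2,\cdots,u_n)=f\rho(u_1,\cdots,u_n)+\mathbbm{j}\big(\Pi^\sharp(\mathbbm{e}(df\otimes u_1),\jetd u_2,\cdots,\jetd u_n)\big),\]
so tensoriality in every slot holds iff the error term vanishes for all $f$ and $u_i$, which is exactly the condition $\mathbbm{j}\circ\Pi^\sharp\circ\mathbbm{e}=0$. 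I expect this last equivalence to be the crux: whereas for $\mathrm{rank}E\geq2$ one forces the error term to vanish automatically by choosing two pointwise-independent sections (this is \eqref{anchor linear} in Theorem \ref{k+1Leibniz}), for a line bundle that freedom disappears, so the vanishing becomes a genuine extra hypothesis and must be matched invariantly with $\mathbbm{j}\circ\Pi^\sharp\circ\mathbbm{e}$ acting on $\mathbbm{e}(\Hom(TM,E))\wedge\wedge^{n-1}\jet E$.
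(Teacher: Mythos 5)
Your proposal is correct and takes essentially the same route as the paper: the paper's own proof just observes that the argument of Theorem \ref{k+1Leibniz} (linearization via Theorems \ref{main1} and \ref{main3} plus \cite[Theorem 1]{BS}) goes through verbatim until the rank assumption is invoked, so integrability of $\mathcal{G}_{\Pi^\sharp}$ yields a local $(n+1)$-Lie algebra, which is an $(n+1)$-Lie algebroid exactly when $\mathbbm{j}\circ\Pi^\sharp\circ\mathbbm{e}=0$. The only difference is that you spell out what the paper leaves implicit---the support-decreasing argument for locality and the Kirillov-type first-order result underlying the converse---which is elaboration rather than deviation.
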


\begin{proof}
The  proof of Theorem \ref{k+1Leibniz} still holds until using the assumption $\mathrm{rank} E\geq 2$. So we also get an $(n+1)$-Lie bracket on $\Gamma(E)$ if the graph $\mathcal{G}_{\Pi^\sharp}$ is integrable. This bracket is local.
When $\mathrm{rank} E=1$, the map $\rho:\wedge^n \Gamma(E)\to \mathfrak{X}^1(M)$ is not necessarily a bundle map. So in general, we obtain a  local $(n+1)$-Lie algebra on $E$, which is an $(n+1)$-Lie algebroid if and only if \eqref{anchor linear} holds, namely, $\mathbbm{j} \circ \Pi^\sharp \circ \mathbbm{e}=0$. The converse is easy to get.
\end{proof}

Now we study in detail the case of $E=M\times \mathbbm{R}$, the trivial line bundle over $M$ and build the relation with the Nambu-Jacobi structures. For this, we first make some preparations.
In this case,
$\dev E=TM\times \mathbbm{R}$ and $\jet E=T^*M\times \mathbbm{R}$. Here $\dev E$ is a Lie algebroid with the Lie bracket and the anchor given by
\[[X+f,Y+g]=[X,Y]+Xg-Yf,\qquad \rho(X+f)=X,\qquad \forall~X,Y\in \mathfrak{X}^1(M),f,g\in C^\infty(M).\]
The pairing of $\dev E$ and $\jet  E$ is
\begin{eqnarray}\label{pairing for R}
(X+f,\xi+g)=\iota_X \xi+fg,\qquad \xi\in \Omega^1(M).
\end{eqnarray}
And the action of $\Gamma(\dev E)$ on $\Gamma(E)$ is $(X+g)f=Xf+gf$.

Before writing down the structures of the omni $n$-Lie algebroid $\dev E\oplus \wedge^n \jet E$ when $E=M\times \mathbbm{R}$ for a general $n$, we first make clear of the differential $\mathbbm{d}$ on $\jet_\bullet E$ in this case.

\begin{Lem}\label{ddd}
When $E=M\times \mathbbm{R}$, we have $\jet_n E=\Omega^n(M)\oplus \Omega^{n-1}(M)$ and the differential $\jetd: \jet_n E\to \jet_{n+1} E$ is given by
\begin{eqnarray*}
\mathbbm{d}(f)&=&df+f,\qquad \forall~f\in C^\infty(M)\\ \mathbbm{d}(\alpha_n+\alpha_{n-1})&=&d\alpha_n+\alpha_n-d\alpha_{n-1},\qquad \forall~\alpha_n\in \Omega^n(M), \alpha_{n-1}\in \Omega^{n-1}(M), n\geq 1.
\end{eqnarray*}
\end{Lem}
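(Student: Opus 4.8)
The plan is to reduce everything to the Chevalley--Eilenberg differential \eqref{cecomplex} of the Lie algebroid $\dev E=TM\times\mathbb{R}$ acting on $E=M\times\mathbb{R}$, and to compute it explicitly on the two natural types of arguments. First I would pin down the identification $\jet_n E\cong\Omega^n(M)\oplus\Omega^{n-1}(M)$. For the trivial line bundle one has $\End E\cong M\times\mathbb{R}$, so the condition $\nu(\Phi)=\Phi\circ\nu(\mathrm{Id}_E)$ that cuts out $\jet E\subset\Hom(\dev E,E)$ --- and hence the condition $\mathrm{Im}(\mu_\sharp)\subset\jet E$ defining $\jet_n E$ --- is automatic, being nothing but $C^\infty(M)$-linearity of a bundle map (for $\Phi=f\cdot\mathrm{Id}_E$ both sides equal $f\,\nu(\mathrm{Id}_E)$). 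Thus $\jet_n E=\wedge^n(\dev E)^*\otimes E=\wedge^n(T^*M\oplus\mathbb{R})$, which is exactly the split exact sequence \eqref{higher jet}. Concretely I identify $\mu\in\Gamma(\jet_n E)$ with the pair $(\alpha_n,\alpha_{n-1})$, where $\alpha_n(X_1,\dots,X_n)=\mu(X_1,\dots,X_n)$ for vector fields $X_i$ and $\alpha_{n-1}(X_1,\dots,X_{n-1})=\mu(\mathrm{Id}_E,X_1,\dots,X_{n-1})$ is obtained by contracting the unit section $\mathrm{Id}_E$ into the first slot; fixing this convention for the relative sign is the only delicate point of the identification.

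Since $\jetd$ is the restriction of $d_{\mathrm{CE}}$, I would then evaluate $d_{\mathrm{CE}}\mu$ on the two kinds of tuples. On $(X_1,\dots,X_{n+1})$ with all entries vector fields, the two features that matter are that the $\dev E$-bracket of vector fields is just their Lie bracket (the $\mathbb{R}$-component vanishes) and that $X_i$ acts on $E$ with no zeroth-order term; consequently \eqref{cecomplex} reduces, up to the standard reindexing of the alternating signs, to the de Rham formula, so the top component is $d\alpha_n$. On $(\mathrm{Id}_E,X_1,\dots,X_n)$ I would use three facts: $[X_i,\mathrm{Id}_E]=0$ in $\dev E$, immediate from $[X+f,Y+g]=[X,Y]+Xg-Yf$ since $X_i(1)=0$, so all mixed bracket terms drop; $\mathrm{Id}_E$ acts on $E$ as the identity, producing a single term $\pm\alpha_n$; and the remaining vector-field terms reassemble into $d\alpha_{n-1}$. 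Book-keeping the signs with the above convention yields the bottom component $\alpha_n-d\alpha_{n-1}$, giving $\jetd(\alpha_n+\alpha_{n-1})=d\alpha_n+\alpha_n-d\alpha_{n-1}$. The formula $\jetd f=df+f$ is then the case $n=0$, with $\alpha_0=f$ and $\alpha_{-1}=0$.

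The main obstacle is purely the sign bookkeeping: the relative sign between the two summands of $\jet_n E$ depends on whether the unit section is contracted into the first or last slot, and it must be chosen as above so that the identity-action term and the reassembled $d\alpha_{n-1}$ combine into $\alpha_n-d\alpha_{n-1}$ rather than into $(-1)^n\alpha_n+d\alpha_{n-1}$. As a consistency check I would verify $\jetd^2=0$ directly from the resulting formula: applying $\jetd$ to $d\alpha_n+(\alpha_n-d\alpha_{n-1})$ collapses to $d(d\alpha_n)+\bigl(d\alpha_n-d(\alpha_n-d\alpha_{n-1})\bigr)=0$, which both confirms the signs and re-derives that $(\jet_\bullet E,\jetd)$ is a complex, consistent with its being a subcomplex of the Chevalley--Eilenberg complex.
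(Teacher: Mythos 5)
Your proposal is correct and follows essentially the same route as the paper: the paper's own proof is exactly an evaluation of the Chevalley--Eilenberg differential \eqref{cecomplex} of $\dev E=TM\times \mathbbm{R}$ under the same identification and the same first-slot convention (cf.\ the paper's embedding $\alpha(f,Y)=f\iota_Y\alpha$), carried out explicitly for degrees $0$ and $1$ and then declared ``similar'' in higher degrees. Your version is in fact slightly more complete, since you justify the identification $\jet_n E\cong \Omega^n(M)\oplus\Omega^{n-1}(M)$ (automatic for a line bundle, where the constraint cutting out $\jet_n E$ inside $\Hom(\wedge^n\dev E,E)$ is vacuous) and carry out the sign bookkeeping uniformly in all degrees, which the paper leaves to the reader.
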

\begin{proof}
By the pairing of $\dev E$ and $\jet E$ and the action of $\dev E$ on $\Gamma(E)$ for $E=M\times \mathbbm{R}$, we have that $\jetd:\Gamma(E)\to \Gamma(\jet E)$ is
given by
\[\mathbbm{d}f(X+g)=(X+g)f=Xf+gf,\]
which implies that $\mathbbm{d}f=df+f$.  To see the action of  $\mathbbm{d}$ on $\Gamma(\jet E)$, for $\alpha\in \Omega^1(M)$, we have
\begin{eqnarray*}
\mathbbm{d}\alpha(X+f,Y+h)&=&(X+f)\iota_Y \alpha-(Y+h)\iota_X \alpha-\iota_{[X,Y]} \alpha\\ &=&d\alpha(X,Y)+f\iota_Y \alpha-h\iota_X \alpha.
\end{eqnarray*}
Hence we obtain $\mathbbm{d}\alpha=d\alpha+\alpha$. Here $\alpha\in \Omega^1(M)$ is viewed as an element in $\jet_2 E$ by $\alpha(f,Y)=f\iota_Y \alpha$ and $\alpha(Y,f)=-f\iota_Y \alpha$.

Then, for $g\in C^\infty(M)$ as a section of $\jet E$, we have
\begin{eqnarray*}
\mathbbm{d}g(X+f,Y+h)&=&(X+f)(gh)-(Y+h)(gf)-g(Xh-Yf)\\ &=& h(Xg)-f(Yg)\\ &=&dg(h,X)-dg(f,Y),
\end{eqnarray*}
thus we get $\mathbbm{d}g=-dg\in \Omega^1(M)$, which is seen as a section of $\jet_2 E$. The higher degrees are similar to get. 
\end{proof}

\begin{Lem}\label{lie der}
The Lie derivative of $\dev E=TM\times \mathbbm{R}$ on $\jet E=T^*M\times \mathbbm{R}$  and the contraction of $\jet E$ by $\dev E$ for $E=M\times \mathbbm{R}$ are given by
\begin{eqnarray}
\label{omni1}L_{X+f} (\xi+g) &=&L_X \xi+f\xi+gdf+Xg+fg;\\
\label{omni2}\iota_{X+f} \mathbbm{d} (\xi+g)&=&\iota_Xd\xi+f\xi-fdg-\iota_X \xi+Xg,
\end{eqnarray}
where $X\in \mathfrak{X}^1(M), \xi\in \Omega^1(M), f,g\in C^\infty(M)$.
\end{Lem}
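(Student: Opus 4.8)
The plan is to read off both identities from the intrinsic description of $L_\dd$, $\iota_\dd$ and $\mathbbm{d}$ on the jet complex, avoiding local coordinates. Write $\dd=X+f$ and $\ttt=Y+h$ in $\Gamma(\dev E)=\mathfrak{X}^1(M)\oplus C^\infty(M)$ and $\nu=\xi+g$ in $\Gamma(\jet E)=\Omega^1(M)\oplus C^\infty(M)$. I would use two facts repeatedly. The first is the representation formula
\[
(L_\dd\nu)(\ttt)=\dd(\nu(\ttt))-\nu([\dd,\ttt]),
\]
which follows from the defining formula $L_\dd(\omega\otimes u)=(L_\dd\omega)\otimes u+\omega\otimes\dd(u)$ by expanding $\dd(\nu(\ttt))$ with the Leibniz rule for covariant differential operators (for the trivial line bundle every section of $\jet E$ is a single term $\omega\otimes 1$, so no summation subtleties arise). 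The second is the Cartan-type identity $L_\dd=\iota_\dd\mathbbm{d}+\mathbbm{d}\iota_\dd$ already used in the proof of Theorem~\ref{linear1}. I also recall the $E=M\times\mathbbm{R}$ data: the bracket $[X+f,Y+h]=[X,Y]+Xh-Yf$, the action $(X+f)\phi=X\phi+f\phi$ on $C^\infty(M)=\Gamma(E)$, the pairing \eqref{pairing for R}, and from Lemma~\ref{ddd} the value $\mathbbm{d}\phi=d\phi+\phi$ for $\phi\in C^\infty(M)$.

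First I would prove \eqref{omni1}. Since $L_\dd\nu\in\Gamma(\jet E)$, it is determined by its pairing against an arbitrary $\ttt=Y+h$. Using $\nu(\ttt)=\iota_Y\xi+gh$ and $[\dd,\ttt]=[X,Y]+Xh-Yf$, the representation formula gives
\[
(L_{X+f}(\xi+g))(Y+h)=(X+f)(\iota_Y\xi+gh)-\iota_{[X,Y]}\xi-g(Xh-Yf).
\]
Expanding the first term by the Leibniz rule, the two $g(Xh)$ contributions cancel; the part proportional to $h$ collects to $(Xg+fg)h$, while the part linear in $Y$ reorganizes, through the identity $X(\iota_Y\xi)-\iota_{[X,Y]}\xi=(L_X\xi)(Y)$ together with $gYf=\iota_Y(g\,df)$, into $\iota_Y(L_X\xi+f\xi+g\,df)$. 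Comparing with $(\eta+b)(Y+h)=\iota_Y\eta+bh$ identifies the $\Omega^1(M)$-component $\eta=L_X\xi+f\xi+g\,df$ and the $C^\infty(M)$-component $b=Xg+fg$, which is exactly \eqref{omni1}.

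For \eqref{omni2} I would apply the Cartan-type identity in the form $\iota_\dd\mathbbm{d}\nu=L_\dd\nu-\mathbbm{d}\iota_\dd\nu$. Here $\iota_{X+f}(\xi+g)=(X+f,\xi+g)=\iota_X\xi+fg\in C^\infty(M)$, so Lemma~\ref{ddd} yields
\[
\mathbbm{d}\iota_{X+f}(\xi+g)=d(\iota_X\xi+fg)+(\iota_X\xi+fg)=(d\iota_X\xi+g\,df+f\,dg)+(\iota_X\xi+fg).
\]
Subtracting this componentwise from \eqref{omni1}, the $\Omega^1(M)$-part becomes $L_X\xi-d\iota_X\xi+f\xi-f\,dg=\iota_Xd\xi+f\xi-f\,dg$ by the ordinary Cartan formula $L_X=\iota_Xd+d\iota_X$ on $M$, and the $C^\infty(M)$-part becomes $Xg-\iota_X\xi$; together these give \eqref{omni2}.

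The only genuinely delicate point is the bookkeeping of the isomorphism $\Gamma(\jet E)\cong\Omega^1(M)\oplus C^\infty(M)$ and its compatibility with the $E$-valued pairing, that is, the repeated passage between a section $\nu$ of $\jet E$ and the functional $\ttt\mapsto(\nu,\ttt)$ used to separate the $\Omega^1(M)$- and $C^\infty(M)$-components. Once that identification and the representation formula are in place, everything reduces to the ordinary Cartan calculus on $M$ and the formula $\mathbbm{d}\phi=d\phi+\phi$. As a cross-check one could instead run the whole computation in a local coordinate $(x^i,u)$ on $E$, in the spirit of the proof of Lemma~\ref{ddd}, which gives an equivalent but more computational route.
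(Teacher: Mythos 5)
Your proof is correct and follows essentially the same route as the paper's: both derive \eqref{omni1} by pairing $L_{X+f}(\xi+g)$ against an arbitrary $Y+h$ via the representation formula $(L_\dd\nu)(\ttt)=\dd(\nu(\ttt))-\nu([\dd,\ttt])$ and the pairing \eqref{pairing for R}, and both obtain \eqref{omni2} from the Cartan-type identity $\iota_\dd\mathbbm{d}\nu=L_\dd\nu-\mathbbm{d}(\dd,\nu)$ together with $\mathbbm{d}\phi=d\phi+\phi$ from Lemma \ref{ddd}. The only addition on your side is the explicit derivation of the representation formula from $L_\dd(\omega\otimes u)=(L_\dd\omega)\otimes u+\omega\otimes\dd(u)$, which the paper takes for granted as "the general formula for the Lie derivative of an omni-Lie algebroid."
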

\begin{proof}
By the general formula for the Lie derivative and the contraction of an omni-Lie algebroid, using \eqref{pairing for R}, we have
\begin{eqnarray*}
(L_{X+f} (\xi+g), Y+h)&=& (X+f) (\iota_Y \xi+gh)-(\xi+g,[X,Y]+Xh-Yf)\\ &=&X (\iota_Y \xi+gh)+f\iota_Y \xi+fgh-\xi([X,Y])-g(Xh-Yf)\\ &=&(L_X \xi+f\xi+gdf+Xg+fg,Y+h).
\end{eqnarray*}
So we get that
\[L_{X+f} (\xi+g)=L_X \xi+f\xi+gdf+Xg+fg.\]
By the fact that $\mathbbm{d}(g)=dg+g$
for any $g\in C^\infty(M)=\Gamma(E)$,
we obtain
\begin{eqnarray*}
\iota_{X+f} \mathbbm{d}(\xi+g)&=&L_{X+f} (\xi+g)-\mathbbm{d}(X+f,\xi+g)\\ &=&
L_X \xi+f\xi+gdf+Xg+fg-d(\iota_X \xi+fg)-\iota_X \xi-fg\\ &=&\iota_X d\xi+f\xi-fdg-\iota_X \xi+Xg.
\end{eqnarray*}
This finishes the proof.
\end{proof}

As a consequence, by using the Leibniz rule of the Lie derivative, the Lie derivative of $\dev E$ on $\wedge^n \jet E$ is also clear.

\begin{Pro}
For $E=M\times \mathbbm{R}$, the omni $n$-Lie algebroid $(\dev E\oplus \wedge^n \jet E,(\cdot,\cdot),\{\cdot,\cdot\},\rho)$ is as follows:
\begin{itemize}
\item{} $\dev E\oplus \wedge^n \jet E=TM\times \mathbbm{R}\oplus \big(\wedge^n T^*M\oplus \wedge^{n-1} T^*M\big)$,
\item{} the $(\wedge^{n-1} T^*M \oplus \wedge^{n-2} T^*M)$-valued pairing is given by
\begin{eqnarray}\label{special pairing}
( X+f, \alpha_n+\alpha_{n-1})=\iota_X\alpha_n+f\alpha_{n-1}-\iota_X \alpha_{n-1},\
\end{eqnarray}
\item{ } the bracket is given by
\begin{eqnarray}
\label{br1}\{X+f,Y+g\}&=&[X,Y]+Xg-Yf;\\
\label{br2}\{X+f,\alpha_n+\alpha_{n-1}\}&=&L_{X+f} (\alpha_n+\alpha_{n-1})\nonumber\\ &=&L_X\alpha_n+f\alpha_n+df\wedge \alpha_{n-1}+L_X \alpha_{n-1}+f\alpha_{n-1},\\
\label{br3}\{\alpha_n+\alpha_{n-1}, X+f\}&=&-\iota_{X+f}\jetd (\alpha_n+\alpha_{n-1})\nonumber \\ &=&-\iota_X d\alpha_n-f\alpha_n+fd\alpha_{n-1}+\iota_X \alpha_n-\iota_X d \alpha_{n-1},\end{eqnarray}
\end{itemize}
where $X,Y\in \mathfrak{X}^1(M),f,g\in C^\infty(M),\alpha_n\in \Omega^n(M),\alpha_{n-1}\in \Omega^{n-1}(M)$.\end{Pro}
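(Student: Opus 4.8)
The statement is a direct unraveling of the general definitions \eqref{pairing} and \eqref{Dorf} of the omni $n$-Lie algebroid in the special case $E=M\times\mathbbm{R}$, so the plan is to specialize each structure map in turn, feeding in the two preceding lemmas. The first step is to fix the underlying bundle. Since $\dev E=TM\times\mathbbm{R}$ and $\jet E=T^*M\times\mathbbm{R}$, the canonical flat splitting of the jet sequence \eqref{Seq:JetE} identifies $\jet E\cong T^*M\oplus\mathbbm{R}$, with the $\mathbbm{R}$-summand generated by $\jetd 1$ (note $\jetd 1=1$ by Lemma \ref{ddd}). Taking exterior powers and using $\wedge^n(T^*M\oplus\mathbbm{R})\cong\wedge^nT^*M\oplus\wedge^{n-1}T^*M$ then yields $\wedge^n\jet E\cong\wedge^nT^*M\oplus\wedge^{n-1}T^*M$, where $\alpha_n+\alpha_{n-1}$ is shorthand for $\alpha_n+\alpha_{n-1}\wedge\jetd 1$. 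This bookkeeping is what makes the remaining formulas readable.

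For the pairing \eqref{special pairing}, I would expand the $E\otimes\wedge^{n-1}\jet E$-valued pairing \eqref{pairing}, i.e.\ compute the contraction $\iota_{X+f}(\alpha_n+\alpha_{n-1}\wedge\jetd 1)$ as the interior product on the exterior algebra of $\jet E$ induced by the base pairing \eqref{pairing for R}. Contracting the pure part $\alpha_n$ only sees $X$ and produces $\iota_X\alpha_n$; contracting $\alpha_{n-1}\wedge\jetd 1$ produces both $(\iota_X\alpha_{n-1})\wedge\jetd 1$ and an $f$-term coming from $\langle X+f,\jetd 1\rangle=f$ in \eqref{pairing for R}. Assembling these with the correct Koszul signs gives exactly $\iota_X\alpha_n+f\alpha_{n-1}-\iota_X\alpha_{n-1}$, which lands in $\wedge^{n-1}T^*M\oplus\wedge^{n-2}T^*M\cong\wedge^{n-1}\jet E$.

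The three brackets come from specializing \eqref{Dorf}. Formula \eqref{br1} is just the Lie algebroid bracket on $\dev E=TM\times\mathbbm{R}$ recalled before the statement. For \eqref{br2} I would set $\ttt=0$ in \eqref{Dorf}, so the bracket reduces to $L_{X+f}(\alpha_n+\alpha_{n-1})$, then extend $L_{X+f}$ as a derivation over the wedge and apply the single-factor formula \eqref{omni1} of Lemma \ref{lie der} in each slot. The $L_X$-terms reassemble into $L_X\alpha_n+L_X\alpha_{n-1}$, the scalar $f$-terms give the $f$-dependent contributions, and crucially the cross-term $g\,df$ in \eqref{omni1}, evaluated on the generator via $L_{X+f}(\jetd 1)=df+f=\jetd f$, is precisely what produces the $df\wedge\alpha_{n-1}$ summand. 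For \eqref{br3} I would instead set $\dd=0$ and $\beta=0$, leaving $-\iota_{X+f}\jetd(\alpha_n+\alpha_{n-1})$; here one first computes $\jetd$ on $\wedge^n\jet E$ slot by slot using the $\jet_2 E$-valued differential together with Lemma \ref{ddd} (so a one-form factor $\xi$ contributes $\mathbbm{d}\xi=d\xi+\xi$ and $\jetd 1$ is closed), and then contracts using \eqref{omni2}, collecting $-\iota_Xd\alpha_n-f\alpha_n+fd\alpha_{n-1}+\iota_X\alpha_n-\iota_Xd\alpha_{n-1}$.

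The main obstacle I anticipate is purely combinatorial sign bookkeeping rather than any conceptual difficulty. The contraction \eqref{key} defining $\iota_\ttt$ on the mixed bundle $\jet_2 E\otimes\wedge^{n-1}\jet E$ is sign-sensitive, and the fact that \eqref{omni1} does \emph{not} preserve the splitting $\jet E=T^*M\oplus\mathbbm{R}$ — the term $g\,df$ leaks a one-form out of the $\mathbbm{R}$-summand — means the Lie derivative and the differential mix the two summands of $\wedge^n\jet E$ in each term, so the numerical coefficients and Koszul signs must be tracked with care. Once the exterior-algebra contraction and the derivation rule for $L_{X+f}$ are set up consistently with the signs in \eqref{key}, all three bracket formulas and the pairing fall out by direct expansion.
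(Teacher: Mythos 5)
Your overall strategy coincides with the paper's: identify $\wedge^n\jet E\cong\wedge^nT^*M\oplus\wedge^{n-1}T^*M$, then specialize the pairing \eqref{pairing} and the bracket \eqref{Dorf} using Lemma \ref{ddd} and Lemma \ref{lie der}, with \eqref{br1} immediate and \eqref{br2} obtained from \eqref{omni1} plus the derivation rule. However, there is one concrete error: your embedding convention $\alpha_{n-1}\mapsto\alpha_{n-1}\wedge\jetd 1$ is the opposite of the one the Proposition's formulas are written in. The paper fixes its convention in the proof itself, via $(X+f,\alpha_1)=(X+f,1\wedge\alpha_1)=-\iota_X\alpha_1+f\alpha_1$, i.e.\ the constant section $\jetd 1$ occupies the \emph{first} slot. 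This is not cosmetic: with your convention, contracting $\xi_1\wedge\cdots\wedge\xi_{n-1}\wedge\jetd 1$ with $X+f$ gives $(\iota_X\alpha_{n-1})\wedge\jetd 1+(-1)^{n-1}f\alpha_{n-1}$, so the pairing comes out as $\iota_X\alpha_n+(-1)^{n-1}f\alpha_{n-1}+\iota_X\alpha_{n-1}$, whose last term has the wrong sign for every $n$ and whose middle term has the wrong sign for every even $n$; likewise the $df\wedge\alpha_{n-1}$ summand in \eqref{br2} acquires a factor $(-1)^{n-1}$. Hence your claim that the stated formulas ``fall out exactly'' with the correct Koszul signs is false as written; the fix is simply to embed $\alpha_{n-1}$ as $\jetd 1\wedge\alpha_{n-1}$ throughout (both for the input in $\wedge^n\jet E$ and for the $\wedge^{n-1}\jet E$-valued output).

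The other difference is your route to \eqref{br3}. You propose to compute $\jetd(\alpha_n+\alpha_{n-1})\in\Gamma(\jet_2E\otimes\wedge^{n-1}\jet E)$ slot by slot and then contract via \eqref{key} and \eqref{omni2}. The paper never touches \eqref{key} here: it writes $\{\alpha_n+\alpha_{n-1},X+f\}=-\iota_{X+f}\jetd(\alpha_n+\alpha_{n-1})=-L_{X+f}(\alpha_n+\alpha_{n-1})+\jetd(\alpha_n+\alpha_{n-1},X+f)$, so that \eqref{br3} follows from the already-established \eqref{br2}, the pairing \eqref{special pairing}, and the formula for $\jetd$ in Lemma \ref{ddd}, plus the classical identity $L_X=d\iota_X+\iota_Xd$ on forms. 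Your direct route is viable in principle, but be aware that \eqref{omni2} is only the single-factor formula; all of the sign-sensitive bookkeeping you flag as the main obstacle sits in the second (reshuffling) sum of \eqref{key}, and the Cartan-type rearrangement used by the paper is precisely what bypasses it. If you keep the direct computation, you must carry out that mixed-term bookkeeping explicitly rather than cite \eqref{omni2} alone.
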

\begin{proof}
By \eqref{pairing for R}, we have
 \[(X+f,\alpha_2)=\iota_X \alpha_2,\qquad (X+f,\alpha_1)=(X+f,1\wedge \alpha_1)=-\iota_X \alpha_1+f\alpha_1,\]
for $\alpha_2\in \Omega^2(M)$ and $\alpha_1\in \Omega^1(M)$, treating as sections of $\wedge^2 \jet E$.
Based on this idea, we get \eqref{special pairing}.
\eqref{br1} is clear.
By \eqref{omni1} and  the Leibniz rule, we get
\[\{X+f,\alpha_n+\alpha_{n-1}\}=L_{X+f} (\alpha_n+\alpha_{n-1})=L_X \alpha_n+L_X \alpha_{n-1}+f\alpha_n+df\wedge\alpha_{n-1}+f\alpha_{n-1}.\]
This is \eqref{br2}.
Similarly we have
\begin{eqnarray*}
\{\alpha_n+\alpha_{n-1},X+f\} &=&-L_{X+f} (\alpha_n+\alpha_{n-1})+\mathbbm{d}( \alpha_n+\alpha_{n-1},X+f)\\ &=&
-L_{X+f} (\alpha_n+\alpha_{n-1})+\mathbbm{d}(\iota_X \alpha_n+f\alpha_{n-1}-\iota_X \alpha_{n-1})
\\ &=&
-L_X\alpha_n-f\alpha_n-df\wedge \alpha_{n-1}-L_X \alpha_{n-1}-f\alpha_{n-1}\\ &&+d(\iota_X \alpha_n+f\alpha_{n-1})+\iota_X \alpha_n+f\alpha_{n-1}+d\iota_X \alpha_{n-1}\\ &=&-\iota_X d\alpha_n-f\alpha_n+fd\alpha_{n-1}+\iota_X \alpha_n-\iota_X d \alpha_{n-1},
\end{eqnarray*}
where the third identity follows from Lemma \ref{ddd}. Hence we get \eqref{br3}.
\end{proof}


\begin{Rm}
When $n=1$, we obtain the structure of the omni-Lie algebroid for $E=M\times \mathbbm{R}$:
\[\dev E\oplus \jet E=(TM\times \mathbbm{R})\oplus (T^*M\times \mathbbm{R}),\]
where the Dorfman bracket is
\begin{eqnarray*}
\{X+f,Y+g\}&=&[X,Y]+Xg-Yf;\\
\{X+f,\xi+g\}&=&L_X \xi+f\xi+gdf+Xg+fg;\\
\{\xi+g,X+f\}&=&-\iota_Xd\xi-f\xi+fdg+\iota_X \xi-Xg.
\end{eqnarray*}
for all $X,Y\in \mathfrak{X}^1(M), f,g\in C^\infty(M)$ and $\xi\in \Omega^1(M)$.

The skew-symmetrization of this bracket  is
\begin{eqnarray*}
{[X+f,Y+g]}&=&[X,Y]+Xg-Yf,\\
{[X+f,\xi+g]}&=&L_X \xi-\frac{1}{2}d\iota_{X} \xi+f\xi+\frac{1}{2}(gdf-fdg)+Xg-\frac{1}{2}\iota_X \xi+\frac{1}{2} fg.
\end{eqnarray*}
This is exactly  the bracket given in \cite{Wade} by Wade in the study of conformal Dirac structures.

Also, the Lie derivative and contraction in Lemma \ref{lie der} coincide with that in \cite{IW}, where they defined it directly.
\end{Rm}

Let $E=M\times \mathbbm{R}$ and $\Pi\in \Gamma(\dev^{n+1} E)$. The bundle map \[\Pi: \wedge^{n+1}\jet E=\wedge^{n+1} T^*M\oplus \wedge^{n} T^*M\to E=M\times \mathbbm{R}\]
 has two components
\[\Pi=\Lambda+\Gamma\in \mathfrak{X}^{n+1}(M)\oplus \mathfrak{X}^n(M).\]

As a consequence of  Theorem \ref{rank1case},  we have
\begin{Pro}
With the above notations, the graph of $\Pi^\sharp=\Lambda^\sharp+\Gamma^\sharp$ defines an integrable subbundle of the omni $n$-Lie algebroid $TM\times \mathbbm{R}\oplus (\wedge^{n}T^*M\times \mathbbm{R})$ if and only if it defines a Nambu-Jacobi structure of order $n+1$ on   $M$ whose Lie bracket is
\begin{eqnarray*}
[f_1,\cdots,f_{n+1}]&=&\Lambda(df_1,\cdots, d f_{n+1})+\sum_{i=1}^{n+1} (-1)^{i-1} f_i \Gamma(df_1,\cdots,\hat{df_i},\cdots df_{n+1}),
\quad f_i\in C^\infty(M).
\end{eqnarray*}
\end{Pro}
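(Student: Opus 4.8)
The plan is to deduce this Proposition directly from Theorem \ref{rank1case} together with the very definition of Nambu-Jacobi structures, reserving the genuine computation for the explicit form of the bracket. Since $E=M\times\mathbbm{R}$, the section space $\Gamma(E)$ is just $C^\infty(M)$, and a \emph{local} $(n+1)$-Lie algebra structure on this trivial line bundle is, by definition, precisely a Nambu-Jacobi structure of order $n+1$ on $M$. Theorem \ref{rank1case} already provides a one-to-one correspondence between integrable subbundles $\mathcal{G}_{\Pi^\sharp}$ of the omni $n$-Lie algebroid coming from $\Pi\in\Gamma(\dev^{n+1}E)$ and local $(n+1)$-Lie algebra structures on $E$, with bracket $[u_1,\cdots,u_{n+1}]=\Pi(\jetd u_1,\cdots,\jetd u_{n+1})$. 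Combining these two facts yields the claimed equivalence at once; what remains is to rewrite $\Pi(\jetd f_1,\cdots,\jetd f_{n+1})$ in terms of the two components $\Lambda$ and $\Gamma$ of $\Pi$.

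For the computation I would first recall from Lemma \ref{ddd} that, under the identification $\jet E=T^*M\times\mathbbm{R}$, one has $\jetd f=df+f$ for every $f\in C^\infty(M)=\Gamma(E)$. Accordingly I would use the splitting $\wedge^{n+1}\jet E=\wedge^{n+1}T^*M\oplus\wedge^n T^*M$, where the second summand records the contribution of the trivial $\mathbbm{R}$-factor, and on which $\Pi=\Lambda+\Gamma$ acts by letting $\Lambda\in\mathfrak{X}^{n+1}(M)$ contract the $\wedge^{n+1}T^*M$ part and $\Gamma\in\mathfrak{X}^n(M)$ contract the $\wedge^n T^*M$ part. Expanding the wedge $\jetd f_1\wedge\cdots\wedge\jetd f_{n+1}=\bigwedge_{i=1}^{n+1}(df_i+f_i)$ and using that the $\mathbbm{R}$-summand is one-dimensional, so that any term containing two or more constant factors vanishes, leaves exactly the top form $df_1\wedge\cdots\wedge df_{n+1}$ together with the $n+1$ terms carrying a single constant factor $f_i$.

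Applying $\Pi$ then produces $\Lambda(df_1,\cdots,df_{n+1})$ from the top form and an alternating sum of the $\Gamma$-contractions $f_i\,\Gamma(df_1,\cdots,\widehat{df_i},\cdots,df_{n+1})$ from the remaining terms, which is precisely the asserted formula for $[f_1,\cdots,f_{n+1}]$. The one point requiring care — and the step I expect to be the main obstacle — is the sign bookkeeping: extracting the factor $f_i$ requires commuting the constant generator of the $\mathbbm{R}$-factor past the $(n+1-i)$ one-forms to its right, and one must verify that these signs combine (together with the convention identifying $\Gamma$ with the $\wedge^n T^*M$-component of $\Pi$) into the stated $(-1)^{i-1}$. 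Finally, that the resulting bracket genuinely defines a Nambu-Jacobi structure — that is, that it is skew-symmetric, a first-order differential operator in each argument, and satisfies the fundamental identity \eqref{fi} — is automatic: by Theorem \ref{rank1case} it is a local $(n+1)$-Lie algebra bracket, and the membership $\Pi\in\Gamma(\dev^{n+1}E)$ guarantees the first-order (covariant differential operator) property in each slot.
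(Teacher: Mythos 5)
Your proposal is correct and follows essentially the same route as the paper's own proof: both reduce the equivalence to Theorem \ref{rank1case} together with the observation that local $(n+1)$-Lie algebra structures on $M\times\mathbbm{R}$ are exactly Nambu-Jacobi structures of order $n+1$, and then derive the bracket formula by writing $\jetd f_i=df_i+f_i$ (Lemma \ref{ddd}) and expanding $\Pi=\Lambda+1\wedge\Gamma$ applied to $\jetd f_1\wedge\cdots\wedge\jetd f_{n+1}$, with cross terms containing two or more constant factors vanishing. The sign point you flag is resolved exactly as you anticipate: the paper's convention places the $\mathbbm{R}$-factor first (i.e.\ $1\wedge\Gamma$), so moving $f_i$ to the front produces the stated $(-1)^{i-1}$.
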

\begin{proof}
By definition,
Nambu-Jacobi structures on $M$ of order $n+1$ are local $(n+1)$-Lie algebra structures on the trivial line bundle $M\times \mathbbm{R}$. So by Theorem \ref{rank1case} and Lemma \ref{ddd}, we obtain a Nambu-Jacobi structure on $M$ with the bracket
\begin{eqnarray*}
[f_1,\cdots,f_{n+1}]&=&(\Lambda+\Gamma)(\jetd f_1,\cdots,\jetd f_n)\\ &=&
(\Lambda+1\wedge\Gamma)(df_1+f_1,\cdots,d f_n+f_n)\\ &=&\Lambda(df_1,\cdots,df_{n+1})+\sum_{i=1}^{n+1} (-1)^{i-1} f_i \Gamma(df_1,\cdots,\hat{df_i},\cdots df_{n+1}),
\end{eqnarray*}
which finishes the proof.
\end{proof}

This Nambu-Jacobi structure also appeared in \cite{H,MM} with a different sign convention. 

 Department of Applied Mathematics, China Agricultural University, Beijing, 100083, China

 Email:hllang@cau.edu.cn

 Department of Mathematics, Jilin University, Changchun, 130012, China

 Email:shengyh@jlu.edu.cn

\end{document}